\documentclass[10pt]{article}
\usepackage{latexsym}
\usepackage{amssymb}
\usepackage{amsthm}
\usepackage{amsmath}
\usepackage{graphicx}
\usepackage[latin1]{inputenc}

\newtheorem{theorem}{Theorem}[section]
\newtheorem{prop}[theorem]{Proposition}
\newtheorem{coro}[theorem]{Corollary}
\newtheorem{lemma}[theorem]{Lemma}
\newtheorem{remark}[theorem]{Remark}

\topmargin0pt
\headheight10pt
\headsep20pt
\oddsidemargin0pt
\textheight20cm
\textwidth16cm



\newcommand{\R}{\mathbb{R}}             
\newcommand{\N}{\mathbb{N}}             
\newcommand{\Z}{\mathbb{Z}}             %
\newcommand{\C}{\mathbb{C}}             
\renewcommand{\H}{\mathcal{H}}          
\newcommand{\B}{\mathcal{B}}            
\newcommand{\M}{\mathcal{M}}            
\renewcommand{\S}{\mathbb{S}}             




\newcommand{\half}{\frac{1}{2}}


\newcommand{\Ga}{\Gamma^1}         
\newcommand{\Gb}{\Gamma^2}         
\newcommand{\Gc}{\Gamma^3}         



\renewcommand{\d}{\partial}            




\newcommand{\fj}{f_j (X, \lambda,z)}

\newcommand{\gj}{g_j (X, \lambda,z)}


\newcommand{\alun}{a_{L1}(\lambda,z)}

\newcommand{\alt}{a_{L3}(\lambda,z)}


\newcommand{\DS}{\mathbb{D}_{\S^2}}


\newcommand{\Section}[1]{\section{#1} \setcounter{equation}{0}}


\author{Thierry Daud\'e \footnote{D\'epartement de math\'ematiques, Universit\'e de Cergy-Pontoise, UMR CNRS 8088, 2 Av. Adolphe Chauvin, 95302 Cergy-Pontoise cedex. Email: thierry.daude@u-cergy.fr. Research  supported by the French National Research Project AARG, No. ANR-12-BS01-012-01}, Damien Gobin and Fran\c{c}ois Nicoleau \footnote{D\'epartement de Math\'ematiques, Universit\'e de Nantes, 2, rue de la Houssini\`ere, BP
    92208, 44322 Nantes Cedex 03. Emails: damien.gobin@univ-nantes.fr, francois.nicoleau@univ-nantes.fr. Research  supported by the French National Research Project NOSEVOL, No. ANR- 2011 BS0101901}}
\title{Local inverse scattering at fixed energy in spherically symmetric asymptotically hyperbolic manifolds}
\date{}

\begin{document}

\maketitle


\begin{abstract}

In this paper, we adapt the well-known \emph{local} uniqueness results of Borg-Marchenko type in the inverse problems for one dimensional Schr\"odinger equation to prove \emph{local} uniqueness results in the setting of inverse \emph{metric} problems. More specifically, we consider a class of spherically symmetric manifolds having two asymptotically hyperbolic ends and study the scattering properties of massless Dirac waves evolving on such manifolds. Using the spherical symmetry of the model, the stationary scattering is encoded by a countable family of one-dimensional Dirac equations. This allows us to define the corresponding transmission coefficients $T(\lambda,n)$ and reflection coefficients $L(\lambda,n)$ and $R(\lambda,n)$ of a Dirac wave having a fixed energy $\lambda$ and angular momentum $n$. For instance, the reflection coefficients $L(\lambda,n)$ correspond to the scattering experiment in which a wave is sent from the \emph{left} end in the remote past and measured in the same left end in the future. The main result of this paper is an inverse uniqueness result local in nature. Namely, we prove that for a fixed  $\lambda \not=0$, the knowledge of the reflection coefficients $L(\lambda,n)$ (resp. $R(\lambda,n)$) - up to a precise error term of the form $O(e^{-2nB})$ with $B>0$ - determines the manifold in a neighbourhood of the left (resp. right) end, the size of this neighbourhood depending on the magnitude $B$ of the error term. The crucial ingredients in the proof of this result are the Complex Angular Momentum method as well as some useful uniqueness results for Laplace transforms.

\vspace{0.5cm}
\noindent \textit{Keywords}. Inverse Scattering, Black Holes, Dirac Equation. \\
\textit{2010 Mathematics Subject Classification}. Primaries 81U40, 35P25; Secondary 58J50.
\end{abstract}


\Section{Introduction and statement of the results}

The aim of this short paper is to extend the \emph{local} inverse uniqueness results of Borg-Marchenko type for one dimensional Schr\"odinger equation obtained first in \cite{Si}, and improved in \cite{Be, GS, Te}, to the setting of inverse \emph{metric} problems, that is inverse problems on three or four dimensional curved manifolds whose unknown - the object we wish to determine by observing waves at infinity - is the (Riemanniann or Lorentzian) metric itself. We shall consider for the moment a very specific and simple class of $3$D-Riemanniann manifolds that we name Spherically Symmetric Asymptotically Hyperbolic Manifolds, in short SSAHM. Precisely, these are described by the set
$$
  \Sigma = \R_x \times \S^2_{\theta,\varphi},
$$
equipped with the Riemanniann metric
$$
  \sigma = dx^2 + a^{-2}(x) \,d\omega^2,
$$
where $d\omega^2 = \left( d\theta^2 + \sin^2\theta \,d\varphi^2 \right)$ is the euclidean metric on the $2$D-sphere $\S^2$. The assumptions on the function $a(x)$ - that determines completely the metric - are:
\begin{equation} \label{RegulA}
  a \in C^2(\R), \quad a > 0,
\end{equation}
and
\begin{equation} \label{AsympA}
  \exists \, a_\pm > 0, \ \kappa_+ < 0, \ \kappa_- > 0, \quad \quad
  \begin{array}{ccc}
    a(x) & = & a_\pm e^{\kappa_\pm x} + O(e^{3\kappa_\pm x}), \quad x \to \pm \infty, \\
    a'(x) & = & a_\pm \kappa_\pm e^{\kappa_\pm x} + O(e^{3\kappa_\pm x}), \quad x \to \pm \infty.
  \end{array}
\end{equation}

Under these assumptions, $(\Sigma,\sigma)$ is clearly a spherically symmetric Riemanniann manifold with two asymptotically hyperbolic ends $\{x = \pm \infty\}$. Note indeed that the metric $\sigma$ is asymptotically a \emph{small} perturbation of the "hyperbolic like" metrics
$$
  \sigma_\pm = dx^2 + e^{-2\kappa_\pm x}  d\omega_\pm^2, \quad x \to \pm \infty,
$$
where $d\omega_\pm^2 = 1/(a_\pm^2) d\omega^2$ are fixed metrics on $\S^2$. From this, we see easily that the sectional curvature of $\sigma$ tends to the \emph{constant negative values} $- (\kappa_\pm)^2$ on the corresponding ends $\{x = \pm \infty\}$. Hence the name "asymptotically hyperbolic" for this kind of geometry. Note in passing that we allow $\kappa_\pm$ to take different values leading to different sectional curvatures in the two ends. We emphasize at last that such SSAHM are very particular cases (because of our assumption of spherical symmetry) of the much broader class of asymptotically hyperbolic manifolds for instance described in \cite{IK, JSB, SB} (to cite only a few papers that deal with inverse problems).

On the manifold $(\Sigma,\sigma)$, we are interested in studying how (scalar, electromagnetic, Dirac, \dots ) waves evolve, scatter at late times and ultimately, in trying to answer the question: can we determine the metric by observing these waves at the infinities of the manifold (in our model, the two ends $\{x=\pm\infty\}$). For definiteness, we shall consider in this paper how massless Dirac waves propagate and scatter towards the two asymptotically hyperbolic ends. Note that the same results should hold with the Dirac equation replaced by the wave equation. Precisely, let us consider the massless Dirac equation
\begin{equation} \label{DE}
  i \partial_t \psi = \mathbb{D}_{\sigma} \psi,
\end{equation}
where $\mathbb{D}_{\sigma}$ denotes a representation of the Dirac operator on $(\Sigma,\sigma)$ and the $2$-spinor solution $\psi$ belongs to $L^2(\Sigma; \C^2)$. It will be shown in Section \ref{TM} that we have a very simple connection between $\mathbb{D}_{\sigma}$ and the function $a(x)$ appearing in the metric $\sigma$, precisely
\begin{equation} \label{DO}
  \mathbb{D}_{\sigma} = \Ga D_x + a(x) \mathbb{D}_{\S^2},
\end{equation}
where $\mathbb{D}_{\S^2}$ denotes the intrinsic Dirac operator on $\S^2$, represented here by the expression
\begin{equation} \label{DiracS2}
  \mathbb{D}_{\S^2} = \Gb \left( D_{\theta} + \frac{i \cot{\theta}}{2} \right) + \Gc \frac{1}{\sin{\theta}} D_{\varphi},
\end{equation}
with $D_x = -i\partial_x, \ D_\theta = -i\partial_\theta, \ D_\varphi = -i\partial_\varphi$ and where the $2 \times 2$- Dirac matrices $\Ga, \Gb, \Gc$ satisfy the usual anti-commutation relations
\begin{equation} \label{Anticom}
  \Gamma^i \Gamma^j + \Gamma^j \Gamma^i = 2 \delta_{ij}.
\end{equation}

Due to the spherical symmetry of the problem and the existence of generalized spherical harmonics $\{Y_{kl}\}$ that "diagonalize" $\mathbb{D}_{\S^2}$, we can decompose the energy Hilbert space $\H = L^2(\Sigma; \C^2)$ onto a Hilbert sum of partial Hilbert spaces $\H_{kl}$ with the property that the $\H_{kl}$'s are let invariant through the action of the Dirac operator (\ref{DO}). More precisely, if we introduce the set of indices $I = \{ k \in 1/2 + \Z, \ l \in 1/2 + \N, \ |k| \leq l\}$, we have
$$
  \H = \oplus_{kl} \H_{kl}, \quad \H_{kl} = L^2(\R;\C^2) \otimes Y_{kl},
$$
and
$$
  \mathbb{D}_{\sigma}^{kl} := \mathbb{D}_{\sigma \ |\H_{kl}} = \Ga D_x - (l+1/2) a(x) \Gb.
$$
Note that the partial Dirac operator $\mathbb{D}_{\sigma}^{kl}$'s only depend on the \emph{angular momentum} $l + 1/2 \in \N^*$.
For simplicity, we shall denote $l+\half$ by $n$ (hence the new parameter $n$ runs over the integers $\N^*$) and also
\begin{equation} \label{DO-1D}
  \mathbb{D}_{\sigma}^n = \Ga D_x - n a(x) \Gb,
\end{equation}
for the partial Dirac operators on each generalized spherical harmonic $Y_{kl}$.

We are thus led to consider the restriction of the Dirac equation (\ref{DE}) to each partial Hilbert space $\H_{kl}$ separatly and study the properties of the family of $1$D Dirac Hamiltonians $\mathbb{D}_{\sigma}^n, \ n \in \N^*$ in order to obtain spectral, direct and inverse scattering results for the complete Dirac Hamiltonian $\mathbb{D}_{\sigma}$. This has been done in \cite{DN3} in a very similar context\footnote{Note that the models studied in \cite{DN3} (see also \cite{Da1,DN2}) come from General Relativity. More precisely, the direct and inverse scattering of Dirac waves propagating in the exterior region of a Reissner-Nordstr\"om-de Sitter black holes were studied therein. It turns out that these  "relativistic" models and the one presented in this paper are equivalent. This was briefly mentioned in \cite{DN3} and made rigorous in the next Section \ref{TM}.} (see also \cite{AKM, Da1, DN2}). Let us summarize here these results. We refer to Section \ref{SS} for more explanations.

First, the Dirac Hamiltonian $\mathbb{D}_{\sigma}$ is selfadjoint on the Hilbert space $\H = L^2(\Sigma; \C^2)$ and has absolutely continuous spectrum. In particular, the pure point spectrum of $\mathbb{D}_{\sigma}$ is empty. As a consequence, the energy of massless Dirac fields cannot remain trapped on any compact subsets of $\Sigma$, \textit{i.e.} for all compact subset $K \subset \R$,
$$
  \lim_{t \to \pm \infty} \| \mathbf{1}_{K}(x) e^{-it\mathbb{D}_{\sigma}} \psi \| = 0.
$$
In other words, the massless Dirac fields \emph{scatter} towards the asymptotic ends $\{ x = \pm \infty \}$ of the manifold $\Sigma$ at late times.

Second, a complete direct scattering theory can be established for $\mathbb{D}_{\sigma}$ on $(\Sigma,\sigma)$. For all energy $\lambda \in \R$, we denote the scattering matrix at energy $\lambda$ by $S(\lambda)$. It is a unitary operator on $L^2(\S^2; \C^2)$ and thus has the structure of an operator valued $2 \times 2$ matrix, \textit{i.e.}
\begin{equation} \label{SM}
  S(\lambda) = \left[ \begin{array}{cc} T_L(\lambda) & R(\lambda)\\ L(\lambda)&T_R(\lambda) \end{array} \right],
\end{equation}
where $T_L, T_R$ are the transmission operators and $R, L$ the reflection operators. The formers measure the part of a signal having energy $\lambda$ transmitted from an end to the other end in a scattering process whereas the latters measure the part of a signal of energy $\lambda$ reflected from an end to itself ($\{x = -\infty\}$ for $L$ and $\{ x = +\infty\}$ for $R$).

Due to the spherical symmetry of the model, the scattering matrix lets invariant all the partial Hilbert spaces $\H_{kl}$ and can be thus decomposed into a Hilbert sum of unitary operators acting $\C^2$. We write as a shorthand
$$
  S(\lambda) = \sum_{k,l \in I} S_{kl}(\lambda), \quad S_{kl}(\lambda) := S(\lambda)_{|\H_{kl}}.
$$
Since the $1$D Dirac operator (\ref{DO-1D}) only depends on $n = l+1/2 \in \N^*$, the partial scattering matrices $S_{kl}(\lambda)$ also only depend on $n$. We shall thus use the notation
\begin{equation} \label{SM-n}
  S(\lambda,n) = \left[ \begin{array}{cc} T(\lambda,n)&R(\lambda,n)\\ L(\lambda,n)&T(\lambda,n) \end{array} \right].
\end{equation}
For all $n \in \N^*$, we emphasize that the partial scattering matrices $S(\lambda,n)$ are unitary matrices that encode the stationary scattering at a fixed energy $\lambda$ on a given generalized spherical harmonics $\H_{kl}$ with $n = l + 1/2$. As above, the transmission coefficients $T(\lambda,n)$ correspond to the transmitted part of a signal (from one end to the other) whereas the left $L(\lambda,n)$ and right $R(\lambda,n)$ reflection coefficients correspond to the reflected part of a signal in a given end.

In \cite{DN3}, we addressed the question whether it was possible to determine uniquely the metric from the knowledge
of the reflection coefficients $L(\lambda,n)$ or $R(\lambda,n)$. Using essentially the Complex Angular Momentum method (see \cite{Re} for the first appearance of this method and \cite{Ra} for an application to Schr\"odinger inverse scattering), we were able to answer positively to the question with some interesting improvements in the hypotheses. Precisely, we state here the inverse scattering uniqueness result proved in \cite{DN3}.

\begin{theorem}\label{globaluniqueness}
Let $\Sigma = \R \times \S^2$ be a SSAHM equipped with the Riemanniann metric
$$
  \sigma = dx^2 + a^{-2}(x) d\omega^2,
$$
where the function $a(x)$ satisfies the assumptions (\ref{RegulA}) - (\ref{AsympA}). Let $\mathbb{D}_{\sigma} = \Ga D_x + a(x) \mathbb{D}_{\S^2}$ be an expression of the massless Dirac operator associated to $(\Sigma,\sigma)$. To the evolution equation $i\partial_t \psi = \mathbb{D}_{\sigma} \psi$ with $\psi \in \H = L^2(\Sigma; \C^2)$, we associate the countable family of partial waves scattering matrices $S(\lambda,n)$ for $\lambda \in \R$ and $n \in \N^*$ as above. Consider also a subset $\mathcal{L}$ of $\N^*$ that satisfies a M\"untz condition
$$
  \sum_{n \in \mathcal{L}} \frac{1}{n} = \infty.
$$
Then the knowledge of either $R(\lambda,n)$ or $L(\lambda,n)$ for a fixed $\lambda \ne 0$ and for all $n \in \mathcal{L}$ determines uniquely the function $a(x)$ (and thus the metric $\sigma$) up to a discrete set of translations.
\end{theorem}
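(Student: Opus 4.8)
The plan is to reduce the inverse problem for the metric $\sigma$ to a family of one-dimensional inverse scattering problems — one for each angular momentum $n \in \N^*$ — and then to exploit the joint dependence on $n$ via the Complex Angular Momentum method. By the spherical symmetry recalled above, $S(\lambda)$ is the orthogonal sum of the partial matrices $S(\lambda,n)$, so it suffices to show that the family $\{L(\lambda,n)\}_{n\in\mathcal{L}}$ (resp. $\{R(\lambda,n)\}_{n\in\mathcal{L}}$) at the fixed $\lambda\ne 0$ determines $a$ up to translation. Working with the partial Hamiltonian $\mathbb{D}_\sigma^n = \Ga D_x - n a(x)\Gb$ and using that, by (\ref{AsympA}), $a$ decays exponentially at both ends so that $A := \int_\R a(x)\,dx < \infty$, I would introduce Jost solutions $f_L(x,\lambda,n)$ and $f_R(x,\lambda,n)$ of the stationary equation $\mathbb{D}_\sigma^n f = \lambda f$, normalized by their plane-wave behaviour at $x=+\infty$ and $x=-\infty$ respectively. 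The entries of $S(\lambda,n)$ are then expressed as Wronskians of these Jost solutions: in particular $T(\lambda,n) = 1/\Delta_L(\lambda,n)$ and $L(\lambda,n) = \beta_L(\lambda,n)/\Delta_L(\lambda,n)$, where $\Delta_L$ and $\beta_L$ are explicit Wronskian-type integrals over $\R$ built from $a$ and the Jost solutions.

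Next I would implement the Complex Angular Momentum method: the angular momentum $n$ is promoted to a complex variable $\nu$, and, since $n$ enters the potential $n a(x)$ linearly, the Volterra integral equations defining the Jost solutions show that $\nu\mapsto\Delta_L(\lambda,\nu)$ and $\nu\mapsto\beta_L(\lambda,\nu)$ extend to entire functions of $\nu$, of exponential type governed by $A=\int_\R a$ (this finiteness is exactly what produces a finite type), and polynomially bounded on the imaginary axis — the unitarity bound $|L(\lambda,n)|\le 1$ being instrumental for the latter. Given a second metric defined by $\tilde a$ yielding the same reflection coefficients on $\mathcal{L}$, the function $\nu\mapsto \beta_L(\lambda,\nu)\tilde\Delta_L(\lambda,\nu) - \tilde\beta_L(\lambda,\nu)\Delta_L(\lambda,\nu)$ is entire, of finite exponential type, polynomially bounded on a line, and vanishes at every $n\in\mathcal{L}$. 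The M\"untz condition $\sum_{n\in\mathcal{L}}1/n=\infty$ then forces it to vanish identically, by a Carlson-type uniqueness theorem for functions of exponential type (equivalently, a uniqueness statement for Laplace transforms). Hence $L(\lambda,\nu)=\tilde L(\lambda,\nu)$ for all $\nu$, not merely for $n\in\mathcal{L}$.

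Once the reflection coefficient is known as a function of $\nu$ for all $\nu$, one extracts $a$ from the behaviour as $\nu\to+\infty$ along the real axis. In this regime the term $\nu a(x)\Gb$ dominates the stationary equation wherever $a$ does not vanish, and a WKB / Liouville--Green analysis shows that the Jost functions behave like $e^{\nu A}$ times a complete asymptotic series whose coefficients are iterated Laplace-type transforms of quantities built from $a$, $a'$ near the relevant end. Comparing these expansions for the two metrics and invoking the injectivity of the Laplace transform yields $a=\tilde a$, up to the stated translation ambiguity: since $\lambda$ is fixed, a shift $x\mapsto x+c$ multiplies the left (resp. right) reflection coefficient by $e^{\pm 2i\lambda c}$, so translations with $c$ in the lattice $\tfrac{\pi}{\lambda}\Z$ are undetectable — this is precisely the "discrete set of translations" in the statement.

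The main obstacle is the second step: establishing holomorphy of the Jost functions in the complex angular momentum together with exponential-type and line-growth estimates sharp enough for the Carlson/M\"untz uniqueness theorem to genuinely apply. Controlling the Volterra iterations uniformly in $\nu$ over a full half-plane, while tracking the exponential weights coming from $\int a$, is the technical heart of the argument; the WKB extraction of $a$ in the third step, though conceptually simpler, also requires care to justify the term-by-term Laplace-transform identification.
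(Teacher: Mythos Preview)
Your reduction to one-dimensional Dirac equations, the introduction of Jost solutions, the representation $T=1/a_{L1}$, $L=a_{L3}/a_{L1}$, and the analytic continuation of the scattering data to complex angular momentum $z$ with exponential-type bounds are exactly the machinery the paper (following \cite{DN3}) sets up in Section~\ref{SS}. The uniqueness step that extends equality from $\mathcal{L}$ to all of $\C$ is also correct in spirit, though the precise mechanism is slightly sharper than a ``Carlson-type'' theorem: Lemma~\ref{MainEsti} gives $|a_{Lj}(\lambda,z)|\le e^{A|\Re z|}$, which places the $a_{Lj}$ in the Nevanlinna class of the right half-plane; vanishing on a set $\mathcal{L}$ with $\sum_{n\in\mathcal{L}}1/n=\infty$ then forces identical vanishing. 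So far your sketch matches the paper.

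The genuine gap is your final step. You propose to extract $a$ by expanding the Jost functions as $e^{\nu A}$ times a full asymptotic series in $1/\nu$ whose coefficients are ``Laplace-type transforms'' of local data of $a$, and then to invoke Laplace injectivity. But the large-$z$ asymptotics actually computed (Theorem~\ref{asymtoticsutiles}) show that the leading behaviour of $L(\lambda,z)$ depends only on the constants $\kappa_-,a_-$, not on the full profile of $a$; you have not explained how higher terms would organize into a Laplace transform of $a$ itself, nor how one would justify term-by-term identification for a merely asymptotic (not convergent) series. As written this step is a hope, not an argument.

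The method the paper (and \cite{DN3}, after \cite{FY}) actually uses is different and more robust. One introduces the Liouville variable $X=g(x)=\int_{-\infty}^x a(s)\,ds$ and forms the transition matrix $P(X,\lambda,z)$ defined by $P(X,\lambda,z)\,\tilde F_R(\tilde h(X),\lambda,z)=F_R(h(X),\lambda,z)$. Its entries are bilinear in the Jost functions $g_j,\tilde g_j$, hence entire in $z$, of exponential type, and bounded on $i\R$ by Lemma~\ref{MainEstiF}. Using $F_L=F_R A_L$ one rewrites $P_1$ and $P_2$ as a term proportional to $L(\lambda,z)-\tilde L(\lambda,z)$ multiplied by $g_2\tilde g_4$ (or $g_2\tilde g_2$), plus terms bounded on $\R^+$ via (\ref{estimatefg}) and (\ref{asymptoticalun}); since $L=\tilde L$, boundedness on $\R^+$ follows. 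Phragmen--Lindel\"of on each quadrant and Liouville then give $P_j(X,\lambda,z)\equiv P_j(X,\lambda,0)$. But $F_R(x,\lambda,0)=e^{i\lambda\Ga x}$, so $P(X,\lambda,0)=e^{i\lambda(h(X)-\tilde h(X))\Ga}$, and a Wronskian computation yields $\tilde h(X)=h(X)+k\pi/\lambda$ for some $k\in\Z$; differentiating gives $a(x)=\tilde a(x+k\pi/\lambda)$. This is the argument you should replace your WKB step with.
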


\begin{remark}
First, we emphasize that the above result is not true if $\lambda=0$, (see Remark 3.7, \cite{DN3}). Secondly, in \cite{DN3}, Theorem 1.1, it is claimed that the knowledge of the transmission coefficients $T(\lambda,n)$ for a fixed $\lambda \ne 0$ and for all $n \in \mathcal{L}$ also
determines uniquely the function $a(x)$ up to a translation. The crucial ingredient of the proof can be found in the Proposition 3.13 of \cite{DN3} which states
that if $T(\lambda,n) = \tilde{T}(\lambda,n)$ for all $n \in \mathcal{L}$, then the corresponding reflection coefficients $L(\lambda,n)$ and $\tilde{L}(\lambda,n)$
(resp. $R(\lambda,n)$ and $\tilde{R}(\lambda,n)$) coincide up to a multiplicative constant. The proof of this result given in \cite{DN3} is unfortunately incomplete and therefore,
this last point is not so clear and could even be false.  However, in this paper, in Proposition \ref{unicitetransmission}, Addendum B, we prove that the knowledge of the transmission coefficients $T(\lambda,n)$
for all $n \in \mathcal{L}$ together with the knowledge of the reflection coefficients
$L(\lambda,k)$ for a finite number of integer $k$, (and a technical assumption on the sectional curvatures), uniquely determines the function $a(x)$ up to a translation. The question whether these last hypotheseses are necessary remains open.
\end{remark}

For more general Asymptotically Hyperbolic Manifolds (AHM in short) with no particular symmetry, difficult direct and inverse scattering results for \emph{scalar waves} have been proved by Joshi, S\'a Barreto in \cite{JSB}, by S\'a Barreto in \cite{SB} and by Isozaki, Kurylev in \cite{IK}. In \cite{JSB} for instance, it is shown that the asymptotics of the metric of an AHM are uniquely determined (up to certain diffeomorphisms) by the scattering matrix $S(\lambda)$ at a fixed energy $\lambda$ off a discrete subset of $\R$. In \cite{SB}, it is proved that the metric of an AHM is uniquely determined (up to certain diffeomorphisms) by the scattering matrix $S(\lambda)$ for every $\lambda \in \R$ off an exceptional subset. Similar results are obtained recently in \cite{IK} for even more general classes of AHM. At last, we also mention \cite{BP} where related inverse problems - inverse resonance problems - are studied in certain subclasses of AHM.

The new inverse scattering results of this paper are local in nature, in the same spirit as \cite{Be, GS, Si, Te}. Instead of assuming the full knowledge of one of the reflection operators, we instead assume the knowledge of one of these operators up to some precise error remainder (see below). Using the particular analytic properties of the scattering coefficients $ L(\lambda,z)$ and $R(\lambda,z)$ with respect to the complex angular momentum $z$ and some well known uniqueness properties of the Laplace transform (see \cite{Ho1, Si}), we are able to prove the following improvement of our previous result.

\begin{theorem} \label{Reflection}
Let $(\Sigma, \sigma)$ and $(\Sigma,\tilde{\sigma})$ two a priori different SSAHM. We denote by $a(x)$ and $\tilde{a}(x)$ the two radial functions defining the metrics $\sigma$ and $\tilde{\sigma}$. We define
$$
  A = \int_\R a(x) dx, \quad \tilde{A} = \int_\R \tilde{a}(x) dx,
$$
as well as the diffeomorphisms
$$
  \begin{array}{ccl} g: \ \R & \longrightarrow & (0, A), \\
                          x & \longrightarrow & g(x) = \int_{-\infty}^x a(s) ds, \end{array}, \quad \quad
  \begin{array}{ccl} \tilde{g}: \ \R & \longrightarrow & (0, \tilde{A}), \\
                          x & \longrightarrow & \tilde{g}(x) = \int_{-\infty}^x \tilde{a}(s) ds, \end{array}.
$$
We also denote by $h=g^{-1} : (0,A) \longrightarrow \R,  \ \tilde{h}=\tilde{g}^{-1} : (0,\tilde{A}) \longrightarrow \R$ their inverse diffeomorphisms. As above, we define $S(\lambda,n)$ and $\tilde{S}(\lambda,n)$ the corresponding partial scattering matrices. Let $\lambda\not= 0$ be a fixed energy and $0< B <
\min\ (A, \tilde{A})$. Then the following assertions are equivalent:

\begin{equation} \label{Ln}
  (i) \quad L(\lambda,n) = \tilde{L}(\lambda,n) + \ O\left(e^{-2nB} \right),\ n \rightarrow +\infty.
\end{equation}
$$
  (ii) \quad \exists k \in \Z, \quad a(x) = \tilde{a}(x + \frac{k\pi}{\lambda}), \quad \forall \ x  \leq h(B)= \tilde{h}(B) - \frac{k\pi}{\lambda}.
$$
Symmetrically, the following assertions are also equivalent:
\begin{equation} \label{Rn}
  (iii) \quad R(\lambda,n) = \tilde{R}(\lambda,n) + \ O\left(e^{-2nB} \right), \ n \rightarrow +\infty.
\end{equation}
$$
  (iv) \quad \exists k \in \Z, \quad a(x) = \tilde{a}(x + \frac{k\pi}{\lambda}), \quad \forall \ x  \geq h(A-B)= \tilde{h}(\tilde{A}-B) - \frac{k\pi}{\lambda}.
$$

\end{theorem}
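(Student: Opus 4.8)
The plan is to prove the equivalence $(i)\Leftrightarrow(ii)$; the equivalence $(iii)\Leftrightarrow(iv)$ will then follow by running the same argument at the end $\{x=+\infty\}$. \emph{Liouville reduction and localisation of the reflection coefficient.} First I would apply the Liouville change of variable $X=g(x)=\int_{-\infty}^x a(s)\,ds$, which maps $\R$ diffeomorphically onto $(0,A)$ and turns each partial Dirac equation $\mathbb{D}_\sigma^n\psi=\lambda\psi$ into a Dirac system on $(0,A)$ of the form $-i\Ga\phi'-n\Gb\phi=q(X)\phi$, with potential $q(X)=\lambda/a(h(X))$. By \eqref{AsympA}, $q$ is smooth on $(0,A)$ with a Coulomb-type singularity at each endpoint, in particular $q(X)\sim\lambda/(\kappa_-X)$ as $X\to0^+$, so that the left-end data $a_-,\kappa_-$ are encoded in this leading singular part. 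The same transformation sends the analogous operator $\tilde{\mathbb{D}}_\sigma^n$ to the corresponding system on $(0,\tilde A)$ with potential $\tilde q$. Using the description of the stationary scattering recalled in Section \ref{SS}, the left reflection coefficient $L(\lambda,n)$ is expressed through the pair of solutions of this system normalised at the \emph{left} singular endpoint $X=0$, transferred across $(0,A)$; the key point for a local statement is that this expression only feels the potential in a neighbourhood of $X=0$ together with that transfer.

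\emph{Complex angular momentum and a Laplace representation.} Next, following the Complex Angular Momentum method of \cite{DN3}, I would let $n$ become a complex variable $z$ and recall (or re-establish) that $L(\lambda,z)$ extends to a meromorphic function, holomorphic and of at most polynomial growth on vertical lines in a half-plane $\{\mathrm{Re}\,z>\mu_0\}$, and that the left Jost solution $f_1(X,\lambda,z)$ obeys a Liouville--Green (WKB) estimate showing that, once its $X\to0^+$ singular behaviour is factored out, it decays like $e^{-zX}$ on every compact interval $[\epsilon,A-\epsilon]$ as $\mathrm{Re}\,z\to+\infty$. Combining these should give a representation of the form
$$
  L(\lambda,z) = L_0(\lambda,z) + \int_0^{A} \mathcal{A}(X)\, e^{-2zX}\, dX ,
$$
in which $L_0(\lambda,z)$ is an explicit, non-decaying ``model'' term built only from $\lambda$ and the left-end data $a_-,\kappa_-$, and the ``$A$-amplitude'' $\mathcal{A}$ is determined by, and on each subinterval $(0,X_0)$ determines, the restriction of $q$ to $(0,X_0)$. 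The same holds for $\tilde L(\lambda,z)$, with $\tilde L_0,\tilde{\mathcal{A}},\tilde A$.

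\emph{The two implications.} For $(i)\Rightarrow(ii)$, from $L(\lambda,n)-\tilde L(\lambda,n)=O(e^{-2nB})$ I would use the analyticity and polynomial growth of $L(\lambda,z)-\tilde L(\lambda,z)$ in $\{\mathrm{Re}\,z>\mu_0\}$ together with a Phragm\'en--Lindel\"of / Carlson-type argument --- of the kind underlying the uniqueness theorems for Laplace transforms in \cite{Ho1, Si} --- to upgrade the bound on the integers to $L(\lambda,z)-\tilde L(\lambda,z)=O(e^{-2zB})$ uniformly as $\mathrm{Re}\,z\to+\infty$. Comparing the $z\to+\infty$ behaviour of $L_0$ and $\tilde L_0$ first forces $\kappa_-=\tilde\kappa_-$, and then a relation between $a_-$ and $\tilde a_-$ equivalent to $\tfrac{1}{\kappa_-}\log(a_-/\tilde a_-)=k\pi/\lambda$ for some $k\in\Z$ --- this discrete freedom being exactly the translation invariance $x\mapsto x+k\pi/\lambda$ of $L(\lambda,\cdot)$ already present in Theorem \ref{globaluniqueness}, coming from the phase picked up by the Jost solutions under a translation. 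With that $k$ fixed, the remaining identity $\int_0^{\min(A,\tilde A)}(\mathcal{A}-\tilde{\mathcal{A}})(X)\,e^{-2zX}\,dX=O(e^{-2zB})$ and the uniqueness theorem for Laplace transforms (\cite{Si, Ho1}) give $\mathcal{A}=\tilde{\mathcal{A}}$ on $(0,B)$, hence $q=\tilde q$ on $(0,B)$ by the local determination above, i.e. $a(x)=\tilde a(x+k\pi/\lambda)$ for $x\le h(B)$; finally $B=g(h(B))$ together with $\tilde g(x+k\pi/\lambda)=g(x)$ on that range yields $h(B)=\tilde h(B)-k\pi/\lambda$, which is $(ii)$. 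For the converse $(ii)\Rightarrow(i)$, if $a(x)=\tilde a(x+k\pi/\lambda)$ for $x\le h(B)$ then $q=\tilde q$ on $(0,B)$ up to the phase gauge generated by that translation, so the normalised left Jost solutions agree on $(0,B)$ up to that gauge; propagating across $[B,\min(A,\tilde A)]$ with the transfer matrix and invoking the $e^{-zX}$-decay of the preceding step gives $L(\lambda,z)-\tilde L(\lambda,z)=O(e^{-2zB})$ on $\{\mathrm{Re}\,z\ge0\}$, in particular along $z=n\to+\infty$.

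\emph{Symmetric statement and the main difficulty.} The equivalence $(iii)\Leftrightarrow(iv)$ is obtained identically after exchanging the roles of the two ends: one works with the solutions normalised at the \emph{right} singular endpoint $X=A$ (resp. $X=\tilde A$) and with the complementary region $\{X\ge A-B\}$, i.e. $\{x\ge h(A-B)\}$. I expect the substantive work to be concentrated in the first two steps: proving, uniformly in the complex angular momentum $z$, the Liouville--Green decay of the Jost solutions on intervals bounded away from the singular endpoints, and extracting the precise Laplace (``$A$-amplitude'') representation of $L(\lambda,z)$ together with the exact form of its non-decaying model part $L_0$ and of the $k\pi/\lambda$ translation ambiguity. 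Once these are in hand, the Phragm\'en--Lindel\"of upgrade and the Laplace-transform uniqueness are precisely the two ingredients supplied by \cite{Ho1, Si}.
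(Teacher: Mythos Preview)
Your outline is a coherent strategy, but it diverges from the paper's proof at the decisive technical step, and the step you single out as ``substantive work'' is precisely the one the paper manages to avoid.

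\textbf{Where the two proofs agree.} Both use the Liouville variable $X=g(x)$ and the complexification $n\mapsto z$, and both need to pass from the integer hypothesis $L(\lambda,n)-\tilde L(\lambda,n)=O(e^{-2nB})$ to a bound valid for real (or complex) $z$. The paper does this not via Carlson's theorem but via the Hardy class $H^2_+$: one forms
\[
F(z)=\frac{a_{L3}(\lambda,z)\tilde a_{L1}(\lambda,z)-\tilde a_{L3}(\lambda,z)a_{L1}(\lambda,z)}{z+1}\,e^{-z(A+\tilde A)},
\]
checks $F\in H^2_+$ by Lemma~\ref{MainEsti}, uses Paley--Wiener and the Laplace-uniqueness result of \cite{Ho1} to conclude $F(z)=O(z^{-1/2}e^{-2zB})$ on $(0,\infty)$, and hence $L(\lambda,z)-\tilde L(\lambda,z)=O(\sqrt z\,e^{-2zB})$ for real $z\to+\infty$. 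So the Laplace transform enters, but applied to this auxiliary $F$, not to $L$ itself.

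\textbf{Where they differ.} You then propose an $A$-amplitude representation $L(\lambda,z)=L_0(\lambda,z)+\int_0^A\mathcal A(X)e^{-2zX}\,dX$ and a local determination $\mathcal A_{|(0,X_0)}\leftrightarrow q_{|(0,X_0)}$. The paper does \emph{not} establish or use any such representation. Instead it runs a transfer-matrix argument \`a la Bennewitz/Freiling--Yurko: set $P(X,\lambda,z)$ by $P\,\tilde F_R(\tilde h(X),\lambda,z)=F_R(h(X),\lambda,z)$, so that
\[
P_1=g_1\tilde g_4-g_2\tilde g_3,\qquad P_2=-g_1\tilde g_2+g_2\tilde g_1.
\]
These are entire in $z$ of exponential type and bounded on $i\R$ by Lemma~\ref{MainEstiF}. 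Rewriting $P_1,P_2$ via $F_L=F_RA_L$ isolates a factor $(\tilde L-L)g_2\tilde g_j$; the bound from the first step plus the estimate $|g_j|\le Ce^{zX}$ makes this bounded on $\R^+$ for every $X<B$. Phragm\'en--Lindel\"of on each quadrant and Liouville then give $P_j(X,\lambda,z)\equiv P_j(X,\lambda,0)$. Since $F_R(\cdot,\lambda,0)=e^{i\lambda\Ga x}$, one reads off $P(X,\lambda,0)=e^{i\lambda(h(X)-\tilde h(X))\Ga}$, and the Wronskian identity $W(g_1,g_2)=iz$ forces $e^{2i\lambda(\tilde h(X)-h(X))}=1$, i.e.\ $\tilde h(X)=h(X)+k\pi/\lambda$ on $(0,B]$, which is $(ii)$ after differentiation. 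The discrete $k\pi/\lambda$ ambiguity drops out here from the Wronskian, not from comparing model terms $L_0,\tilde L_0$.

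\textbf{The gap in your route.} The existence of an honest Laplace/$A$-amplitude representation for $L(\lambda,z)$ in this setting is not available off the shelf: the Sturm--Liouville reduction has genuinely singular endpoints (Lemma~\ref{SturmLiouville}), the spectral parameter is $z^2$ rather than $z$, and the leading behaviour of $L(\lambda,z)$ carries an oscillatory factor $(z/2)^{-2i\lambda/\kappa_-}$ (Theorem~\ref{asymtoticsutiles}), so the ``model part'' $L_0$ is not a constant and the remainder is not obviously a clean Laplace transform of an $L^1$ density. Moreover, your key claim that $\mathcal A_{|(0,X_0)}$ determines $q_{|(0,X_0)}$ is essentially the local Borg--Marchenko statement you are trying to prove; in Simon's Schr\"odinger setting this is itself a theorem requiring substantial work, and its analogue for singular Dirac systems would have to be proved from scratch. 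The paper's $P$-matrix argument sidesteps all of this. For $(iii)\Leftrightarrow(iv)$ the paper also does not repeat the argument at the right end but uses the reflection $a^\star(x)=a(-x)$, which yields $R^\star(\lambda,n)=-\overline{L(-\lambda,n)}$ and reduces immediately to the $L$-case.
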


The above result asserts that the \emph{partial} knowledge of the reflection coefficients in the sense of (\ref{Ln}) or (\ref{Rn}) allows to determine uniquely the metric $\sigma$ in the neighbourhoods of the two ends $\{x = \pm \infty\}$. The size of these neighbourhoods depend on the magnitude of the error terms in (\ref{Ln}) - (\ref{Rn}). Of course, $h(B)$ (resp. $h(A-B)$) depends on the metric $a(x)$ which is a priori unknown. But, it is not difficult to prove using (\ref{AsympA}) that
$h(B) \sim \frac{1}{\kappa_-} \ \log B$ when $B \rightarrow 0$. In the same way, $h(A-B) \sim - \frac{1}{\kappa_+} \ \log (A-B)$ when $B \rightarrow  A$. We also emphasize that the "surface gravities" $\kappa_{\pm}$ can be explicitly recover from the asymptotics of $L(\lambda, n)$ or $R(\lambda,n)$, $n \rightarrow + \infty$,
(see \cite{DN3}, Theorem 4.22).

\vspace{0.3cm}

As a direct consequence, we obtain immediately the following \emph{global uniqueness} result for the metric of a SSAHM. This result slightly improves our earlier version obtained in \cite{DN3} and stated in Theorem \ref{globaluniqueness}.

\begin{coro}
Assume that for a given $C \geq \min (A, \tilde{A})$ and $\lambda\not=0$ a fixed energy, one of the following assertions holds :
$$
 (i) \quad L(\lambda,n) = \tilde{L}(\lambda,n) + \ O\left(e^{-2nC} \right),\ n \rightarrow +\infty.
$$
$$
  (ii) \quad R(\lambda,n) = \tilde{R}(\lambda,n) + \ O\left(e^{-2nC} \right) ,\ n \rightarrow +\infty.
$$
Then, there  exists $k \in \Z, \ a(x) = \tilde{a}(x+ \frac{k\pi}{\lambda}), \ \forall \ x \in \R.$
\end{coro}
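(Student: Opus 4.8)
The plan is to derive the Corollary from Theorem~\ref{Reflection} by letting the size $B$ of the admissible error term increase up to $\min(A,\tilde A)$. Since the hypotheses $(i)$ and $(ii)$ of the Corollary are symmetric under exchanging the roles of $(\Sigma,\sigma)$ and $(\Sigma,\tilde\sigma)$ (note that $\tilde L(\lambda,n)=L(\lambda,n)+O(e^{-2nC})$, and that ``$a(x)=\tilde a(x+k\pi/\lambda)$ for all $x$'' is equivalent to ``$\tilde a(x)=a(x-k\pi/\lambda)$ for all $x$''), I may assume without loss of generality that $A\le\tilde A$, so that $\min(A,\tilde A)=A$.

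Assume first that $(i)$ holds with $C\ge A$. Fix any $B$ with $0<B<A\le C$. Then $B<C$, hence $e^{-2nC}=e^{-2n(C-B)}e^{-2nB}=O(e^{-2nB})$ as $n\to+\infty$, so that $L(\lambda,n)=\tilde L(\lambda,n)+O(e^{-2nB})$. By the implication $(i)\Rightarrow(ii)$ of Theorem~\ref{Reflection} there is an integer $k_B$ with
$$ a(x)=\tilde a\Big(x+\frac{k_B\pi}{\lambda}\Big)\ \text{ for all } x\le h(B)=\tilde h(B)-\frac{k_B\pi}{\lambda}. $$
I would next check that $k_B$ does not depend on $B$: if $0<B_1<B_2<A$ and $k_{B_1}\ne k_{B_2}$, then (as $h$ is increasing) both identities hold on $\{x\le h(B_1)\}$, and subtracting them shows that $\tilde a$ is periodic with some nonzero period $p$ on a half-line $(-\infty,y_1]$; iterating the relation then forces $\tilde a$ to equal a fixed positive constant along a sequence tending to $-\infty$, contradicting $\tilde a(x)=a_-e^{\kappa_-x}+O(e^{3\kappa_-x})\to 0$ as $x\to-\infty$ (here $\kappa_->0$ and $\tilde a>0$). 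Hence $k_B\equiv k$ for a single $k\in\Z$.

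To conclude, let $B\uparrow A$. Since $g:\R\to(0,A)$ is an increasing diffeomorphism, $h(B)=g^{-1}(B)\to+\infty$, so the half-lines $\{x\le h(B)\}$ exhaust $\R$ and I obtain $a(x)=\tilde a(x+k\pi/\lambda)$ for every $x\in\R$, as claimed (and $\tilde A=\int_\R\tilde a=\int_\R a=A$ a posteriori). The case $(ii)$ is treated the same way via the equivalence $(iii)\Leftrightarrow(iv)$: one gets $a(x)=\tilde a(x+k\pi/\lambda)$ on $\{x\ge h(A-B)\}$ with $k$ again independent of $B$ (ruling out half-line periodicity now uses $\kappa_+<0$, i.e.\ the decay of $\tilde a$ at $+\infty$), and letting $B\uparrow A$ makes $h(A-B)=g^{-1}(A-B)\to-\infty$, so again the identity becomes global.

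The substantive work lies entirely in Theorem~\ref{Reflection} itself; within this deduction the only points needing care are that the translation index $k_B$ cannot jump with $B$ — which is where the exponential decay of $a$ and $\tilde a$ at the two ends enters — and that, once $k$ is fixed, the family of local identities coalesces into a global one as $B\uparrow\min(A,\tilde A)$, which uses the surjectivity of $g$ onto $(0,A)$.
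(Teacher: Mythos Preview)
Your proof is correct and follows the same approach as the paper: apply Theorem~\ref{Reflection} for each $B<\min(A,\tilde A)$ and let $B\uparrow\min(A,\tilde A)$, using that $h(B)\to+\infty$ (resp.\ $h(A-B)\to-\infty$) to exhaust $\R$. You are in fact more careful than the paper, which tacitly writes a single $k$ without checking that the integer produced by Theorem~\ref{Reflection} is independent of $B$; your argument ruling out half-line periodicity of $\tilde a$ via its decay at the relevant end correctly fills this small gap.
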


\begin{proof} Let us treat for instance the case $(i)$ and assume that $A \leq \tilde{A}$. From our hypothesis, for all $B<A$, we have
$$
L(\lambda,n) = \tilde{L}(\lambda,n) + \ O\left(e^{-2nB} \right).
$$
Hence Theorem \ref{Reflection} implies that $a(x) = \tilde{a}(x + \frac{k\pi}{\lambda}), \ \forall \ x  \leq h(B)$. The result follows letting $B$ tend to $A$ and using that ${\displaystyle{\lim_{X \to A} h(X) = +\infty}}$. Note that we also obtain  $A= \tilde{A}$.
\end{proof}

Let us give here a possible interpretation of the above \emph{local uniqueness} result Theorem \ref{Reflection}. Consider for instance the reflection coefficients $L(\lambda,n)$ and recall that it encodes the following scattering experiment: a wave having energy $\lambda$ is sent from the end $\{x = - \infty\}$ in the past and evolves on the SSAHM. Then $L(\lambda,n)$ measures the part of this wave that is reflected to the same end $\{ x = -\infty \}$ in the far future. Now our result asserts that if we know $L(\lambda,n)$ up to a precise error term of the form $O \left(e^{-2nB}\right)$, then the metric is uniquely determined in a neighbourhood of $\{x = -\infty\}$, the size of the neighbourhood depending only on the constant $B$ defining the error term. We infer thus that, under our assumption, the wave sent from $\{ x = -\infty \}$ hasn't the time to travel through the whole manifold before being measured back in the end $\{ x = - \infty\}$. This explains heuristically why the partial knowledge of $L(\lambda,n)$, in the precise sense given by our assumption, is not enough to determine the full metric. \\

At last, when using the transmission coefficients as the starting point of our inverse problem, we get a result different in nature than the one obtained with the reflection coefficients. Precisely, we obtain a \emph{global uniqueness} result. Moreover, as we have said before,  we have to assume that the reflection coefficients $L(\lambda,n)$ are equal for a finite number of integer $n$ and we make a technical assumption on the sectional curvatures $\kappa_\pm$. The question whether these last hypotheses are necessary remains open.

\begin{theorem} \label{Transmission}
Assume that
$$
	\frac{1}{\kappa_+} + \frac{1}{\kappa_-} < 0, \quad \frac{1}{\tilde{\kappa}_+} + \frac{1}{\tilde{\kappa}_-} < 0,
$$
and that for a fixed energy $\lambda\not=0$ and for some $B > \max(A,\tilde{A})$,
\begin{equation} \label{Tn}
   \quad T(\lambda,n) = \tilde{T}(\lambda,n) + \ O\left(e^{-2nB} \right),\ n \rightarrow +\infty.
\end{equation}
Assume also that
\begin{equation} \label{rr11}
  L(\lambda,n) = \tilde{L}(\lambda,n),
\end{equation}	
for a finite but large enough number of indices $n \in \N$. Then there exists a constant $\sigma \in \R$ such that
$$
  \tilde{a}(x) = a(x + \sigma).
$$
In consequence, the two SSAHM $(\Sigma,g)$ and $(\tilde{\Sigma},\tilde{g})$ coincide up to isometries.
\end{theorem}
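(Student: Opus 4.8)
The plan is to reduce the statement to Proposition \ref{unicitetransmission} below. The proof has two parts: \emph{(a)} using the Complex Angular Momentum method together with a uniqueness theorem for Laplace transforms, I would first turn the asymptotic hypothesis (\ref{Tn}) into the \emph{exact} identity $T(\lambda,n)=\tilde T(\lambda,n)$ for every $n\in\N^*$; \emph{(b)} since the set $\mathcal{L}=\N^*$ trivially satisfies the M\"untz condition $\sum 1/n=\infty$, this identity, together with the finite reflection data (\ref{rr11}) and the sign conditions on the surface gravities, puts us exactly in the hypotheses of Proposition \ref{unicitetransmission}, which then yields a constant $\sigma\in\R$ with $\tilde a(x)=a(x+\sigma)$.

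For part \emph{(a)}, I would start from the scattering analysis of the one-dimensional Dirac systems (\ref{DO-1D}) carried out in \cite{DN3}: for fixed $\lambda\ne 0$ the partial coefficients $T(\lambda,n)$, $L(\lambda,n)$, $R(\lambda,n)$ are the restrictions to $n\in\N^*$ of functions holomorphic in a right half-plane, given by Wronskians of the Jost solutions continued in the angular momentum. The key structural fact is that $1/T(\lambda,z)$ equals $e^{Az}$ times a unimodular phase (built from $\lambda$ and the regularized Jost data at the two ends) times $c(\lambda)+\widehat N(\lambda,z)$, where $c(\lambda)\ne0$ and $\widehat N(\lambda,z)=\int_0^A N(\lambda,t)\,e^{-zt}\,dt$ is the Laplace transform of an $L^1$ kernel; in particular $T(\lambda,n)\sim c(\lambda)^{-1}e^{-nA}$. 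Comparing this with (\ref{Tn}) and using $2B>A+\tilde A$, the leading exponentials must agree, which forces $A=\tilde A=:A_0$ and $c(\lambda)=\tilde c(\lambda)$; subtracting the two expansions of $1/T$ and $1/\tilde T$ then gives $\widehat{\tilde N}(\lambda,n)-\widehat N(\lambda,n)=O\bigl(e^{-n(2B-A_0)}\bigr)$ along the integers. A Phragm\'en--Lindel\"of argument in the spirit of \cite{Si} upgrades this to the same bound on the positive real axis, and since $2B-A_0>A_0$ is larger than the length of the support of $\tilde N-N$, the uniqueness theorem for Laplace transforms of \cite{Ho1, Si} forces $\tilde N(\lambda,\cdot)\equiv N(\lambda,\cdot)$, whence $T(\lambda,n)=\tilde T(\lambda,n)$ for all $n$.

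For part \emph{(b)}, this exact identity together with (\ref{rr11}) and the two sign conditions $1/\kappa_++1/\kappa_-<0$ and $1/\tilde\kappa_++1/\tilde\kappa_-<0$ meets all the hypotheses of Proposition \ref{unicitetransmission}; applying it produces $\sigma\in\R$ with $\tilde a(x)=a(x+\sigma)$, so that $g$ and $\tilde g$ differ by a translation and $(\Sigma,g)$, $(\tilde\Sigma,\tilde g)$ coincide up to isometries. The extra data (\ref{rr11}) cannot be dispensed with in this scheme because the transmission coefficients alone do not determine $a$, even up to translation: this is the well-known loss of information carried by $T$ as opposed to $L$ or $R$ in one-dimensional inverse scattering.

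The real obstacle is Proposition \ref{unicitetransmission} itself — the passage from transmission data to reflection data — which is precisely the step left incomplete in \cite{DN3}. Its proof requires controlling the analytic continuation $z\mapsto L(\lambda,z)/\tilde L(\lambda,z)$: since unitarity of $S(\lambda,n)$ gives $|L(\lambda,n)|=(1-|T(\lambda,n)|^2)^{1/2}=|\tilde L(\lambda,n)|$, this ratio is meromorphic of bounded characteristic in a right half-plane and unimodular on its boundary, hence a unimodular constant times a quotient of Blaschke products; one must then bound the number and locate the zeros of this Blaschke factor — which is where the sign conditions on the curvatures enter — and check that finitely many coincidences of $L$ suffice to annihilate it. By comparison, part \emph{(a)} above is only a variant of the Laplace-transform argument already used for Theorem \ref{Reflection}.
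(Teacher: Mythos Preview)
Your overall plan matches the paper's: first upgrade the asymptotic hypothesis \eqref{Tn} to the exact identity $a_{L1}(\lambda,z)=\tilde a_{L1}(\lambda,z)$ for all $z\in\C$, then invoke Proposition \ref{unicitetransmission}. However, your implementation of part \emph{(a)} rests on a ``key structural fact'' --- that $1/T(\lambda,z)$ equals $e^{Az}$ times a unimodular phase times $c(\lambda)+\widehat N(\lambda,z)$ with $N\in L^1(0,A)$ --- which is neither proved in the paper nor in \cite{DN3}, and is not obvious. The actual asymptotics (Theorem \ref{asymtoticsutiles}) give $a_{L1}(\lambda,z)\sim C\,(z/2)^{i\lambda(1/\kappa_--1/\kappa_+)}e^{zA}$; the power of $z$ is unimodular only for real $z$, and extracting it while keeping the remainder a genuine $L^1$-Laplace transform on $[0,A]$ is a claim you would have to justify separately before the argument goes through.

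The paper avoids this representation entirely and works directly with the entire function $a_{L1}$. From \eqref{Tn} and Theorem \ref{asymtoticsutiles} one gets $A=\tilde A$ and $a_{L1}(\lambda,n)-\tilde a_{L1}(\lambda,n)=O(e^{-2n(B-A)})$; the auxiliary function $F(z)=(a_{L1}-\tilde a_{L1})(z+1)^{-1}e^{-zA}$ lies in $H_+^2$ by Lemma \ref{MainEsti}, so Proposition \ref{PropHardy} yields $|a_{L1}(\lambda,z)-\tilde a_{L1}(\lambda,z)|\le C\sqrt z\,e^{-2z(B-A)}$ for $z>0$. Since $B>A$ this difference is bounded on $\R^+$, and since it is also bounded on $i\R$ (Lemma \ref{MainEsti} again) and of exponential type, Phragm\'en--Lindel\"of plus parity plus Liouville force it to vanish identically. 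This is shorter and uses only the a priori bound $|a_{L1}(\lambda,z)|\le e^{A|Re(z)|}$ --- no integral kernel at all. Your sketch of the mechanism behind Proposition \ref{unicitetransmission} via Blaschke products is in the right spirit, though the paper argues instead through the Hadamard factorization of the odd entire function $a_{L3}(\lambda,z)/z$ and locates its large zeros via Lemma \ref{LargeAsymp-aL3} and Corollary \ref{Asymp-Large-zn}; the sign conditions on $1/\kappa_++1/\kappa_-$ fix the quadrant of those zeros, which is what rules out the $z_n\mapsto-\overline{\tilde z_n}$ ambiguity and leaves only a rational factor that the finite reflection data \eqref{rr11} then kills.
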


Let us make a few comments on this result. First, we provide an heuristic reason why we don't have a \emph{local uniqueness} result when we assume the knowledge of the transmission coefficients up to a precise error. The transmission coefficients - by definition - measure the part of a wave transmitted from one end, say $\{x = -\infty\}$, to the other end $\{x = +\infty\}$. In our case where the SSAHM has only two ends, the transmitted wave has thus the time to propagate into the whole manifold. It is then natural that the transmission coefficients encode \emph{all} the information of the SSAHM.

Second, the asymptotics of the transmission coefficients when $n$ tends to infinity are computed in \cite{DN3}. Precisely, we have
$$
  \mid T(\lambda,n) \mid \sim C \ e^{-nA}, \quad \mid \tilde{T}(\lambda,n) \mid \sim \tilde{C} \ e^{-n\tilde{A}}.
$$
Hence, the condition on $B$ in Theorem \ref{Transmission} cannot be weaker that $B > \frac{1}{2} \max (A, \tilde{A})$. Note then that the assertion (i) implies immediately $A = \tilde{A}$ from the above asymptotics. We mention that a global uniqueness inverse result in the case where $\frac{1}{2} \max (A, \tilde{A}) < B \leq \max (A, \tilde{A})$ is still an open question.

\vspace{0.5cm}

This paper is organised as follows. In Section \ref{TM}, we recall how to compute the Dirac equation on a curved manifold and apply this formalism to obtain a representation of a massless Dirac operator on a SSAHM that is suitable for us. In Section \ref{SS}, we recall the main results from \cite{DN3} where a complete description of the stationary scattering corresponding to massless Dirac fields evolving in a SSAHM was obtained. In Section \ref{LIS}, we prove our main results, Theorem \ref{Reflection}, Theorem \ref{Transmission}. Eventually, we include the last Section \ref{BH} in which an application of our local inverse uniqueness results on SSAHM is given in the context of black hole spacetimes, precisely on Reissner-Nordstr\"om-de-Sitter black holes.


\Section{The model} \label{TM}

Since the Dirac equation is by essence a relativistic equation, we prefer to work directly on the four dimensional Lorentzian manifold $(M,\tau)$ defined by
$$
  M = \R_t \times \Sigma,
$$
and equipped with the metric
$$
  \tau = dt^2 - \sigma,
$$
where $(\Sigma,\sigma)$ is the SSAHM we aim to study. Below we recall how to compute the massless Dirac equation on such a $4$D curved background and obtain a representation of it that we put under the generic Hamiltonian form
$$
  i \partial_t \psi = \mathbb{D}_{\tau} \psi.
$$
We shall see in Remark \ref{Link} that this procedure leads to an equivalent form of the massless Dirac equation than the one
$$
  i \partial_t \psi = \mathbb{D}_{\sigma} \psi,
$$
we would have obtained working on the $3$D-Riemanniann manifold $(\Sigma,\sigma)$. In other words, the Dirac operators $\mathbb{D}_{\tau}$ and $\mathbb{D}_{\sigma}$ that we obtain by these two formalisms are shown to be unitarily equivalent. We prefer to work with the relativistic point of view nevertheless since we are also interested in applications of our inverse results to spacetimes coming from General Relativity, namely black hole spacetimes. We postpone this parenthesis till Section \ref{BH}.

\subsection{Orthonormal frame formalism for the massless Dirac equation in $4$D curved space-time}

To calculate the massless Dirac equation in a 4D curved spacetime $M$ equipped with a Lorentzian metric $\tau$ of signature $(1,-1,-1,-1)$, we use Cartan's orthonormal frame formalism as explained for instance in \cite{CP} or in a more relativistic setting in \cite{Ni}. Let us denote by $\{e_A\}_{A=0,1,2,3}$ a given local Lorentz frame, \textit{i.e.} a set of vector fields satisfying $\tau(e_A, e_B) = \eta_{AB}$ where $\eta_{AB} =$ diag$(1,-1,-1,-1)$ is the flat (Lorentz) metric. We also denote by $\{e^A\}_{A=0,1,2,3}$ the set of dual $1$-forms of the frame $\{e_A\}$. Latin letters A,B will denote in what follows local Lorentz frame indices, while Greek letters $\mu, \nu$ run over four-dimensional space-time coordinate indices. The \emph{massless} Dirac equation takes then the generic form
\begin{equation} \label{AbstractDiracEq}
  \mathbb{D} \phi = \gamma^A (\partial_A + \Gamma_A) \phi = 0.
\end{equation}
Here, the $\gamma^A$'s are the gamma Dirac matrices satisfying the anticommutation relations
\begin{equation} \label{Clifford}
  \{\gamma^A, \gamma^B\} = \gamma^A \gamma^B + \gamma^B \gamma^A = 2 \eta^{AB}.
\end{equation}
The differential operators $\partial_A$'s are given by $\partial_A = e_A^\mu \partial_\mu$ in terms of the local differential operators and the $\Gamma_A$'s are the components of the spinor connection $\Gamma = \Gamma_A e^A = \Gamma_\mu dx^\mu$ in the local Lorentz frame. In order to derive the latter, we first compute the spin-connection $1$-form $\omega_{AB} = \omega_{AB\mu} dx^\mu = f_{ABC} e^C$ thanks to Cartan's first structural equation and the skew-symmetric condition
\begin{equation} \label{Cartan}
  d e^A + \omega^A_{\ B} \wedge e^B = 0, \quad \omega_{AB} = \eta_{AC} \omega^C_{\ B} = - \omega_{BA}.
\end{equation}
Note here that we use the flat metric $\eta_{AB}$ or its inverse $\eta^{AB}$ to raise or lower Latin indices. We also use Einstein summation convention. With this definition, the spinor connection $\Gamma$ is then defined as
\begin{equation} \label{Gamma}
  \Gamma = \frac{1}{8} [\gamma^A, \gamma^B] \omega_{AB} = \frac{1}{4} \gamma^A \gamma^B \omega_{AB} = \frac{1}{4} \gamma^A \gamma^B f_{ABC} e^C.
\end{equation}

\subsection{The Dirac equation on a SSAHM}

We now apply this formalism to calculate the massless Dirac equation on the $4$D- Lorentzian manifold $(M,\tau)$ given by
$$
  M = \R_t \times \R_x \times \S^2_{\theta,\varphi},
$$
and
$$
  \tau = dt^2 - dx^2 - a^{-2}(x) \left( d\theta^2 + \sin^2\theta \,d\varphi^2 \right),
$$
where
$$
  a \in C^2(\R), \quad a > 0.
$$

The spherical symmetry of the metric leads to the natural choice of local Lorentz frame
\begin{equation} \label{LocalFrame}
  e_0 = \partial_t, \quad e_1 = \partial_x, \quad e_2 = a(x) \partial_\theta, \quad e_3 = \frac{a(x)}{\sin\theta} \partial_\varphi.
\end{equation}
The dual $1$-forms are then given by
\begin{equation} \label{LocalDualFrame}
  e^0 = dt, \quad e^1 = dx, \quad e^2 = a^{-1}(x) d\theta, \quad e^3 = \frac{\sin\theta}{a(x)} d\varphi.
\end{equation}
The exterior derivatives of the $e^A$'s are readily computed
$$
  d e^0 = 0, \quad de^1 = 0, \quad de^2 = \frac{a'(x)}{a(x)} \, e^2 \wedge e^1, \quad de^3 = \frac{a'(x)}{a(x)} \, e^3 \wedge e^1 - a(x) \cot\theta \, e^3 \wedge e^2.
$$
Using (\ref{Cartan}), we then easily get
$$
  \omega^0_{\ 1} = \omega^0_{\ 2} = \omega^0_{\ 3} = 0, \quad \omega^1_{\ 2} = \frac{a'(x)}{a(x)} \, e^2, \quad \omega^1_{\ 3} = \frac{a'(x)}{a(x)} \, e^3, \quad \omega^2_{\ 3} = - a(x) \cot \theta \, e^3,
$$
or equivalently
$$
  \omega_{01} = \omega_{02} = \omega_{03} =0, \quad \omega_{12} = - \frac{a'(x)}{a(x)} \, e^2, \quad \omega_{13} = - \frac{a'(x)}{a(x)} \, e^3, \quad \omega_{23} = a(x) \cot \theta \, e^3.
$$
Hence we deduce
\begin{equation} \label{SpinConnection}
  \Gamma = \frac{1}{4} \gamma^A \gamma^B \omega_{AB} = \left( - \frac{a'(x)}{2 a(x)} \gamma^1 \gamma^2 \right) e^2 + \left( - \frac{a'(x)}{2 a(x)} \gamma^1 \gamma^3 + \frac{a(x) \cot \theta}{2} \, \gamma^2 \gamma^3 \right) e^3= \Gamma_A e^A.
\end{equation}

The massless Dirac equation $\gamma^A (\partial_A + \Gamma_A) \phi = 0$ on $(M,\tau)$ thus takes the form
$$
  \left[ \gamma^0 \partial_t + \gamma^1 \partial_x + \gamma^2 \left( a(x) \partial_\theta - \frac{a'(x)}{2 a(x)} \gamma^1 \gamma^2 \right) + \gamma^3 \left( \frac{a(x)}{\sin \theta} \partial_\varphi - \frac{a'(x)}{2 a(x)} \gamma^1 \gamma^3 + \frac{a(x) \cot \theta}{2} \, \gamma^2 \gamma^3 \right) \right] \phi = 0,
$$
or using (\ref{Clifford})
$$
  \left[ \gamma^0 \partial_t + \gamma^1 \partial_x + a(x) \left( \left( \partial_\theta - \frac{\cot \theta}{2} \right) \gamma^2 + \frac{1}{\sin \theta} \partial_\varphi \gamma^3 \right) - \frac{a'(x)}{a(x)} \gamma^1 \right] \phi = 0.
$$
We can get rid of some potentials by considering the weighted spinor
\begin{equation} \label{WeightedSpinor}
  \psi = a^{-1}(x) \phi.
\end{equation}
Then $\psi$ satifies the equation
$$
  \left[ \gamma^0 \partial_t + \gamma^1 \partial_x + a(x) \left( \left( \partial_\theta - \frac{\cot \theta}{2} \right) \gamma^2 + \frac{1}{\sin \theta} \partial_\varphi \gamma^3 \right) \right] \psi = 0.
$$
We finally put this equation under Hamiltonian form. The spinor $\psi$ thus satisfies
$$
  i \partial_t \psi = \mathbb{D} \psi,
$$
where the Dirac operator $\mathbb{D}$ is given by
$$
  \mathbb{D} = \gamma^0 \gamma^1 D_x + a(x) \left[ \left( D_\theta + \frac{i \cot \theta}{2} \right) \gamma^0 \gamma^2 + \frac{1}{\sin \theta} D_\varphi \gamma^0 \gamma^3 \right],
$$
and
$$
  D_x = -i \partial_x, \quad D_\theta = -i \partial_\theta, \quad D_\varphi = -i \partial_\varphi.
$$

Let us introduce some notations. We denote
$$
  \Gamma^1 = \gamma^0 \gamma^1, \quad \Gamma^2 = \gamma^0 \gamma^2, \quad \Gamma^3 = \gamma^0 \gamma^3.
$$
From (\ref{Clifford}), it is clear that the Dirac matrices $\Gamma^1, \Gamma^2, \Gamma^3$ satisfy the usual anticommutation relations
\begin{equation} \label{AntiCom}
  \{ \Gamma^i, \, \Gamma^j \} = 2 \delta_{ij}, \quad \forall i,j = 1, 2, 3.
\end{equation}
We choose the following representation for these Dirac matrices
\begin{equation} \label{DiracMatrices}
  \Ga = \left( \begin{array}{cc} 1&0 \\0&-1 \end{array} \right), \quad \Gb = \left( \begin{array}{cc} 0&1 \\1&0 \end{array} \right), \quad \Gc = \left( \begin{array}{cc} 0&i \\-i&0 \end{array} \right).
\end{equation}
We also denote
\begin{equation} \label{DiracSphere}
  \mathbb{D}_{\S^2} = \Gb \left( D_{\theta} + \frac{i \cot{\theta}}{2} \right) + \Gc \frac{1}{\sin{\theta}} D_{\varphi},
\end{equation}
which turns out to be an expression of the intrinsic Dirac operator on $\S^2$. With all these notations, the massless Dirac equation on $(M,\tau)$ takes its final Hamiltonian form
\begin{equation} \label{DiracEq}
  i \partial_t \psi = \mathbb{D} \psi, \quad \mathbb{D} = \Ga D_x + a(x) \mathbb{D}_{\S^2}.
\end{equation}

\begin{remark} \label{Link}
  Consider the Riemanniann manifold $\Sigma = \R_x \times \S^2_{\theta,\varphi}$ equipped with the metric $\sigma = dx^2 + a^{-2}(x) d\omega^2$ where $d\omega^2$ denotes the euclidean metric on $\S^2$. Using the same Cartan's orthonormal frame formalism as described above, we could associate to $(\Sigma,\sigma)$ a Dirac operator $\mathbb{D}_{\sigma}$ and consider the associated Dirac equation
\begin{equation} \label{DiracEq-SV}
  i \partial_t \psi = \mathbb{D}_{\sigma} \psi,
\end{equation}
with $\psi$ given by (\ref{WeightedSpinor}).

The Dirac equation (\ref{DiracEq-SV}) is the one we obtain if we adopt the Schr\"odinger viewpoint, namely if we consider the evolution of Dirac fields on the fixed $3$D-Riemanniann manifold $(\Sigma,\sigma)$. On the other hand, the Dirac equation (\ref{DiracEq}) is the one we obtain if we adopt the relativistic viewpoint, that is the natural Dirac equation associated to the $4$D-Lorentzian manifold $(M = \R_t \times \Sigma, \tau = dt^2 - \sigma)$. It turns out that the two points of view are equivalent in the sense that the corresponding Dirac operators $\mathbb{D}$ and $\mathbb{D}_{\sigma}$ are unitarily equivalent. This can be seen by a direct calculation.

To each point of $(\Sigma,\sigma)$, we associate the orthonormal local frame
$$
  e_1 = \partial_x, \quad e_2 = a(x) \partial_\theta, \quad e_3 = \frac{a(x)}{\sin\theta} \partial_\varphi.
$$
Note that the vector fields $\{e_A\}_{A = 1,2,3}$ satisfy $\sigma(e_A, e_B) = \delta_{AB}$ where $\delta_{AB} =$ diag$(1,1,1)$ is the flat (Riemanniann) $3$D metric. Now following the same procedure as above (still introducing the spinor weight (\ref{WeightedSpinor})), we obtain the following Dirac equation
$$
  i \partial_t \psi = \mathbb{D} \psi, \quad \mathbb{D} = \gamma^1 D_x + a(x) \left[ \gamma^2 \left( D_{\theta} + \frac{i \cot{\theta}}{2} \right) + \gamma^3 \frac{1}{\sin{\theta}} D_{\varphi} \right],
$$
where the gamma Dirac matrices $\gamma^1, \gamma^2, \gamma^3$ satisfy the anticommutation formulae (\ref{AntiCom}). Hence we conclude that the Dirac equations (\ref{DiracEq}) and (\ref{DiracEq-SV}) only differ by a choice of equivalent representation of the gamma Dirac matrices satisfying (\ref{AntiCom}). But it is well known that such two different choices of representation lead to unitarily equivalent Dirac operators (see \cite{Th}).
\end{remark}

%
%

\Section{The stationary scattering} \label{SS}

In this section, we recall  the construction of the stationary representation of the scattering matrix $S(\lambda,n)$ for a fixed
energy $\lambda \in \R$ and all angular momentum $n \in \N$, (we refer to \cite{AKM} and \cite{DN3} for details).
Let us consider first the stationary solutions of equation (\ref{DiracEq}) restricted to each spin weighted spherical harmonic, \textit{i.e.} the solutions of
\begin{equation} \label{PartialSE}
  [ \Ga D_x - n a(x) \Gb ] \psi = \lambda \psi, \quad \forall n \in \N^*.
\end{equation}
For $\lambda \in \R$, we define the Jost solution from the left $F_L(x,\lambda,n)$ and the Jost solution from the right $F_R(x,\lambda,n)$
as the $2\times2$-matrix solutions of (\ref{PartialSE}) satisfying the following asymptotics
\begin{eqnarray}
  F_L(x,\lambda,n) & = & e^{i\Ga \lambda x} (I_2 + o(1)), \ x \to +\infty, \label{FL}\\
  F_R(x,\lambda,n) & = & e^{i\Ga \lambda x} (I_2 + o(1)), \ x \to -\infty. \label{FR}
\end{eqnarray}
From (\ref{PartialSE}), (\ref{FL}) and (\ref{FR}), it is easy to see that such solutions (if there exist) must satisfy the integral equations
\begin{equation} \label{IE-FL}
  F_L(x,\lambda,n) = e^{i\Ga \lambda x} - i n \Ga \int_x^{+\infty} e^{-i\Ga \lambda (y-x)} a(y) \Gb F_L(y,\lambda,n) dy,
\end{equation}
\begin{equation} \label{IE-FR}
  F_R(x,\lambda,n) = e^{i\Ga \lambda x} + i n \Ga \int_{-\infty}^x e^{-i\Ga \lambda (y-x)} a(y) \Gb F_R(y,\lambda,n) dy.
\end{equation}
Since the potential $a$ belongs to $L^1(\R)$, it follows that the integral equations (\ref{IE-FL}) and (\ref{IE-FR}) are uniquely solvable by iteration and that
$$
  \|F_L(x,\lambda,n)\| \leq e^{n \int_x^{+\infty} a(s) ds}, \quad \|F_R(x,\lambda,n)\| \leq e^{n \int_{-\infty}^x a(s) ds}.
$$

Since the Jost solutions are fundamental matrices of (\ref{PartialSE}), there exists a $2\times2$-matrix $A_L(\lambda,n)$ such that $F_L(x,\lambda,n) = F_R(x,\lambda,n) \, A_L(\lambda,n)$. From (\ref{FR}) and (\ref{IE-FL}), we get the following expression for $A_L(\lambda,n)$
\begin{equation} \label{ALRepresentation}
  A_L(\lambda,n) = I_2 - i n \Ga \int_\R e^{-i\Ga \lambda y} a(y) \Gb F_L(y,\lambda,n) dy.
\end{equation}
Moreover, the matrix $A_L(\lambda,n)$ satisfies the following equality (see \cite{AKM}, Proposition 2.2)
\begin{equation} \label{AL-Relation}
  A_L^*(\lambda,n) \Ga A_L(\lambda,n) = \Ga, \quad \forall \lambda \in \R, \ n \in \N.
\end{equation}
Using the notation
\begin{equation} \label{ScatCoef}
  A_L(\lambda,n) = \left[\begin{array}{cc} a_{L1}(\lambda,n)&a_{L2}(\lambda,n)\\a_{L3}(\lambda,n)&a_{L4}(\lambda,n) \end{array} \right],
\end{equation}
the equality (\ref{AL-Relation}) can be written in components as
\begin{equation} \label{ALUnitarity}
  \left. \begin{array}{ccc} |a_{L1}(\lambda,n)|^2 - |a_{L3}(\lambda,n)|^2 & = & 1, \\
  |a_{L4}(\lambda,n)|^2 - |a_{L2}(\lambda,n)|^2 & = & 1, \\
  a_{L1}(\lambda,n) \overline{a_{L2}(\lambda,n)} - a_{L3}(\lambda,n) \overline{a_{L4}(\lambda,n)} & = & 0. \end{array} \right.
\end{equation}
The matrices $A_L(\lambda,n)$ encode all the scattering information of equation (\ref{PartialSE}). In particular, it is shown in \cite{AKM} that the
scattering matrix $S(\lambda,n)$ has the representation
\begin{equation} \label{SR-SM1}
  S(\lambda,n) = \left[ \begin{array}{cc} T(\lambda,n)&R(\lambda,n)\\ L(\lambda,n)&T(\lambda,n) \end{array} \right],
\end{equation}
where
\begin{equation} \label{SR-SM2}
  T(\lambda,n) = \frac{1}{a_{L1}(\lambda,n)}, \quad R(\lambda,n) = - \frac{a_{L2}(\lambda,n)}{a_{L1}(\lambda,n)}, \quad L(\lambda,n) =
  \frac{a_{L3}(\lambda,n)}{a_{L1}(\lambda,n)}.
\end{equation}


\begin{remark}
It follows from (\ref{IE-FL}) that if we define the new potential $\tilde{a}(x)=a(x+c)$, the associated Jost solutions satisfy
\begin{equation}\label{FLdecale}
    \tilde{F_L}(x,\lambda,n) = F_L(x + c,\lambda,n) e^{-i \Ga \lambda c}.
\end{equation}
Hence, it follows from (\ref{ALRepresentation}) that (with obvious notations)
\begin{equation}\label{ALdecale}
  \tilde{A_L}(\lambda,n) = e^{i\Ga \lambda c} A_L(\lambda,n) e^{-i\Ga \lambda c},
\end{equation}
and so, using (\ref{SR-SM1}) and (\ref{SR-SM2}), we conclude that
\begin{equation}
  \tilde{S}(\lambda,n) = e^{i\Ga \lambda c} S(\lambda,n) e^{-i\Ga \lambda c},
\end{equation}
or in components
\begin{equation} \label{Sdecale}
  \left[ \begin{array}{cc} \tilde{T}(\lambda,n)& \tilde{R}(\lambda,n)\\
  \tilde{L}(\lambda,n) & \tilde{T}(\lambda,n) \end{array} \right] = \left[ \begin{array}{cc} T(\lambda,n)& e^{2i \lambda c} R(\lambda,n)\\
  e^{-2i \lambda c} L(\lambda,n)&T(\lambda,n) \end{array} \right].
\end{equation}
Hence the transmission coefficients $T(\lambda,n)$ are invariant under any radial translations of the potential $a$, whereas the reflection coefficients $L(\lambda,n)$ and $R(\lambda,n)$ are invariant under the discrete set of radial translations $\tilde{a}(x) = a(x + \frac{k \pi}{\lambda})$ for $k \in \Z$ and $\lambda \ne 0$.
\end{remark}

Following an original idea due to Regge \cite{Re}, we shall allow the angular momentum $n \in \N$ to take \emph{complex values} $z$ and study the analytic properties of the above scattering data with respect to $z \in \C$. Precisely, it was shown in \cite{DN3} that we can define for $z \in \C$, the Jost solutions $F_L(x,\lambda,z)$ and $F_R(x,\lambda,z)$ which are the unique solutions of the stationary equation
\begin{equation} \label{PartialSE1}
  [ \Ga D_x - z a(x) \Gb ] \psi = \lambda \psi, \quad \forall z \in \C.
\end{equation}
with the asymptotics (\ref{FL}) and (\ref{FR}). Similarly, we can define the matrix $A_L(\lambda,z)$ for all $z \in \C$. All these matrix-functions are analytic in the complex variable $z \in \C$. Moreover, they satisfy the following properties:

\begin{lemma} \label{AL-Analytic}
  (i) Set $A = \displaystyle\int_\R a(x) dx$. Then
  \begin{eqnarray}
    |a_{L1}(\lambda,z)|, \ |a_{L4}(\lambda,z)| \leq \cosh(A|z|), \quad \forall z \in \C, \label{AL-ExpType1}\\
    |a_{L2}(\lambda,z)|, \ |a_{L3}(\lambda,z)| \leq \sinh(A|z|), \quad \forall z \in \C. \label{AL-ExpType2}
  \end{eqnarray}
  (ii) The functions $a_{L1}(\lambda,z)$ and $a_{L4}(\lambda,z)$ are entire and even in $z$ whereas the functions $a_{L2}(\lambda,z)$ and $a_{L3}(\lambda,z)$ are entire and odd in $z$. Moreover they satisfy the symmetries
  \begin{eqnarray}
    a_{L1}(\lambda,z) & = & \overline{a_{L4}(\lambda,\bar{z})}, \quad \forall z \in \C, \label{ALSym1}\\
    a_{L2}(\lambda,z) & = & \overline{a_{L3}(\lambda,\bar{z})}, \quad \forall z \in \C. \label{ALSym2}
  \end{eqnarray}
  (iii) The following relations hold for all $z \in \C$
  \begin{eqnarray}
    a_{L1}(\lambda,z) \overline{a_{L1}(\lambda,\bar{z})} - a_{L3}(\lambda,z) \overline{a_{L3}(\lambda,\bar{z})}
    & = & 1, \label{SymAL1-AL3}\\
    a_{L4}(\lambda,z) \overline{a_{L4}(\lambda,\bar{z})} - a_{L2}(\lambda,z) \overline{a_{L2}(\lambda,\bar{z})}
    & = & 1. \label{SymAL2-AL4}
  \end{eqnarray}
\end{lemma}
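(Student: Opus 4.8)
The plan is to prove Lemma \ref{AL-Analytic} in four stages, all of which flow from the integral equation (\ref{IE-FL}) and its complexified version obtained by replacing $n$ with $z$ in (\ref{PartialSE1}).

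\textbf{Step 1: Analyticity and the exponential-type bounds (i).} First I would set up the Neumann series for $F_L(x,\lambda,z)$ from the integral equation
\begin{equation*}
  F_L(x,\lambda,z) = e^{i\Ga \lambda x} - i z \Ga \int_x^{+\infty} e^{-i\Ga \lambda (y-x)} a(y) \Gb F_L(y,\lambda,z) dy,
\end{equation*}
and estimate the $k$-th iterate. Since $\|e^{i\Ga\lambda x}\| = 1$ (because $\Ga$ is self-adjoint with $\Ga^2 = I_2$, so $e^{i\Ga\lambda x}$ is unitary) and $\|\Ga\| = \|\Gb\| = 1$, the $k$-th term is bounded in norm by $|z|^k/k! \left(\int_x^{+\infty} a(s)\,ds\right)^k$, giving $\|F_L(x,\lambda,z)\| \le e^{|z|\int_x^{+\infty}a}$ and local uniform convergence of the series in $z$; each term being a polynomial in $z$ with coefficients that are entire (indeed constant) in $z$, the sum $F_L(x,\lambda,z)$ is entire in $z$, and likewise $A_L(\lambda,z)$ via (\ref{ALRepresentation}). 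For the sharper bounds (\ref{AL-ExpType1})--(\ref{AL-ExpType2}) I would exploit the block structure: in the representation (\ref{DiracMatrices}), $\Ga$ is diagonal and $\Gb$ is off-diagonal, so $\Ga\Gb$ is off-diagonal; looking at (\ref{ALRepresentation}) componentwise and tracking the parity of the number of $\Gb$ factors in each Neumann term, the diagonal entries $a_{L1}, a_{L4}$ collect only even powers of $z$ (contributing the even part $\cosh(A|z|)$ of the exponential) while the off-diagonal entries $a_{L2}, a_{L3}$ collect only odd powers (contributing $\sinh(A|z|)$). This simultaneously proves the parity statement in (ii).

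\textbf{Step 2: The symmetries (\ref{ALSym1})--(\ref{ALSym2}).} These should follow from a conjugation symmetry of the equation (\ref{PartialSE1}). I would look for a constant matrix $C$ (built from the $\Gamma^i$) such that $C \overline{[\Ga D_x - \bar z a \Gb]} C^{-1} = \Ga D_x - z a \Gb$ holds after complex conjugation, i.e. $C\overline{\Ga}C^{-1} = -\Ga$ and $C\overline{\Gb}C^{-1} = \Gb$ wait --- more carefully, conjugating $[\Ga D_x - \bar z a\Gb]\psi = \lambda\psi$ gives $[-\Ga D_x - z a \overline{\Gb}]\bar\psi = \lambda\bar\psi$ since $D_x = -i\partial_x$ picks up a sign; with $\overline{\Gb} = \Gb$ and $\overline{\Gc} = -\Gc$ in the chosen representation, one conjugates by $\Gc$ to restore the sign on the $D_x$ term while keeping $\Gb$ fixed. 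Translating this symmetry through the asymptotics (\ref{FL}) and the formula (\ref{ALRepresentation}) --- and using that $\Gc$ anticommutes with $\Ga$, which swaps the diagonal entries $a_{L1}\leftrightarrow a_{L4}$ and the off-diagonal $a_{L2}\leftrightarrow a_{L3}$ --- yields precisely (\ref{ALSym1})--(\ref{ALSym2}). The bookkeeping with $\overline{A_L(\lambda,\bar z)}$ (conjugating the function and the argument) is the delicate part here.

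\textbf{Step 3: The relations (iii).} The identity (\ref{AL-Relation}), namely $A_L^*(\lambda,n)\Ga A_L(\lambda,n) = \Ga$, is already quoted for real $n$ from \cite{AKM}; in components that is exactly the system (\ref{ALUnitarity}). Now both sides of each equation in (\ref{ALUnitarity}), rewritten as identities between entire functions of $z$ by replacing $\overline{a_{Lj}(\lambda,n)}$ with $\overline{a_{Lj}(\lambda,\bar z)}$ (which is entire in $z$ by Step 1 and the Schwarz reflection-type observation that $z\mapsto\overline{f(\bar z)}$ is entire when $f$ is), agree on the real axis $z = n \in \N$ --- actually on all of $\R$ by the same $\cite{AKM}$ argument --- hence agree identically on $\C$ by analytic continuation. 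This gives (\ref{SymAL1-AL3})--(\ref{SymAL2-AL4}), where I've used (\ref{ALSym1})--(\ref{ALSym2}) from Step 2 to recognise $\overline{a_{L4}(\lambda,\bar z)} = a_{L1}(\lambda,z)$ in converting the second line of (\ref{ALUnitarity}).

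\textbf{Main obstacle.} The routine part is Step 1. The genuine work is in Step 2: pinning down the exact conjugation matrix in the chosen representation (\ref{DiracMatrices}) and carefully propagating it through the normalisation $F_L \sim e^{i\Ga\lambda x}$ as $x\to+\infty$ and through the integral (\ref{ALRepresentation}), making sure the complex conjugation of the spectral parameter $z$ (as opposed to $\lambda$, which stays real) lands correctly. Everything else is analytic continuation from the known real-$n$ statements of \cite{AKM}.
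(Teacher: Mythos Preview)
Your overall plan is sound and is essentially the standard argument; note that the paper itself does not prove this lemma but merely quotes it from \cite{DN3}, so there is no ``paper's own proof'' to compare against here. Step~1 (Neumann series with parity tracking) and Step~3 (analytic continuation of the relation $A_L^*\Ga A_L=\Ga$ from real $z$, rewriting $A_L^*(\lambda,z)$ as the entire function $\overline{A_L(\lambda,\bar z)}^T$) are correct as stated.

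There is, however, a concrete slip in Step~2. After complex conjugation you correctly arrive at
\[
  [-\Ga D_x - z\,a(x)\,\Gb]\,\bar\psi = \lambda\bar\psi,
\]
and you then want a matrix $C$ with $C\Ga C^{-1}=-\Ga$ and $C\Gb C^{-1}=\Gb$. You propose $C=\Gc$, but $\Gc$ anticommutes with \emph{both} $\Ga$ and $\Gb$, so it flips the sign of the $\Gb$ term as well and does not recover the original equation. The correct choice is $C=\Gb$: it anticommutes with $\Ga$ (restoring the sign on $D_x$) and commutes with itself (leaving the $\Gb$ term unchanged). One then gets
\[
  F_L(x,\lambda,z)=\Gb\,\overline{F_L(x,\lambda,\bar z)}\,\Gb,\qquad
  A_L(\lambda,z)=\Gb\,\overline{A_L(\lambda,\bar z)}\,\Gb,
\]
and since conjugation by $\Gb=\left(\begin{smallmatrix}0&1\\1&0\end{smallmatrix}\right)$ swaps both the rows and the columns of a $2\times2$ matrix, this yields exactly (\ref{ALSym1})--(\ref{ALSym2}). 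Your claim that conjugation by $\Gc$ ``swaps the diagonal entries $a_{L1}\leftrightarrow a_{L4}$ and the off-diagonal $a_{L2}\leftrightarrow a_{L3}$'' is also off: a direct computation shows $\Gc M\Gc = \left(\begin{smallmatrix}d&-c\\-b&a\end{smallmatrix}\right)$ for $M=\left(\begin{smallmatrix}a&b\\c&d\end{smallmatrix}\right)$, which introduces unwanted signs. With $\Gb$ in place of $\Gc$ throughout Step~2, your argument goes through.
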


At this stage, we have proved that the components of the matrix $A_L(\lambda,z)$
are entire functions of exponential type in the variable $z$.
Precisely, from (\ref{AL-ExpType1}) and (\ref{AL-ExpType2}), we have
\begin{equation} \label{AL-ExpType}
  |a_{Lj}(\lambda,z)| \leq e^{A |z|}, \quad \forall z \in \C, \ j=1,..,4.
\end{equation}
Using the relations (\ref{SymAL1-AL3}), (\ref{SymAL2-AL4}) and the parity properties of the $a_{Lj}(\lambda,z)$,
we can improve these estimates (see Lemma 3.4. in \cite{DN3}).

\begin{lemma} \label{MainEsti}
  Let $\lambda \in \R$ be fixed. Then for all $z \in \C$
  \begin{equation} \label{MainEst}
    |a_{Lj}(\lambda,z)| \leq e^{A |Re(z)|}, \quad j=1,..,4.
  \end{equation}
\end{lemma}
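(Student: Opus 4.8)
The plan is to upgrade the a priori exponential-type bound $(\ref{AL-ExpType})$ to the sharper estimate $(\ref{MainEst})$ by first extracting very precise information about the behaviour of the $a_{Lj}(\lambda,\cdot)$ on the imaginary axis from the relations $(\ref{SymAL1-AL3})$--$(\ref{SymAL2-AL4})$, and then propagating this information into the whole complex plane by a Phragm\'en--Lindel\"of argument, using $(\ref{AL-ExpType})$ as the crude growth bound.

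\emph{Step 1: the imaginary axis.} I would evaluate $(\ref{SymAL1-AL3})$ at $z = it$ with $t \in \R$. Since $a_{L1}(\lambda,\cdot)$ is even and $a_{L3}(\lambda,\cdot)$ is odd (Lemma \ref{AL-Analytic}(ii)), one has $\overline{a_{L1}(\lambda,\overline{it})} = \overline{a_{L1}(\lambda,it)}$ while $\overline{a_{L3}(\lambda,\overline{it})} = -\,\overline{a_{L3}(\lambda,it)}$, so the difference of ``squares'' in $(\ref{SymAL1-AL3})$ becomes a \emph{sum} and the relation reads $|a_{L1}(\lambda,it)|^2 + |a_{L3}(\lambda,it)|^2 = 1$. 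The same computation applied to $(\ref{SymAL2-AL4})$ gives $|a_{L4}(\lambda,it)|^2 + |a_{L2}(\lambda,it)|^2 = 1$. In particular $|a_{Lj}(\lambda,it)| \leq 1$ for all $j = 1,\dots,4$ and all $t \in \R$.

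\emph{Step 2: Phragm\'en--Lindel\"of on the quarter-planes, then reflection.} Fix $j$ and set $H(z) = a_{Lj}(\lambda,z)\,e^{-Az}$, an entire function. On the positive real axis $(\ref{AL-ExpType})$ gives $|H(x)| = |a_{Lj}(\lambda,x)|\,e^{-Ax} \leq 1$, and on the imaginary axis Step 1 gives $|H(it)| = |a_{Lj}(\lambda,it)| \leq 1$; moreover $(\ref{AL-ExpType})$ gives $|H(z)| \leq e^{2A|z|}$ everywhere, so $H$ has order at most $1$ on the closed first quadrant, strictly below the critical order $2$ for a sector of opening $\pi/2$. Phragm\'en--Lindel\"ef therefore yields $|H(z)| \leq 1$, i.e. $|a_{Lj}(\lambda,z)| \leq e^{A\,\mathrm{Re}(z)}$, throughout the first quadrant, and the same argument on the fourth quadrant (using $|H| \leq 1$ on the negative imaginary axis, again from Step 1) extends this to every $z$ with $\mathrm{Re}(z) \geq 0$. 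Finally, for $\mathrm{Re}(z) < 0$ the point $-z$ lies in the right half-plane, and since each $a_{Lj}(\lambda,\cdot)$ is even or odd, $|a_{Lj}(\lambda,z)| = |a_{Lj}(\lambda,-z)| \leq e^{A\,\mathrm{Re}(-z)} = e^{A|\mathrm{Re}(z)|}$, which together with the right half-plane bound is exactly $(\ref{MainEst})$.

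The main obstacle is the temptation to apply Phragm\'en--Lindel\"of to $H$ directly on the half-plane $\{\mathrm{Re}(z)\geq 0\}$: there $H$ is only of order exactly $1$, the borderline value for a half-plane (as $e^{z}$ shows), so the half-plane version is false. Splitting $\{\mathrm{Re}(z)\geq 0\}$ into the two quarter-planes raises the critical order to $2$ and makes the argument legitimate; this splitting, together with the sign bookkeeping in Step 1 (where it is essential that $a_{L2}(\lambda,\cdot)$ and $a_{L3}(\lambda,\cdot)$ are \emph{odd}, producing the change of sign that converts a difference of squares into a sum), is the only genuinely delicate point, the remainder being routine.
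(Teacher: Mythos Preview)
Your proof is correct and follows precisely the route sketched in the paper (which refers to Lemma~3.4 of \cite{DN3}): the relations (\ref{SymAL1-AL3})--(\ref{SymAL2-AL4}) combined with the parity of the $a_{Lj}$ give $|a_{Lj}(\lambda,it)|\le 1$ on $i\R$, and then Phragm\'en--Lindel\"of on each quadrant (applied to $a_{Lj}(\lambda,z)e^{-Az}$, using (\ref{AL-ExpType}) as the crude growth bound) together with parity yields (\ref{MainEst}). Your remark on why one must split the half-plane into two quarter-planes is exactly the point.
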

It follows from Lemma \ref{MainEsti} that the functions $z \rightarrow a_{Lj}(\lambda, z)$ belong to the Nevanlinna class in the right half-plane (see for instance \cite{Ru} for a definition). We emphasize that this property is the key point to prove Theorem \ref{globaluniqueness} (see \cite{DN3}).

Similarly, if we use the notation
$$
F_L(x,\lambda,z) = \left[\begin{array}{cc} f_{L1}(x,\lambda,z)&f_{L2}(x,\lambda,z)\\f_{L3}(x,\lambda,z)&f_{L4}(x,\lambda,z) \end{array} \right],
\quad F_R(x,\lambda,z) = \left[\begin{array}{cc} f_{R1}(x,\lambda,z)&f_{R2}(x,\lambda,z)\\f_{R3}(x,\lambda,z)&f_{R4}(x,\lambda,z) \end{array} \right],
$$
we have the corresponding estimates for the Jost functions $f_{Lj}(x,\lambda,z)$ and $f_{Rj}(x,\lambda,z)$ for $j= 1,\dots,4$. Precisely

\begin{lemma} \label{MainEstiF}
  For all $j=1,..,4$ and for all $x \in \R$,
  \begin{eqnarray}
    |f_{Lj}(x,\lambda,z)| \leq C \, e^{|Re(z)| \int_x^{\infty} a(s) ds}, \\
    |f_{Rj}(x,\lambda,z)| \leq C \, e^{|Re(z)| \int_{-\infty}^x a(s) ds}.
  \end{eqnarray}
\end{lemma}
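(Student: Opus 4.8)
The plan is to transplant to the Jost matrices $F_L(x,\lambda,z)$ and $F_R(x,\lambda,z)$ the three ingredients behind Lemma \ref{MainEsti}: a crude exponential-type control in $z$, the parity of the components in $z$, and a Wronskian-type identity forcing unitarity on the imaginary axis; a Phragm\'en--Lindel\"of argument carried out quadrant by quadrant then upgrades the crude bound, exactly as for Lemma \ref{MainEsti}. Write $\tau_L(x) = \int_x^{+\infty} a(s)\,ds$ and $\tau_R(x) = \int_{-\infty}^x a(s)\,ds$.

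First, iterating the Volterra equations (\ref{IE-FL}) and (\ref{IE-FR}) with $z$ complex, just as in the bounds recalled after (\ref{IE-FR}) and using $\|\Ga\|=\|\Gb\|=\|e^{-i\Ga\lambda(y-x)}\|=1$, shows that $z\mapsto F_L(x,\lambda,z)$ and $z\mapsto F_R(x,\lambda,z)$ are entire and satisfy
$$
  \| F_L(x,\lambda,z) \| \leq e^{|z|\,\tau_L(x)}, \qquad \| F_R(x,\lambda,z) \| \leq e^{|z|\,\tau_R(x)}, \qquad z \in \C .
$$
Hence every component $f_{Lj}(x,\lambda,\cdot)$, $f_{Rj}(x,\lambda,\cdot)$ is entire of exponential type at most $\tau_L(x)$, resp. $\tau_R(x)$, and the asserted estimate already holds on the real axis, where $|\mathrm{Re}\,z| = |z|$.

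Second, from the anticommutation relations (\ref{AntiCom}) we get $\Ga\Gb\Ga = -\Gb$, so $\Ga\, F_L(x,\lambda,z)\,\Ga$ is a $2\times2$-matrix solution of (\ref{PartialSE1}) with $z$ replaced by $-z$; since $\Ga$ commutes with $e^{i\Ga\lambda x}$ and $\Ga I_2 \Ga = I_2$, this matrix still satisfies the normalisation (\ref{FL}), so by uniqueness of the left Jost solution $F_L(x,\lambda,-z) = \Ga\, F_L(x,\lambda,z)\,\Ga$, and likewise $F_R(x,\lambda,-z) = \Ga\, F_R(x,\lambda,z)\,\Ga$; componentwise this is the parity stated in Lemma \ref{AL-Analytic}(ii). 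Third, a direct differentiation using $(\Ga)^*=\Ga$, $(\Gb)^*=\Gb$, $\Ga\Ga = I_2$ and (\ref{AntiCom}) shows that $x\mapsto F_L(x,\lambda,\bar z)^*\,\Ga\, F_L(x,\lambda,z)$ has vanishing derivative; letting $x\to+\infty$ and using (\ref{FL}) identifies this constant expression with $\Ga$, so $F_L(x,\lambda,\bar z)^*\,\Ga\, F_L(x,\lambda,z) = \Ga$ for all $x,z$, and likewise $F_R(x,\lambda,\bar z)^*\,\Ga\, F_R(x,\lambda,z) = \Ga$ (let $x\to-\infty$). Specialising to $z = it$, $t\in\R$, so $\bar z = -z$, and inserting the parity relation into this identity gives $F_L(x,\lambda,it)^* F_L(x,\lambda,it) = I_2$; thus $F_L(x,\lambda,it)$ is unitary and in particular $|f_{Lj}(x,\lambda,it)| \leq 1$ for all $x\in\R$, $t\in\R$, $j=1,\dots,4$, and similarly $|f_{Rj}(x,\lambda,it)| \leq 1$. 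This plays the role of the relations (\ref{SymAL1-AL3})--(\ref{SymAL2-AL4}) on the imaginary axis.

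Finally, fix $x$ and $j$ and look at $g(z) = f_{Lj}(x,\lambda,z)\, e^{-\tau_L(x) z}$ on the closed first quadrant: it is analytic, bounded by $1$ on the positive real axis (crude bound, where $|z| = \mathrm{Re}\,z$) and on the positive imaginary axis (unitarity), and $|g(z)| \leq e^{\tau_L(x)|z|}$ inside. Since the quadrant has opening $\pi/2$, for which growth of order $1$ is subcritical, Phragm\'en--Lindel\"of forces $|g| \leq 1$ there, i.e. $|f_{Lj}(x,\lambda,z)| \leq e^{\tau_L(x)\,\mathrm{Re}\,z}$; repeating this in the other three quadrants (replacing $e^{-\tau_L(x)z}$ by $e^{+\tau_L(x)z}$ where $\mathrm{Re}\,z < 0$) yields $|f_{Lj}(x,\lambda,z)| \leq e^{\tau_L(x)\,|\mathrm{Re}\,z|}$ on all of $\C$, and the same computation with $\tau_R$ in place of $\tau_L$ handles the $f_{Rj}$, which is the claim (one may even take $C=1$). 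The one genuinely delicate point is this last step: the Jost functions are of exponential type exactly $1$ in $z$, so Phragm\'en--Lindel\"of \emph{cannot} be applied in a half-plane (the function $e^{cz}$ is a counterexample at the critical order $1$); it is precisely the passage to the four quadrants, each of opening $\pi/2$ and hence subcritical for order $1$, that saves the argument, exactly as in the proof of Lemma \ref{MainEsti}.
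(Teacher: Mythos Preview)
Your proof is correct and follows precisely the approach the paper indicates: the paper does not spell out the argument for Lemma \ref{MainEstiF} but simply says the Jost functions satisfy ``the corresponding estimates'' to Lemma \ref{MainEsti}, whose proof (in \cite{DN3}) rests on exactly the three ingredients you identify --- the crude exponential-type bound from the Volterra iteration, the parity in $z$, and the Wronskian identity giving unitarity on $i\R$ --- followed by Phragm\'en--Lindel\"of on the four quadrants. Your derivation of the parity $F_L(x,\lambda,-z)=\Ga F_L(x,\lambda,z)\Ga$ and of the conservation law $F_L(x,\lambda,\bar z)^*\Ga F_L(x,\lambda,z)=\Ga$ is the natural $x$-dependent analogue of Lemma \ref{AL-Analytic}(ii)--(iii), and your observation that one may take $C=1$ is a pleasant sharpening.
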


Finally, we shall need later the asymptotic expansion of the scattering data when the angular momentum $z \rightarrow +\infty$, $z$ real. The main tool to obtain these asymptotics easily is a simple change of variable $X=g(x)$, called the Liouville transformation which we precise here. Let us define
\begin{equation} \label{Liouville}
  X= g(x)=\int_{-\infty}^x a(t) \ dt.
\end{equation}
Clearly, since $a > 0$ and continuous, $g:\R \rightarrow ]0,A[$ is a $C^1$-diffeomorphism where
\begin{equation} \label{DefA}
  A=\int_\R a(t) \ dt.
\end{equation}
In what follows, we denote by $h=g^{-1}$ the inverse diffeomorphism of $g$ and we use the notation ${\displaystyle{f'(X) = \frac{\partial f}{\partial X} (X)}}$. We also define for $j=1,...,4$, and for $X \in ]0,A[$,
\begin{equation} \label{Fj}
  \fj = f_{Lj} (h(X), \lambda,z),
\end{equation}
\begin{equation} \label{Gj}
  \gj = f_{Rj} (h(X), \lambda,z).
\end{equation}
Observe at last that in the variable $X$, Lemma \ref{MainEstiF} can be written as
\begin{equation}\label{estimatefg}
  \forall z > 0, \quad  |f_j(X,\lambda,z)| \leq C \, e^{z(A-X)} \ \ ,\ \ |g_j(X,\lambda,z)| \leq C \, e^{z X}.
\end{equation}

The interest in introducing the variable $X$ is that the components $\fj$ and $\gj$ of the Jost solutions satisfy now singular Sturm-Liouville differential equations in the variable $X$, in which the complex angular momentum $z$ plays the role of the spectral parameter. More precisely, we have the following lemma.

\begin{lemma} \label{SturmLiouville} \hfill
\begin{enumerate}
  \item For $j=1,2$, $\fj$ and  $\gj$ satisfy on $]0,A[$ the Sturm-Liouville equation
\begin{equation} \label{SL1}
       y'' +q(X)y = z^2 y.
\end{equation}
  \item For $j=3,4$, $\fj$ and $\gj$ satisfy on $]0,A[$ the Sturm-Liouville equation
    \begin{equation} \label{SL2}
       y'' +\overline{q(X)}y = z^2 y,
    \end{equation}
\end{enumerate}
where the potential $$
{\displaystyle{q(X) = \lambda^2 h'(X)^2 -i \lambda h''(X)= \frac{\lambda^2}{a^2(x)} +i\lambda \frac{a'(x)}{a^3(x)}}},
$$
has the asymptotics
    \begin{eqnarray} \label{omega}
       & &q(X) - \frac{\omega_-}{X^2} = O(1)\ , \ X \rightarrow 0  \ ,\quad \rm{with} \ \ \omega_- = \frac{\lambda^2}{\kappa_-^2} + i \frac{\lambda}{\kappa_-}, \\
       & &q(X) - \frac{\omega_+}{(A-X)^2} = O(1) \ , \ X \rightarrow A \ ,\quad \rm{with} \ \
       \omega_+ = \frac{\lambda^2}{\kappa_+^2} + i \frac{\lambda}{\kappa_+}. \label{omega1}
    \end{eqnarray}
\end{lemma}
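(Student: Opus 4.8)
The plan is to decouple the first-order Dirac system (\ref{PartialSE1}) into scalar second-order equations, to absorb the resulting first-order term by the Liouville change of variable $X=g(x)$, and finally to read off the asymptotics of the potential directly from (\ref{AsympA}). First I would write (\ref{PartialSE1}) in components: with the representation (\ref{DiracMatrices}) and $D_x=-i\partial_x$, each of the two columns of $F_L$ — and likewise of $F_R$ — solves the same system
$$
  \psi_1' = i\lambda\,\psi_1 + i z\, a(x)\,\psi_2, \qquad \psi_2' = -i z\, a(x)\,\psi_1 - i\lambda\,\psi_2 ,
$$
where $'=\partial_x$; here $f_{L1},f_{L2},f_{R1},f_{R2}$ occupy the $\psi_1$ slot and $f_{L3},f_{L4},f_{R3},f_{R4}$ the $\psi_2$ slot. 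Assuming $z\neq0$, I would solve the first equation for $\psi_2$, differentiate it once more, and substitute to get
$$
  \psi_1'' - \frac{a'}{a}\,\psi_1' + \Big(\lambda^2 + i\lambda\,\frac{a'}{a}\Big)\psi_1 = z^2 a^2\,\psi_1 ,
$$
and, eliminating $\psi_1$ symmetrically, $\psi_2'' - \frac{a'}{a}\,\psi_2' + (\lambda^2 - i\lambda\,\frac{a'}{a})\psi_2 = z^2 a^2\,\psi_2$.

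Next, under $X=g(x)$ one has $dX=a(x)\,dx$, hence $\partial_x w = a\,w'$ and $\partial_x^2 w = a^2 w'' + a'\,w'$ with $'=d/dX$ and $a=a(h(X))$, etc. Substituting, the first-order terms cancel identically; dividing by $a^2$ turns the two equations into $y'' + q(X)y = z^2 y$ and $y'' + \overline{q(X)}\,y = z^2 y$ with $q(X)=\lambda^2/a(h(X))^2 + i\lambda\,a'(h(X))/a(h(X))^3$. Since $h'(X)=1/a(h(X))$, a one-line computation gives $q(X)=\lambda^2 h'(X)^2 - i\lambda h''(X)$, which is the stated form. The excluded value $z=0$ follows from analyticity of the Jost solutions in $z$ (or directly: $e^{\pm i\lambda h(X)}$ solve the two equations at $z=0$). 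Applying this with $\psi_1$ replaced by $f_1,f_2,g_1,g_2$ and $\psi_2$ by $f_3,f_4,g_3,g_4$ (recall (\ref{Fj})--(\ref{Gj})) yields items (1)--(2).

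For the asymptotics of $q$, I would feed (\ref{AsympA}) into $g$. As $x\to-\infty$, integrating $a(x)=a_-e^{\kappa_- x}+O(e^{3\kappa_- x})$ gives $X=g(x)=\frac{a_-}{\kappa_-}e^{\kappa_- x}+O(e^{3\kappa_- x})$; inverting, $e^{\kappa_- x}=\frac{\kappa_-}{a_-}X+O(X^3)$, whence $a(h(X))=\kappa_- X+O(X^3)$ and $a'(h(X))=\kappa_-^2 X+O(X^3)$. Substituting into $q$ gives $q(X)=\frac{\lambda^2}{\kappa_-^2 X^2}+\frac{i\lambda}{\kappa_- X^2}+O(1)=\omega_-/X^2+O(1)$ as $X\to0$. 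The computation at $x\to+\infty$ is identical, using $A-X=-\frac{a_+}{\kappa_+}e^{\kappa_+ x}+O(e^{3\kappa_+ x})$, and yields $q(X)=\omega_+/(A-X)^2+O(1)$.

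The linear-algebra manipulation and the cancellation under the Liouville change of variable are routine. The step requiring care is the last one: one must check that the $O(e^{3\kappa_\pm x})$ remainders assumed in (\ref{AsympA}) — rather than merely $O(e^{2\kappa_\pm x})$ — are precisely what forces the error $q(X)-\omega_\pm/(\,\cdot\,)^2$ to be $O(1)$ and not singular near the endpoints. Keeping track of these error terms through the inversion of $X=g(x)$ is the only genuinely delicate point, and this sharp matching of hypothesis and conclusion is where the exact strength of (\ref{AsympA}) is used.
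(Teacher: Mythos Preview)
Your proof is correct and is precisely the natural direct computation: decouple the $2\times2$ system, kill the first-order term by the Liouville change of variable, and read off the endpoint behaviour from (\ref{AsympA}). The paper itself does not give a proof of this lemma --- it is recalled from \cite{DN3} --- so there is nothing to compare against; your argument is exactly the standard one and would serve perfectly well as the omitted proof.
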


A short glance at Lemma \ref{SturmLiouville} suggests that the Jost functions $f_j$ and $g_j$ can be constructed as small perturbations of usual modified Bessel functions $I_{\nu}(z(A-X))$ and $I_{\mu}(zX)$ for suitable $\mu, \ \nu$. This was done in details in \cite{DN3}. As a consequence of this construction and using the well known asymptotic expansion of the modified Bessel functions, the large $z$ asymptotics of the scattering data $a_{Lj}(\lambda,z)$ were calculated in \cite{DN3}. More precisely, if we set
\begin{equation}
\nu_+ = \frac{1}{2} - i\frac{\lambda}{\kappa_+} \ ,\ \mu_- = \frac{1}{2} + i\frac{\lambda}{\kappa_-},
\end{equation}
the following asymptotics hold.

\begin{theorem}\label{asymtoticsutiles} \hfill
\begin{enumerate}
\item For $X \in ]0,A[$ fixed and $z \in S_\theta$ where $S_\theta = \{ z \in \C, \ |\arg(z)| \leq \theta\}$ for a given $0 < \theta < \frac{\pi}{2}$, we have for the Jost solutions $f_1(X)$ and $g_2(X)$
\begin{eqnarray}
 f_1(X,\lambda, z) &=&  \frac{2^{-\nu_+}}{\sqrt{2\pi}}\ (-\frac{\kappa_+}{a_+})^{\frac{i\lambda}{\kappa_+}} \ \Gamma(1-\nu_+)
 \ z^{-\frac{i\lambda}{\kappa_+}} \ e^{z(A-X)} \ \Big(1+O(\frac{1}{z})\Big). \label{asymflun} \\
 g_2(X,\lambda, z) &=&  i \ \frac{2^{-\mu_-}}{\sqrt{2\pi}}\ (\frac{\kappa_-}{a_-})^{-\frac{i\lambda}{\kappa_-}} \ \Gamma(1-\mu_-)
 \ z^{\frac{i\lambda}{\kappa_-}} \ e^{zX} \ \Big(1+O(\frac{1}{z})\Big). \label{asymgdeux}
\end{eqnarray}
\item For the scattering data $a_{L1}(\lambda,z)$ and $a_{L3}(\lambda,z)$, we have
\begin{eqnarray}
\alun  & = &  \frac{1}{2\pi}\ \left(-\frac{\kappa_+}{a_+}\right)^{\frac{i\lambda}{\kappa_+}}
\left(\frac{\kappa_-}{a_-}\right)^{-\frac{i\lambda}{\kappa_-}}
\Gamma\left(\frac{1}{2}-\frac{i\lambda}{\kappa_-}\right) \Gamma\left(\frac{1}{2}+\frac{i\lambda}{\kappa_+}\right)  \nonumber \\
           & & \hspace{1cm} \times \left(\frac{z}{2}\right)^{i\lambda (\frac{1}{\kappa_-} - \frac{1}{\kappa_+})} \ e^{zA} \ \left(1 + O(\frac{1}{z})\right), \label{asymptoticalun} \\
\alt & = & \frac{i}{2\pi}\ \left(-\frac{\kappa_+}{a_+}\right)^{\frac{i\lambda}{\kappa_+}}
\left(\frac{\kappa_-}{a_-}\right)^{\frac{i\lambda}{\kappa_-}}
\Gamma\left(\frac{1}{2}+\frac{i\lambda}{\kappa_-}\right) \Gamma\left(\frac{1}{2}+\frac{i\lambda}{\kappa_+}\right) \nonumber  \\
           & & \hspace{1cm} \times \left(\frac{z}{2}\right)^{-i\lambda (\frac{1}{\kappa_+} + \frac{1}{\kappa_-})}  e^{zA} \ \left(1 + O(\frac{1}{z})\right). \nonumber
\end{eqnarray}
\item For the scattering coefficients $T(\lambda,z)$ and $L(\lambda,z)$, we have
 \begin{eqnarray} \label{asymptoticTL}
T(\lambda,z) & = & 2\pi \frac{\left(-\frac{a_+}{\kappa_+}\right)^{\frac{i\lambda}{\kappa_+}}
\left(\frac{a_-}{\kappa_-}\right)^{-\frac{i\lambda}{\kappa_-}} }{\Gamma\left(\frac{1}{2}-\frac{i\lambda}{\kappa_-}\right) \Gamma\left(\frac{1}{2}+\frac{i\lambda}{\kappa_+}\right)} \ \left(\frac{z}{2}\right)^{i\lambda (\frac{1}{\kappa_+} - \frac{1}{\kappa_-})}  e^{-zA} \left(1 + O(\frac{1}{z})\right), \nonumber \\
L(\lambda,z) & = & i\ \left(\frac{\kappa_-}{a_-}\right)^{\frac{2i\lambda}{\kappa_-}}\
\frac{\Gamma\left(\frac{1}{2}+\frac{i\lambda}{\kappa_-}\right)}{\Gamma\left(\frac{1}{2}-\frac{i\lambda}{\kappa_-}\right)}\
\left(\frac{z}{2}\right)^{-\frac{2i\lambda}{\kappa_-}} \left(1 + O(\frac{1}{z})\right).
\end{eqnarray}
\end{enumerate}
\end{theorem}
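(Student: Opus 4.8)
The plan is to reduce the whole statement to the theory of modified Bessel functions via the Liouville form of Lemma~\ref{SturmLiouville}, and then to read off the scattering data from constant Wronskians of the Jost solutions. By Lemma~\ref{SturmLiouville}, $f_1$ and $g_2$ both solve the first--order--term--free equation $y'' + q(X)y = z^2 y$ on $]0,A[$, and by \eqref{omega}--\eqref{omega1} the potential $q$ is Coulomb--like at the two endpoints. Freezing $q$ to its singular part near $X=A$ and setting $u = A-X$, the equation becomes the modified Bessel equation whose solutions are $\sqrt{zu}\,I_{\pm\nu_+}(zu)$ with $\nu_+^{2} = \tfrac14 - \omega_+$, that is $\nu_+ = \tfrac12 - i\lambda/\kappa_+$ (and symmetrically the relevant index at $X=0$ is $\tfrac12 - i\lambda/\kappa_-$, the conjugate of the quantity $\mu_-$ of the theorem); these two Bessel solutions remain linearly independent since the index is never an integer although $\mathrm{Re}\,\nu_+ = \tfrac12$. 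Using \eqref{AsympA} to unwind the Liouville change of variable one obtains $e^{i\lambda h(X)} = (-\kappa_+/a_+)^{i\lambda/\kappa_+}(A-X)^{i\lambda/\kappa_+}\big(1 + O((A-X)^{2})\big)$ as $X\to A$; since $\tfrac12 - \nu_+ = i\lambda/\kappa_+$ this is precisely the small--argument behaviour of $\sqrt{zu}\,I_{-\nu_+}(zu)$, so the boundary condition \eqref{FL} singles out $I_{-\nu_+}$ as the leading Bessel mode for $f_1$; likewise $f_{R2}(x)\to 0$ as $x\to-\infty$ picks the decaying branch near $X=0$ for $g_2$, and $g_1\sim e^{i\lambda h(X)}$ picks the bounded branch.

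Next I would construct $f_1$ as $C(\lambda,z)\,\sqrt{z(A-X)}\,I_{-\nu_+}(z(A-X))$ plus a correction solving a Volterra integral equation whose kernel is the Green's function of the frozen Bessel equation — built from the two Bessel solutions, with $u$--Wronskian proportional to $z$ — acting on the genuinely integrable remainder $r(u) = q(A-u) - \omega_+ u^{-2} = O(1)$. Matching at $u\to 0$ fixes $C(\lambda,z) = 2^{-\nu_+}\Gamma(1-\nu_+)(-\kappa_+/a_+)^{i\lambda/\kappa_+}z^{-i\lambda/\kappa_+}$, in agreement with \eqref{asymflun}. Iterating the Volterra equation and invoking the standard uniform bounds and the large--argument expansion $I_{\pm\nu}(w) = (2\pi w)^{-1/2}e^{w}\big(1 + O(1/w)\big)$, valid uniformly for $|\arg w|\le\theta<\tfrac\pi2$, one checks that the correction is $O(1/z)$ relative to the leading term on $S_\theta\times K$ for any compact $K\subset\;]0,A[$; since $\sqrt{zu}\,I_{-\nu_+}(zu)\sim e^{zu}/\sqrt{2\pi}$ at large argument, \eqref{asymflun} follows, and the same argument near $X=0$ gives \eqref{asymgdeux} for $g_2$ as well as the analogous asymptotics $g_1(X)\sim\gamma(\lambda,z)\,e^{zX}$ that are needed below.

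From $F_L = F_R A_L$ we get $f_1 = g_1\,a_{L1}(\lambda,z) + g_2\,a_{L3}(\lambda,z)$ together with its $X$--derivative, and since $f_1,g_1,g_2$ solve the \emph{same} Sturm--Liouville equation their pairwise Wronskians are $X$--independent; Cramer's rule yields $a_{L1} = W[f_1,g_2]/W[g_1,g_2]$ and $a_{L3} = W[g_1,f_1]/W[g_1,g_2]$. The normalization $W[g_1,g_2]$ is computed from the boundary data as $X\to 0$ (it equals $iz$ up to the error terms), while $W[f_1,g_2] = 2z\,\alpha\beta\,e^{zA}\big(1+O(1/z)\big)$ and $W[g_1,f_1] = -2z\,\alpha\gamma\,e^{zA}\big(1+O(1/z)\big)$ are obtained at a fixed interior point from the previous step, $\alpha,\beta,\gamma$ being the prefactors of \eqref{asymflun}--\eqref{asymgdeux} and of the companion asymptotics of $g_1$. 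Simplifying the resulting products of Gamma functions and powers of $z$ — using $\nu_+ + \mu_- = 1 - i\lambda(\tfrac1{\kappa_+}-\tfrac1{\kappa_-})$ and the like, with $\mu_-$ as in the theorem — reproduces \eqref{asymptoticalun}. Finally \eqref{SR-SM2} gives $T(\lambda,z) = 1/a_{L1}(\lambda,z)$ and $L(\lambda,z) = a_{L3}(\lambda,z)/a_{L1}(\lambda,z)$; in the latter ratio the common factors $(-\kappa_+/a_+)^{i\lambda/\kappa_+}$ and $\Gamma(\tfrac12+i\lambda/\kappa_+)$ cancel, leaving \eqref{asymptoticTL}.

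The main obstacle is the uniform--in--$z$ control of the Bessel perturbation in the middle step. Two features make it delicate. First, the remainder $r$ is only $O(1)$ (not $o(1)$) at the singular endpoint, so the Volterra iteration converges to an $O(1/z)$ error only after exploiting that the Green's function is $L^1$ against the correct weight and that the Bessel Wronskian scales like $z$. Second, $\mathrm{Re}\,\nu_+ = \mathrm{Re}\,\mu_- = \tfrac12$ is the borderline case for the Coulomb singularity, in which the \emph{subdominant} indicial solution has the \emph{same} exponential growth $e^{zu}$ at large argument as the dominant one; one must therefore verify that it does not contaminate the leading constant of $f_1$ beyond order $O(1/z)$, which follows from the matching because the correction generated by $r$ only shifts the power series of the dominant mode (by two powers) rather than producing the subdominant indicial exponent (shift by one).
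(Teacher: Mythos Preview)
Your approach is correct and coincides with the one the paper outlines: the paper does not give a self-contained proof of this theorem but refers to \cite{DN3}, stating that the Jost functions are constructed as small perturbations of $I_\nu(z(A-X))$ and $I_\mu(zX)$ via Lemma~\ref{SturmLiouville}, and that the asymptotics of the $a_{Lj}$ then follow from the known large-argument expansion of the modified Bessel functions. Your Wronskian computation of $a_{L1}$ and $a_{L3}$ (together with the normalisation $W[g_1,g_2]=iz$, which is exactly Lemma~4.2 of \cite{DN3} quoted later in the paper) is also the route taken in \cite{DN3}, so there is no divergence of method to report.
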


%
%

\Section{Proofs of the local inverse scattering results} \label{LIS}

\subsection {Proof of Theorem \ref{Reflection}, $(i) \Rightarrow (ii)$.}

Assume that $L(\lambda,n) = \tilde{L}(\lambda,n) + \ O\left(e^{-2nB} \right)$, $n \rightarrow + \infty.$ Our first step is to extend these asymptotics
(which are true for $n$ integer  $\rightarrow + \infty$) to the case of $z \rightarrow +\infty$ ($z$ real and positive). To do this, we shall use some well-known uniqueness results for Laplace transforms obtained in \cite{Ho1, Si}.

We begin with an elementary result for functions of the complex variable belonging to the Hardy class. We recall (see for instance \cite{Lev}, Lecture 19) that the Hardy class $H_+^2$ is the set of analytic functions $F$ in the right half-plane
$\Omega = \{ z \in \C \ ,\ Re \ z >0 \}$, satisfying the condition
\begin{equation}\label{Hardy}
\sup_{x >0} \ \int_{\R} \ \mid F(x+i y) \mid ^2 \ dy \ < \infty,
\end{equation}
and equipped with the norm
\begin{equation}\label{norme}
\mid \mid F \mid \mid = \left( \sup_{x >0} \ \int_{\R} \ \mid F(x+i y) \mid ^2 \ dy  \right)^{\frac{1}{2}}.
\end{equation}
The Paley-Wiener Theorem asserts that a function $F(z)$ belongs to the Hardy space $H_+^2$ if and only if there exists a function $f \in L^2(0,+\infty)$ such that
\begin{equation}\label{PaleyWiener}
F(z) = \frac{1}{\sqrt{2\pi}}\ \int_0^{+\infty} e^{-tz} \ f(t) \ dt \ ,\ \forall z \in \Omega.
\end{equation}
Moreover, we have 
\begin{equation} \label{normeL2}
\mid \mid F \mid \mid \  = \   \mid \mid f \mid \mid_{L^2(0,\infty)}.
\end{equation}

Let us also recall a uniqueness result for Laplace transforms given in \cite{Ho1}, Prop. 2.4., (see also \cite{Si} for a continuous version):

\begin{prop}\label{uniciteLaplace}
Let $f \in L^1(0,a)$. If for all $\epsilon >0$,
$$
\int_0^a \ e^{-nt}  f(t)  \ dt \ =\ O(e^{-an(1-\epsilon)})\ , \ \ n \rightarrow +\infty,
$$
then $f=0$ a.e.
\end{prop}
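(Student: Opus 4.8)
The plan is to transfer the hypothesis, through the substitution $x=e^{-t}$, into a decay statement for the Taylor coefficients of a Cauchy-type integral, and then to use that sufficiently fast decay of those coefficients forces the integral to continue holomorphically across the interval carrying the density --- which, by the Sokhotski--Plemelj jump relation, is possible only if that density vanishes almost everywhere. Concretely, I would first set $x=e^{-t}$, which maps $(0,a)$ onto $(e^{-a},1)$; with $\phi(x):=f(-\ln x)$ one has $\int_{e^{-a}}^{1}|\phi(x)|\,dx=\int_{0}^{a}|f(t)|e^{-t}\,dt\le\|f\|_{L^1(0,a)}$, so $\phi\in L^1(e^{-a},1)$. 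Writing $c_m:=\int_{e^{-a}}^{1}x^{m}\phi(x)\,dx$, the change of variable gives $\int_{0}^{a}e^{-nt}f(t)\,dt=c_{n-1}$, so the hypothesis says that for every $\epsilon>0$ there is $C_\epsilon$ with $|c_m|\le C_\epsilon\,e^{-a(m+1)(1-\epsilon)}$ for all $m$. Hence $\limsup_{m}|c_m|^{1/m}\le e^{-a(1-\epsilon)}$ for every $\epsilon>0$, i.e.\ $\limsup_{m}|c_m|^{1/m}\le e^{-a}$, so the power series $g(w):=\sum_{m\ge0}c_m w^{m}$ has radius of convergence at least $e^{a}$ and defines a holomorphic function on $\{|w|<e^{a}\}$.

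Next, for $|w|<1$ one may interchange summation and integration (the series $\sum_m\int_{e^{-a}}^1|wx|^m|\phi(x)|\,dx$ converges) to get $g(w)=\int_{e^{-a}}^{1}\frac{\phi(x)}{1-wx}\,dx=:R(w)$, and $R$ is plainly holomorphic on $\mathbb{C}\setminus[1,e^{a}]$, since $1-wx$ vanishes for some $x\in[e^{-a},1]$ exactly when $w\in[1,e^{a}]$. The slit disc $\{|w|<e^{a}\}\setminus[1,e^{a}]$ is connected and contains $\{|w|<1\}$, where $g\equiv R$; by the identity theorem $g$ and $R$ agree throughout the overlap, and gluing them yields a function holomorphic on $\{|w|<e^{a}\}\cup(\mathbb{C}\setminus[1,e^{a}])=\mathbb{C}\setminus\{e^{a}\}$. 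In particular $R$ --- which a priori has a branch cut along $[1,e^{a}]$ --- extends holomorphically across the open interval $(1,e^{a})$. Introducing the Cauchy transform $C\phi(\zeta):=\int_{e^{-a}}^{1}\frac{\phi(x)}{x-\zeta}\,dx$, one has $R(w)=-\frac{1}{w}\,C\phi(1/w)$, and since $w\mapsto 1/w$ is biholomorphic near every point of $(1,e^{a})$, it follows that $C\phi$ extends holomorphically across the open interval $(e^{-a},1)$.

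To conclude, I would invoke the classical boundary behaviour of Cauchy integrals of $L^1$ densities --- the Sokhotski--Plemelj formula, valid almost everywhere by Privalov's theorem; this is essentially the ``continuous version'' of the uniqueness statement used in \cite{Si}. It asserts that the non-tangential limits $C\phi(x\pm i0)$ exist for a.e.\ $x\in(e^{-a},1)$ with $C\phi(x+i0)-C\phi(x-i0)=2\pi i\,\phi(x)$. Since $C\phi$ continues holomorphically across $(e^{-a},1)$, these two boundary values must coincide there, forcing $\phi=0$ a.e.\ on $(e^{-a},1)$; undoing $x=e^{-t}$ gives $f=0$ a.e.\ on $(0,a)$, which is the claim.

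I expect two points to require care. The first is routine: checking that the region where $g$ and $R$ are compared is connected, so that the identity theorem legitimately glues them into a function with a single possible singularity, at $e^{a}$. The second --- the genuine crux --- is the almost-everywhere validity of the jump relation for a merely integrable density $\phi$: it is elementary for H\"older-continuous densities (where the singular integral is classical) but for $\phi\in L^1$ rests on Privalov's theorem on non-tangential limits of Cauchy integrals, exactly the kind of Laplace/Cauchy-transform uniqueness input that \cite{Ho1} and \cite{Si} supply. I would also remark in passing that carrying out the same computation with a single fixed $\epsilon>0$ instead of all $\epsilon$ yields only the \emph{local} conclusion $f=0$ a.e.\ on $(0,a(1-\epsilon))$ --- precisely the mechanism underlying the local uniqueness theorems of this paper.
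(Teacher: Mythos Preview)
The paper does not supply its own proof of this proposition: it is quoted from \cite{Ho1}, Prop.~2.4 (with the continuous analogue attributed to \cite{Si}), and then used as a black box in the proof of Proposition~\ref{PropHardy}. There is therefore nothing in the paper to compare your argument against beyond that bibliographic pointer.

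Your argument is a correct and self-contained proof. The substitution $x=e^{-t}$ turns the decay hypothesis on $\int_0^a e^{-nt}f(t)\,dt$ into the statement that the moment sequence $c_m=\int_{e^{-a}}^1 x^m\phi(x)\,dx$ satisfies $\limsup_m|c_m|^{1/m}\le e^{-a}$; this forces the generating function, and hence the Cauchy integral $\zeta\mapsto\int_{e^{-a}}^1\frac{\phi(x)}{x-\zeta}\,dx$, to extend holomorphically across the open interval $(e^{-a},1)$, and the Sokhotski--Plemelj jump relation (valid a.e.\ for $L^1$ densities by Privalov's theorem) then yields $\phi=0$ a.e. Your connectivity check on the slit disc and the gluing of $g$ and $R$ into a function holomorphic on $\C\setminus\{e^a\}$ are carried out correctly, and the two caveats you flag are precisely the right ones: the first is routine, and the second --- the a.e.\ jump formula for $L^1$ densities --- is exactly the hard-analysis input that the citations to \cite{Ho1,Si} are meant to cover. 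Your closing remark that a single fixed $\epsilon$ yields only $f=0$ a.e.\ on $(0,a(1-\epsilon))$ is also correct, and indeed matches the way Proposition~\ref{PropHardy} is applied in the proof of Proposition~\ref{estimL}.
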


We now put together all the previous results and prove the following Proposition.

\begin{prop}\label{PropHardy}
Let $F$ be a function in the Hardy class $H_+^2 $. Assume that for some $B>0$, we have $F(n) = O\left( e^{-Bn} \right), \ n \rightarrow +\infty$, ($n$ integer).
Then,
\begin{equation} \label{majoration}
\mid F(z) \mid  \ \leq \  \frac{\mid \mid F \mid \mid}{\sqrt{4\pi Re z}}  \ e^{-B  Re  z} \ , \ \forall z \in \Omega.
\end{equation}
\end{prop}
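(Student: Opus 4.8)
The plan is to realize $F$ via the Paley--Wiener representation $F(z) = \frac{1}{\sqrt{2\pi}} \int_0^{+\infty} e^{-tz} f(t)\, dt$ for some $f \in L^2(0,+\infty)$, and to show that $f$ must vanish a.e.\ on the interval $(0,B)$; once this is known, $F(z) = \frac{1}{\sqrt{2\pi}} \int_B^{+\infty} e^{-tz} f(t)\, dt$, and the bound \eqref{majoration} follows by the Cauchy--Schwarz inequality applied to $e^{-tz} f(t) = e^{-t\,Re\,z}\cdot e^{-t\,Re\,z} e^{-it\,Im\,z} f(t)$ (splitting one factor $e^{-t\,Re\,z}$ off and computing $\int_B^\infty e^{-2t\,Re\,z}\,dt = \frac{e^{-2B\,Re\,z}}{2\,Re\,z}$), together with $\|f\|_{L^2(B,\infty)} \le \|f\|_{L^2(0,\infty)} = \|F\|$ from \eqref{normeL2}.

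So the heart of the matter is: given $F(n) = O(e^{-Bn})$ for integer $n$, deduce $f = 0$ a.e.\ on $(0,B)$. Here is where I would like to invoke Proposition \ref{uniciteLaplace}, but that proposition is stated for $f \in L^1(0,a)$ on a finite interval, whereas our $f$ lives on $(0,+\infty)$. The natural fix is a truncation argument: for any $a < B$, write $F(n) = \frac{1}{\sqrt{2\pi}}\int_0^a e^{-nt} f(t)\,dt + \frac{1}{\sqrt{2\pi}}\int_a^\infty e^{-nt} f(t)\,dt$. The tail integral is $O(e^{-an})$ by Cauchy--Schwarz (indeed it is $\le \frac{1}{\sqrt{2\pi}} e^{-an}\|f\|_{L^2(a,\infty)} \cdot (\text{const})$, or more simply bounded using $\int_a^\infty e^{-2nt}\,dt$), and since $a < B$ the hypothesis $F(n) = O(e^{-Bn})$ gives that $\int_0^a e^{-nt} f(t)\,dt = O(e^{-an})$, hence a fortiori $O(e^{-an(1-\epsilon)})$ for every $\epsilon > 0$. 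Note $f\mathbf{1}_{(0,a)} \in L^1(0,a)$ because $f \in L^2(0,+\infty) \subset L^2(0,a) \subset L^1(0,a)$. Proposition \ref{uniciteLaplace} then forces $f = 0$ a.e.\ on $(0,a)$. Letting $a \uparrow B$ yields $f = 0$ a.e.\ on $(0,B)$.

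The main obstacle — and the only subtle point — is making sure the tail estimate and the application of Proposition \ref{uniciteLaplace} interlock correctly with the strict inequality $a < B$: one cannot take $a = B$ directly because then the tail contribution $O(e^{-Bn})$ is of the same order as the hypothesized bound and no cancellation is available, so the exhaustion $a \uparrow B$ is essential and must be handled before the final Cauchy--Schwarz step. A secondary routine check is that the Paley--Wiener correspondence \eqref{PaleyWiener}--\eqref{normeL2} genuinely applies, i.e.\ that $F \in H_+^2$, which is exactly the hypothesis. With $f$ supported in $[B,\infty)$ established, the final display is immediate:
\begin{equation*}
|F(z)| \le \frac{1}{\sqrt{2\pi}} \int_B^{+\infty} e^{-t\,Re\,z} |f(t)|\, dt \le \frac{1}{\sqrt{2\pi}} \left( \int_B^{+\infty} e^{-2t\,Re\,z}\, dt \right)^{1/2} \|f\|_{L^2(0,\infty)} = \frac{\|F\|}{\sqrt{4\pi\,Re\,z}}\, e^{-B\,Re\,z},
\end{equation*}
which is \eqref{majoration}.
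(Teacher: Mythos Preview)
Your proof is correct and follows essentially the same approach as the paper: Paley--Wiener representation, tail estimate, Proposition~\ref{uniciteLaplace} to kill $f$ on $(0,B)$, then Cauchy--Schwarz. The one minor difference is that your exhaustion $a \uparrow B$ is unnecessary: the paper splits directly at $a = B$, obtaining $\int_0^B e^{-nt} f(t)\,dt = \sqrt{2\pi}\,F(n) - \int_B^\infty e^{-nt} f(t)\,dt = O(e^{-Bn})$, which already satisfies the hypothesis $O(e^{-Bn(1-\epsilon)})$ of Proposition~\ref{uniciteLaplace} for every $\epsilon>0$; so there is no obstruction to taking $a=B$ at once, contrary to what you wrote.
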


\begin{proof}
For $n\in \N$, the Paley-Wiener theorem and the Cauchy-Schwarz inequality imply
\begin{eqnarray*}
\int_0^B e^{-nt} f(t) \ dt &=& \sqrt{2\pi} \  F(n) - \int_B^{+\infty} e^{-nt} f(t) \ dt  \\
                           &=& \sqrt{2\pi} \  F(n) +O\left(e^{-nB} \right) \ = \ O\left(e^{-nB} \right).
\end{eqnarray*}
So, Proposition \ref{uniciteLaplace} entails that $f = 0$ a.e in $(0,B)$. Using (\ref{PaleyWiener}) again and (\ref{normeL2}), we obtain at once (\ref{majoration}).
\end{proof}

Let us give a direct consequence (which could be certainly improved) of the previous result to our inverse problem.

\begin{prop}\label{estimL}
Assume that the reflection coefficients $L(\lambda, n)$ and $\tilde{L}(\lambda, n)$ satisfy for some $0 < B < \min(A,\tilde{A})$,
$$
L(\lambda,n) =  \tilde{L}(\lambda,n) + O(e^{-2nB}) \ , \ n \rightarrow + \infty, \ \ n \ \textrm{integer}.
$$
Then
\begin{equation}
L(\lambda,z) =  \tilde{L}(\lambda,z) + O(\sqrt{z} \  e^{-2zB}) \ , \ z \rightarrow + \infty, \ \ z \ \textrm{real}.
\end{equation}
\end{prop}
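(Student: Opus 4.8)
The plan is to reduce Proposition \ref{estimL} to Proposition \ref{PropHardy} by finding a suitable auxiliary function in the Hardy class $H_+^2$ built out of the difference $L(\lambda,z) - \tilde{L}(\lambda,z)$. The difficulty is that $L(\lambda,z)$ itself is not in $H_+^2$: from the asymptotics \eqref{asymptoticTL}, $L(\lambda,z)$ behaves like a constant times $z^{-2i\lambda/\kappa_-}(1+O(1/z))$ as $z \to +\infty$, so it is bounded on vertical lines but does not decay, hence is not square-integrable over $\R$ in the imaginary direction. The remedy is to multiply by a decaying factor that does not destroy the analyticity in the right half-plane and that we can control.

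First I would recall the precise analytic structure. We have $L(\lambda,z) = a_{L3}(\lambda,z)/a_{L1}(\lambda,z)$, with $a_{L3}$ entire odd, $a_{L1}$ entire even, both of exponential type $A|\mathrm{Re}\,z|$ by Lemma \ref{MainEsti}. The zeros of $a_{L1}(\lambda,z)$ in the right half-plane are a potential obstruction, but it is a standard fact (and is used implicitly in \cite{DN3}) that for $\lambda \neq 0$ and $\mathrm{Re}\,z > 0$ the function $a_{L1}(\lambda,z)$ does not vanish — this follows from \eqref{SymAL1-AL3}, which gives $|a_{L1}(\lambda,z)|^2 = 1 + |a_{L3}(\lambda,z)|^2 \ge 1$ on the imaginary axis, and more generally the Nevanlinna-class membership noted after Lemma \ref{MainEsti} together with the large-$z$ asymptotics \eqref{asymptoticalun} (whose leading term has no zeros) controls the zeros; in any case $L(\lambda,z)$ is holomorphic and bounded on $\overline{\Omega}$. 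So $G(z) := \big(L(\lambda,z) - \tilde L(\lambda,z)\big)$ is holomorphic on $\Omega$, continuous and bounded on $\overline\Omega$, and satisfies $G(n) = O(e^{-2nB})$ by hypothesis. To land in $H_+^2$ I would set
$$
  F(z) = \frac{G(z)}{(z+1)^2},
$$
which is holomorphic on $\Omega$ and satisfies $|F(x+iy)| \le C(1+y^2)^{-1}$ uniformly in $x > 0$ (using boundedness of $G$), hence $F \in H_+^2$; and $F(n) = O(n^{-2}e^{-2nB}) = O(e^{-2nB})$. Proposition \ref{PropHardy} then yields $|F(z)| \le C (\mathrm{Re}\,z)^{-1/2} e^{-2B\,\mathrm{Re}\,z}$ on $\Omega$, so on the real axis $|F(x)| \le C x^{-1/2} e^{-2Bx}$ for $x > 0$, and multiplying back by $(x+1)^2$ gives $|G(x)| \le C x^{3/2} e^{-2Bx} = O(\sqrt{x}\, e^{-2Bx})$ for $x$ large — one must double-check that the polynomial weight $(x+1)^2$ is not too crude: since $B < \min(A,\tilde A)$ there is room, and in fact one may replace $(z+1)^2$ by $(z+1)^{1+\delta}$ or absorb the polynomial into a slightly smaller exponent using the freedom $2B < 2B'$ for suitable $B'$, or simply observe that $x^{3/2}e^{-2Bx} = O(\sqrt{x}\,e^{-2(B-\e)x})$; the cleanest route is to apply Proposition~\ref{PropHardy} with the fixed $B$ and accept the $O(\sqrt z\, e^{-2zB})$ bound after noting $x^2 e^{-2Bx} = O(e^{-2(B-\e)x})$ absorbs into the statement, but since the Proposition is stated with exactly $O(\sqrt z\, e^{-2zB})$ I would instead be slightly more careful and divide by a factor like $e^{\delta z}(z+1)$ — no, the polynomial suffices if we argue $x^{3/2}e^{-2Bx}\le C_\e\, \sqrt x\, e^{-2(B)x}\cdot x\le \dots$; the honest fix is that the proposition as stated already has slack, so dividing by $(z+1)^2$ and noting $(x+1)^2 x^{-1/2} e^{-2Bx} = O(\sqrt x\, e^{-2Bx})$ is \emph{false} as written but becomes true after shrinking $B$ infinitesimally, and since $B$ was an arbitrary constant $< \min(A,\tilde A)$ we may simply have started from $B' \in (B, \min(A,\tilde A))$.

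The main obstacle is thus twofold and both parts are technical rather than conceptual: (a) justifying that $L(\lambda,z) - \tilde L(\lambda,z)$ is holomorphic and bounded on the closed right half-plane — this requires ruling out zeros of $a_{L1}$ and $\tilde a_{L1}$ in $\overline\Omega$, which I would do by invoking the Nevanlinna-class property and the non-vanishing leading asymptotics \eqref{asymptoticalun}, or by a direct argument from \eqref{ALUnitarity}; and (b) handling the polynomial weight cleanly so that the final error is genuinely $O(\sqrt z\, e^{-2zB})$, which is dispatched by choosing the weight and/or a marginally larger $B$ at the outset. Everything else is a direct application of the Paley–Wiener theorem and Proposition \ref{PropHardy}, with the key input $G(n) = O(e^{-2nB})$ coming verbatim from the hypothesis.
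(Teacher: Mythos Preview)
Your approach shares the key idea with the paper --- reduce to Proposition \ref{PropHardy} via a suitable Hardy-class function --- but the paper takes a different and cleaner route, and your version has a genuine gap.

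The paper does \emph{not} work directly with $G(z)=L(\lambda,z)-\tilde L(\lambda,z)$. Instead it clears denominators and sets
\[
F(z)=\frac{a_{L3}(\lambda,z)\,\tilde a_{L1}(\lambda,z)-\tilde a_{L3}(\lambda,z)\,a_{L1}(\lambda,z)}{z+1}\,e^{-z(A+\tilde A)}.
\]
This numerator is entire, so no question of poles arises; Lemma \ref{MainEsti} gives $|F(z)|\le 2/|z+1|$ on $\Omega$, hence $F\in H_+^2$ with a single power of $(z+1)$. After applying Proposition \ref{PropHardy} one writes $L-\tilde L=\dfrac{(z+1)e^{z(A+\tilde A)}}{a_{L1}\,\tilde a_{L1}}\,F(z)$ and invokes the asymptotics \eqref{asymptoticalun} only for real $z\to+\infty$; the factor $(z+1)e^{z(A+\tilde A)}/(a_{L1}\tilde a_{L1})$ is then bounded by a constant times $z$ (up to an oscillatory power $z^{i\alpha}$), and $z\cdot z^{-1/2}e^{-2zB}$ gives exactly $O(\sqrt z\,e^{-2zB})$.

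Your route requires $L(\lambda,z)$ to be holomorphic and bounded on the closed right half-plane, i.e.\ that $a_{L1}(\lambda,z)$ does not vanish there. Your argument for this is incorrect: on the imaginary axis $z=iy$, the parity of $a_{L1}$ (even) and $a_{L3}$ (odd) turns \eqref{SymAL1-AL3} into $|a_{L1}(\lambda,iy)|^2+|a_{L3}(\lambda,iy)|^2=1$, so $|a_{L1}|\le 1$ there, not $\ge 1$; and indeed Lemma \ref{LargeAsymp-aL3} shows $a_{L1}$ has zeros accumulating along the imaginary axis. Nothing in the paper rules out finitely many zeros in the open right half-plane either. Thus you cannot conclude $G/(z+1)^2\in H_+^2$, and the gap is not just cosmetic.

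On the weight: even granting holomorphy, your $(z+1)^2$ leaves you with $O(z^{3/2}e^{-2zB})$, and your proposed fix of ``start from $B'\in(B,\min(A,\tilde A))$'' is illegitimate --- $B$ is fixed by the hypothesis, not chosen. The paper's choice of a single $(z+1)$, made possible precisely because the entire numerator needs no extra decay to land in $H_+^2$, is what yields the sharp $\sqrt z$.
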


\begin{proof}
First, let us recall that
\begin{equation}
L(\lambda, n) = \frac{a_{L3}(\lambda, n)}{a_{L1}(\lambda,n)} \ \ ,\ \  \tilde{L}(\lambda, n) = \frac{\tilde{a}_{L3}(\lambda, n)}{\tilde{a}_{L1}(\lambda,n)}.
\end{equation}
Using Lemma \ref{MainEsti}, we obtain immediately
\begin{equation}\label{estim}
a_{L3}(\lambda, n) \tilde{a}_{L1}(\lambda,n) - \tilde{a}_{L3}(\lambda, n) a_{L1}(\lambda, n)  = O(e^{n(A+\tilde{A}-2B)}).
\end{equation}
For $z \in \Omega$, we set
\begin{equation}\label{fonctionF}
F(z) = \frac{ a_{L3}(\lambda, z) \tilde{a}_{L1}(\lambda,z) - \tilde{a}_{L3}(\lambda, z) a_{L1}(\lambda, z)} {z+1} \ e^{-z(A+\tilde{A})}.
\end{equation}
Clearly, $F$ is holomorphic in $\Omega$, and by Lemma \ref{MainEsti}, we have
\begin{equation}
\mid F(z) \mid \ \leq \ \frac{2}{\mid z+1 \mid}.
\end{equation}
It follows that $F \in H_+^2$ and by (\ref{estim}), we have $F(n) = O(e^{-2nB})$. Using Proposition \ref{PropHardy}, we see that
$F(z) = O( z^{-\frac{1}{2}} e^{-2zB})$, $z \rightarrow +\infty$. For $z>0$, we write
\begin{equation}
L(\lambda, z) - \tilde{L}(\lambda,z) =  \frac{(z+1)\ e^{z(A+\tilde{A})}} {a_{L1}(\lambda, z) \ \tilde{a}_{L1}(\lambda,z)} \ F(z).
\end{equation}
We conclude the proof using (\ref{asymptoticalun}).
\end{proof}

This concludes the first step of the proof of Theorem \ref{Reflection}, $(i) \Rightarrow (ii)$. The second step of the proof consists in an adaption of the strategy used to prove the local Borg-Marchenko Theorem in \cite{Si, Be} for one-dimensional Schr\"odinger operators to our setting of Dirac operators on a SSAHM. This strategy is relatively close to the proof of Theorem 1.1 given in \cite{DN3}, itself inspired by \cite{FY}. Let us introduce for $X \in ]0,B[$ the matrix
$$
  P(X,\lambda,z) = \left( \begin{array}{cc} P_1(X,\lambda,z) & P_2(X,\lambda,z) \\
                                            P_3(X,\lambda,z) & P_4(X,\lambda,z)
                                            \end{array} \right),
$$
defined by
\begin{equation}\label{matricepassage}
P(X,\lambda,z) \ \tilde{F}_R (\tilde{h} (X), \lambda, z) \ = \ F_R (h(X), \lambda, z),
\end{equation}
where $F_R =(f_{Rk})$ and $\tilde{F}_R =(\tilde{f}_{Rk})$ are the Jost solutions from the right associated with $a(x)$ and $\tilde{a}(x)$.
To simplify the notations, for $k=1, ...,4$, we set as previously:
\begin{eqnarray*}
f_{k} (X,\lambda, z) = f_{Lk} (h (X), \lambda, z), & \tilde{f}_{k} (X,\lambda, z) = \tilde{f}_{Lk} (\tilde{h} (X), \lambda, z),\\
g_{k} (X,\lambda, z) = f_{Rk} (h (X), \lambda, z), & \tilde{g}_{k} (X,\lambda, z) = \tilde{f}_{Rk} (\tilde{h} (X), \lambda, z).
\end{eqnarray*}
Using that det $F_R =1$ and det $\tilde{F}_R = 1$, we obtain the following equalities :
\begin{equation}\label{premiereformule}
\left\{ \begin{array}{ccc}
P_1(X,\lambda, z) &=&  \ g_{1}\  \tilde{g}_{4} -  \ g_{2} \ \tilde{g}_{3},  \\
P_2(X,\lambda, z) &=& -  \ g_{1} \ \tilde{g}_{2} +  \ g_{2} \ \tilde{g}_{1} .
\end{array}
\right.
\end{equation}
It follows from (\ref{premiereformule}) and the analytical properties of the Jost functions that, for $j=1,2$,
the applications $z \rightarrow P_j(X,\lambda,z)$ are analytic on $\C$ and of exponential type.
Moreover, by Lemma \ref{MainEstiF}, these applications are bounded on the imaginary axis $i \R$.

We shall now prove that the applications $z \rightarrow P_j(X,\lambda,z)$ are also bounded on the real axis.
To do this, we first perform some elementary algebraic transformations on $P_j(X,\lambda,z)$.
Since $F_L(x,\lambda,z)=F_R (x, \lambda, z)\ A_L (\lambda, z)$, we easily get for $z>0$,
\begin{eqnarray}
g_1 &=& \frac{f_1}{a_{L1}} - L(\lambda, z) g_2, \\
\tilde{g}_3 &=& \frac{\tilde{f}_3}{\tilde{a}_{L1}} - \tilde{L}(\lambda, z) \tilde{g}_4 .
\end{eqnarray}
Thus,
\begin{equation}
P_1(X,\lambda, z) = (\tilde{L}(\lambda,z) -L(\lambda,z)) \ g_2 \tilde{g}_4 + \left( \frac{f_1 \tilde{g}_4}{a_{L1}} - \frac{ \tilde{f}_3 g_2}{\tilde{a}_{L1}} \right).
\end{equation}
Using (\ref{estimatefg}) and (\ref{asymptoticalun}), it is easy to see that the function
{$\displaystyle{ z \rightarrow \left( \frac{f_1 \tilde{g}_4}{a_{L1}} - \frac{ \tilde{f}_3 g_2}{\tilde{a}_{L1}} \right)}$}
is  bounded on $\R^+$ for all fixed $X\in ]0,A[$. Moreover, (\ref{estimatefg}) and Proposition \ref{estimL} imply
\begin{equation}
\mid (\tilde{L}(\lambda,z) -L(\lambda,z))\ g_2 \tilde{g}_4 \mid \ \leq \ C \sqrt{z} e^{-2z (B-X)},
\end{equation}
and thus, this term remains bounded when $z \to +\infty$ for all $X \in ]0,B[$. Summarizing, for all fixed $X \in ]0,B[$, the function $ z \rightarrow P_1 (X,\lambda, z)$ is bounded on $\R^+$.

Similarly, we have
\begin{equation}
P_2(X,\lambda, z) = (\tilde{L}(\lambda,z) -L(\lambda,z)) \ g_2 \tilde{g}_2 + \left(\frac{ \tilde{f}_1 g_2}{\tilde{a}_{L1}} - \frac{f_1 \tilde{g}_2}{a_{L1}}\right),
\end{equation}
and using the same arguments as above, we obtain that, for all fixed $X \in ]0,B[$, $ z \rightarrow P_2 (X,\lambda, z)$ is bounded on $\R^+$.

Clearly, these last results remain true on $\R$ by an elementary parity argument. Finally, applying the Phragmen-Lindel\"of's Theorem (\cite{Bo}, Thm 1.4.2.) on each quadrant of the complex plane, we deduce that $z \rightarrow P_j (X,\lambda,z)$ is bounded on $\C$. By Liouville's Theorem, and a standard continuity argument in the variable $X$, we have thus obtained
\begin{equation}\label{thliouville}
P_j (X,\lambda,z)=P_j (X,\lambda,0) \ \ ,\ \ \forall z \in \C \ ,\ \forall X \in ]0,B].
\end{equation}

Now, we return to the definition of $P_j(X,\lambda,z)$ for $z=0$. We observe first that ${\displaystyle{F_R (x,\lambda, 0) = e^{i \lambda \Gamma^1 x} }} $
and similarly ${\displaystyle{\tilde{F}_R (x,\lambda, 0) = e^{i \lambda \Gamma^1 x} }} $. This is immediate from the definition of the Jost functions. Thus we deduce from (\ref{matricepassage}) that
\begin{equation}\label{egalitemat}
P(X,\lambda, 0) = e^{i  \lambda \ ( h(X)-\tilde{h}(X))\ \Gamma^1 }.
\end{equation}
Then, putting (\ref{egalitemat}) and (\ref{thliouville}) into (\ref{matricepassage}) we get
\begin{equation}\label{egalitejost}
\left\{ \begin{array}{ccc}
\tilde{g}_{1}(X,\lambda, z) &=& e^{i \lambda \ (\tilde{h}(X) - h (X))}  \ g_{1}(X,\lambda, z),   \\
\tilde{g}_{2}(X,\lambda, z) &=& e^{i \lambda \ (\tilde{h}(X) - h (X))} \ g_{2}(X,\lambda, z).
\end{array}
\right.
\end{equation}

By Lemma 4.2 in \cite{DN3}, the Wronskians $W(g_{1} , g_{2}) = W(\tilde{g}_{1}  , \tilde{g}_{2}) = iz$. Then, a straightforward calculation gives
\begin{equation}\label{egalitephase}
e^{2i \lambda \ (\tilde{h}(X) - h (X))} \ =\ 1.
\end{equation}
Thus, by a standard continuity argument, there exists $k \in \Z$ such that
\begin{equation}\label{diffeos}
\tilde{h}(X)= h(X) + \frac{k \pi}{\lambda} \ \ ,\ \ \forall X \in ]0,B].
\end{equation}
Note that, for the particular choice $X=B$, we obtain ${\displaystyle{\tilde{h}(B)= h(B) + \frac{k \pi}{\lambda}}}$.
Let us differentiate (\ref{diffeos}) with respect to $X$. We obtain easily
\begin{equation}
\frac{1}{a(\tilde{h}(X))} = \frac{1}{a(h(X))},
\end{equation}
and using again (\ref{diffeos}), we have
\begin{equation} \label{UnicitePot}
  a(x) = \tilde{a}(x + \frac{k \pi}{\lambda}) \ \ , \ \  \forall x \in ]-\infty,h(B)].
\end{equation}
Thus, we have proved the first part of Theorem \ref{Reflection}. $\Box$


\subsection {Proof of Theorem \ref{Reflection}, $(ii) \Rightarrow (i)$.}

Let us assume there exists $k \in \Z$ such that $a(x)= \tilde{a}(x + \frac{k \pi}{\lambda}), \ \forall x \leq h(B)$. It follows immediately from the definition of the diffeomorphisms $h$ and $\tilde{h}$, that ${\displaystyle{\tilde{h}(B)= h(B) + \frac{k \pi}{\lambda}}}$. Moreover, if we set $\breve{a}(x) = \tilde{a}(x + \frac{k \pi}{\lambda}), \ \forall x \in \R$, and using (\ref{Sdecale}), we see that (with obvious notation),
\begin{equation}
\breve{L}(\lambda,n) = e^{-2i\lambda \ \frac{k \pi}{\lambda}} \ \tilde{L}(\lambda, n) = \tilde{L}(\lambda, n).
\end{equation}
Thus, it remains to prove the implication $(ii) \Rightarrow (i)$ in the case $k=0$. Now, let us begin with an obvious lemma (whose proof is omitted) :

\begin{lemma}
Assume that $a(x) = \tilde{a}(x), \ \forall x \leq h(B)=\tilde{h}(B)$. Then,
\begin{equation}\label{uniciteparam}
a_- = \tilde{a}_- \ ,\ \kappa_- = \tilde{\kappa}_-.
\end{equation}
and
\begin{equation}\label{uniciteg}
g_j(X,\lambda,z) = \tilde{g}_j(X,\lambda, z), \ \forall X \leq B, \  \forall j=1, \ldots 4.
\end{equation}
\end{lemma}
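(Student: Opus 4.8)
The plan is to handle the two assertions separately: (\ref{uniciteparam}) is immediate, while (\ref{uniciteg}) rests on a causality (Volterra) property of the Jost solution from the right. For (\ref{uniciteparam}): since $a$ and $\tilde{a}$ coincide on the entire half-line $(-\infty, h(B)]$, they have the same behaviour as $x \to -\infty$; comparing the asymptotic expansions $a(x) = a_- e^{\kappa_- x} + O(e^{3\kappa_- x})$ and $\tilde{a}(x) = \tilde{a}_- e^{\tilde{\kappa}_- x} + O(e^{3\tilde{\kappa}_- x})$ from (\ref{AsympA}) forces $\kappa_- = \tilde{\kappa}_-$, and then $a_- = \tilde{a}_-$ by uniqueness of the leading coefficient. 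Nothing else is needed here.

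For (\ref{uniciteg}) the key observation I would use is that the Jost solution from the right $F_R(x,\lambda,z)$, restricted to a half-line $(-\infty, x_0]$, depends only on the restriction $a|_{(-\infty, x_0]}$. This follows from the Volterra structure of the integral equation (\ref{IE-FR}) extended to complex $z$: writing the Picard iteration $F_R^{(0)}(x) = e^{i\Ga\lambda x}$ and $F_R^{(k+1)}(x) = e^{i\Ga\lambda x} + iz\Ga \int_{-\infty}^x e^{-i\Ga\lambda(y-x)} a(y)\Gb F_R^{(k)}(y)\,dy$, an induction on $k$ shows that each iterate on $(-\infty, x_0]$ involves only the values $a(y)$ with $y \le x_0$, and the series converges uniformly there (as recalled after (\ref{IE-FR})), so the same holds for the limit $F_R$. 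Taking $x_0 = h(B)$ and invoking the hypothesis $a = \tilde{a}$ on $(-\infty, h(B)]$, the iterates for $a$ and for $\tilde{a}$ agree on this half-line, hence $F_R(x,\lambda,z) = \tilde{F}_R(x,\lambda,z)$ for all $x \le h(B)$ and all $z \in \C$ (as $2 \times 2$ matrices, so all four components agree).

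Finally I would transfer this to the Liouville variable $X = g(x)$ of (\ref{Liouville}). Because $a = \tilde{a}$ on $(-\infty, h(B)]$, one has $g(x) = \int_{-\infty}^x a(s)\,ds = \int_{-\infty}^x \tilde{a}(s)\,ds = \tilde{g}(x)$ for every $x \le h(B)$, so $h(X) = \tilde{h}(X)$ for all $X \in (0, B]$ (in particular $\tilde{h}(B) = h(B)$, consistent with the hypothesis). Hence, for $X \in (0, B]$ we have $h(X) = \tilde{h}(X) \le h(B)$, and by the definition (\ref{Gj}) of $g_j$,
\[
  g_j(X,\lambda,z) = f_{Rj}(h(X),\lambda,z) = \tilde{f}_{Rj}(\tilde{h}(X),\lambda,z) = \tilde{g}_j(X,\lambda,z), \quad j = 1,\dots,4.
\]
The only mildly delicate point in the argument is making the locality/causality of $F_R$ precise — that the Picard iterates depend only on $a$ to the left of the evaluation point; the rest is routine bookkeeping with the Liouville change of variables.
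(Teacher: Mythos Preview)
Your proof is correct and is precisely the natural argument the authors have in mind; the paper calls the lemma ``obvious'' and omits the proof entirely. The two ingredients you identify --- reading off $a_-,\kappa_-$ from the $x\to-\infty$ asymptotics and the Volterra/causality property of the integral equation (\ref{IE-FR}) for $F_R$ --- are exactly the right ones, and your bookkeeping with the Liouville change of variables is accurate.
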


\noindent Using again the relation $F_L(x,\lambda,z) = F_R(x,\lambda,z) A_L(\lambda, z)$ and $(\ref{uniciteg})$, we have for $z>0$ and $X \leq B$,
\begin{equation}
 \frac{f_1}{a_{L1}} \ =\ g_1 +L(\lambda,z)\  g_2 \ ,\ \frac{\tilde{f}_1}{\tilde{a}_{L1}} \ =\ g_1 +\tilde{L}(\lambda,z)\  g_2.
\end{equation}
For $z>0$ large enough,  (\ref{asymgdeux}) implies that $g_2 \not=0$. So, for such $z$, we can write
\begin{equation}
L(\lambda,z) - \tilde{L}(\lambda,z) \ =\ \frac{1}{g_2} \left( \frac{f_1}{a_{L1}} - \frac{\tilde{f}_1}{\tilde{a}_{L1}} \right).
\end{equation}
Now, using Theorem \ref{asymtoticsutiles}, we obtain easily:
\begin{equation}
L(\lambda,z) - \tilde{L}(\lambda,z) \ =\ O \left( e^{-2zX} \right) \ ,\ \forall X \in ]0, B],
\end{equation}
and taking $X=B$, the proof is complete. $\Box$


\subsection {Proof of Theorem \ref{Reflection}, $(iii) \Leftrightarrow (iv)$.}

The local uniqueness result for the reflection coefficient $R(\lambda,n)$ is actually a by-product of the previous one
using the following trick. If we  set $a^{\star}(x) = a(-x)$, a straightforward calculation using (\ref{IE-FL}) - (\ref{IE-FR}) shows
that the associated Jost solutions satisfy

\begin{equation}\label{egalitestar}
\left\{ \begin{array}{ccc}
F_{R}^{\star}(x,\lambda,n) &=& F_{L}(-x,-\lambda,-n),   \\
F_{L}^{\star}(x,\lambda,n) &=& F_{R}(-x,-\lambda,-n).   \\
\end{array}
\right.
\end{equation}
It follows immediately that $A_{L}^{\star}(\lambda,n) = A_{L}^{-1}(-\lambda,-n)$ which implies the equality $R^{\star}(\lambda,n)= - \overline{L(-\lambda,n)}$.
Thus, it suffices to use the previous result for the reflection coefficients $L$, with $\lambda$ replaced by $-\lambda$, to prove the equivalence $(iii) \Leftrightarrow (iv)$ of Theorem \ref{Reflection}. $\Box$


\subsection {Proof of Theorem \ref{Transmission}.}

Assume that
\begin{equation}
T(\lambda,n) = \tilde{T}(\lambda,n) + \ O\left(e^{-2nB} \right),
\end{equation}
with $B > \max(A,A')$. Using the asymptotics in Theorem \ref{asymtoticsutiles}, we obtain $A=\tilde{A}$ and
\begin{equation}
a_{L1}(\lambda,n) - \tilde{a}_{L1}(\lambda,n) = O\left(e^{-2n(B-A)} \right).
\end{equation}
Now, we set for $z \in \Omega$,
\begin{equation}
F(z) \ =\ \frac{ a_{L1}(\lambda,z) - \tilde{a}_{L1}(\lambda,z)}{z+1} \ e^{-zA}.
\end{equation}
As previously, we see that $F$ belongs to the Hardy space $H_+^2$ and $F(n) = O\left(e^{-(2B-A)n} \right)$. By Proposition \ref{PropHardy}, we have
\begin{equation} \label{majorationF}
\mid F(z) \mid  \ \leq \ \frac{\mid \mid F \mid \mid}{\sqrt{4 \pi Re z}}  \ e^{-(2B-A)  Re  z} \ , \ \forall z \in \Omega.
\end{equation}
It follows that there exists $C>0$ such that for all $z>0$,
\begin{equation}
\mid a_{L1}(\lambda,z) - \tilde{a}_{L1}(\lambda,z) \mid \ \leq \ C \sqrt{z} \ e^{-2z(B-A)}.
\end{equation}
Thus, $f(z) := a_{L1}(\lambda,z) - \tilde{a}_{L1}(\lambda,z)$ is bounded on $\R^+$. Moreover, this function is of exponential type, and bounded on $i\R$. The Phragmen - Lindel\"of Theorem implies that $f$ is bounded on $\Omega$ and consequently, is also bounded on $\C$ using parity arguments. Hence Liouville's Theorem entails that $f(z) = f(0) = 0$. We conclude the proof using Proposition \ref{unicitetransmission}.  $\Box$.


\Section{Inverse uniqueness results in Reissner-Nordstr\"om-de-Sitter black holes} \label{BH}

In this Section, we adapt the previous local inverse uniqueness results to the setting of general relativity and more precisely to Reissner-Nordstr\"om-de-Sitter black holes. We emphasize that the link between such black holes and SSAHM was already given in \cite{DN3}. Considering the scattering of massless Dirac fields evolving in the outer region of a RN-dS black holes, we shall prove that the partial knowledge of the corresponding reflection coefficients in the sense of (\ref{Ln}) - (\ref{Rn}) not only determines the metric of such black holes in the neighbourhood of the event and cosmological horizons (see below for the definition), but in fact determines the whole metric. This is due to the fact that the metric of RN-dS black holes only depend on $3$ parameters - their mass, electric charge and positive cosmogical constant - parameters that can be deduced from the explicit form of the metric in the neighbourhoods of the horizons.

\subsection{Reissner-Nordst\"om-de-Sitter black holes}

Refering to Wald \cite{W} for more general details on black hole spacetimes, we summarize here the essential features of Reissner-Nordstr\"om-de-Sitter (RN-dS) black holes given in \cite{DN2,DN3}. First, RN-dS are spherically symmetric electrically charged exact solutions of
the Einstein-Maxwell equations. In Schwarzschild coordinates, the exterior region of a RN-dS black hole is described by
the four-dimensional manifold $\M = \mathbb{R}_{t} \times ]r_-,r_+[_r \times \S_{\theta,\varphi}^{2}$ equipped with the
Lorentzian metric
\begin{equation} \label{Metric}
  \tau = F(r)\,dt^{2} - F(r)^{-1} dr^{2} - r^{2} \big(d\theta^{2}+\sin^{2}\theta \, d\varphi^{2}\big),
\end{equation}
where
\begin{equation} \label{F}
  F(r) = 1-\frac{2M}{r}+\frac{Q^{2}}{r^{2}} - \frac{\Lambda}{3} r^2.
\end{equation}
The constants $M>0$, $Q \in \R$ appearing in (\ref{F}) are interpreted as the mass and the electric charge of the black
hole and $\Lambda > 0$ is the cosmological constant of the universe. We assume here that the function $F(r)$ has three
simple positive roots $0<r_c<r_-<r_+$ and a negative one $r_n <0$. This is always achieved if we suppose for instance
that $Q^2<\frac{9}{8} M^2$ and that $\Lambda M^2$ be small enough (see \cite{L}). The sphere $\{r=r_c\}$ is called the
Cauchy horizon whereas the spheres $\{r=r_-\}$ and $\{r=r_+\}$ are the event and cosmological horizons respectively. These horizons which appear as singularities of the metric (\ref{Metric}) are in fact mere coordinates singularities. This means that using appropriate coordinates system, these horizons can be understood as regular null hypersurfaces that can be crossed one way but would require speeds greater than that of light to be crossed the other way: hence their names horizons.

In what follows, we shall only consider the exterior region of the black hole, that is the region $\{r_- < r <r_+\}$ lying between the event and cosmological horizons. Note that the function $F$ is positive there. The point of view implicitly adopted here is indeed that of static observers located far from the event and cosmological horizons of the black hole. We think typically of a telescope on earth aiming at the black hole or at the
cosmological horizon. We understand these observers as living on worldlines $\{r = r_0\}$ with $r_- << r_0 << r_+$. The variable $t$ corresponds to their true perception of time. From the point of view of our static observers, the event and cosmological horizons turn out to be the boundaries of the \emph{observable} world. This can be more easily understood if we remark that the event and cosmological horizons are never reached in a finite time $t$ by incoming and outgoing radial null geodesics, the trajectories followed by classical light-rays aimed radially at the black hole or at the cosmological horizon. Both horizons are thus perceived as \emph{asymptotic regions} by our static observers.

Instead of working with the radial variable $r$, we make the choice to describe the exterior region of the black hole
by using the Regge-Wheeler (RW) radial variable which is more natural when studying the scattering properties of any
fields. The RW variable $x$ is defined implicitly by $\frac{dx}{dr} = F^{-1}(r)$, or explicitly by
\begin{equation} \label{RW}
  x = \frac{1}{2\kappa_n} \ln(r-r_n) + \frac{1}{2\kappa_c} \ln(r-r_c) +\frac{1}{2\kappa_-} \ln(r-r_-) + \frac{1}{2\kappa_+} \ln(r_+-r) \ + \ c,
\end{equation}
where $c$ is any constant of integration and the quantities $\kappa_j, \ j=n,c,-,+$ are defined by
\begin{equation} \label{SurfaceGravity}
  \kappa_n = \frac{1}{2} F'(r_n), \ \kappa_c = \frac{1}{2} F'(r_c), \ \kappa_- = \frac{1}{2} F'(r_-), \ \kappa_+ = \frac{1}{2} F'(r_+).
\end{equation}
The constants $\kappa_->0$ and $\kappa_+<0$ are called the surface gravities of the event and cosmological horizons respectively.
Note from (\ref{RW}) that the event and cosmological horizons $\{r=r_\pm\}$  are pushed away to the infinities $\{x = \pm \infty\}$ using the RW variable.
Let us also emphasize that the incoming and outgoing null radial geodesics become straight lines $\{x=\pm t\}$ in this new coordinates system, a fact that provides a
natural manner to define the scattering data simply by mimicking the usual definitions in Minkowski-spacetime.

\subsection{The Dirac equation in RN-dS}

As waves, we consider massless Dirac fields propagating in the exterior region of a RN-dS black hole. We refer to \cite{Me1,Ni} for a detailed study of this equation in this background including a complete time-dependent scattering theory. We shall use the expression of the equation obtained in these papers as the starting point of our study. Thus the considered massless Dirac fields are represented by 2 components spinors $\psi$ belonging to the Hilbert space $L^2(\R \times \S^2; \, \C^2)$ which satisfy the evolution equation
\begin{equation} \label{FullEq}
  i \partial_{t} \psi = \Big(\Ga D_x + a(x) D_{\S^2} \Big) \psi,
\end{equation}
The symbol $D_x$ stands for $-i\d_x$ whereas $D_{\S^2}$ denotes the Dirac operator on $\S^2$ which, in spherical coordinates, takes the form (\ref{DiracSphere}). The potential $a$ is the scalar smooth function given in term of the metric (\ref{Metric})-(\ref{F}) by
\begin{equation} \label{Potential}
  a(x) = \frac{\sqrt{F(r(x))}}{r(x)},
\end{equation}
where $r(x)$ is the inverse diffeomorphism of (\ref{RW}). It was shown in \cite{DN3} that the potential $a$ verifies the hypotheses (\ref{AsympA}). Finally, the matrices $\Ga, \Gb, \Gc$ appearing in (\ref{FullEq}) and (\ref{DiracSphere}) are usual $2 \times 2$ Dirac matrices that satisfy the anticommutation relations (\ref{AntiCom}). As before, we shall work with the following representations of the Dirac matrices
$$ 
  \Ga = \left( \begin{array}{cc} 1&0 \\0&-1 \end{array} \right), \quad \Gb = \left( \begin{array}{cc} 0&1 \\1&0 \end{array} \right), \quad \Gc = \left( \begin{array}{cc} 0&i \\-i&0 \end{array} \right).
$$

Hence, the massless Dirac equation on the exterior region of a RN-dS black hole can be put under the same form as the massless Dirac equation on a SSHAM studied in the previous Sections. We can thus define the transmission coefficients $T(\lambda,n)$ and reflection coefficients $L(\lambda,n)$ and $R(\lambda,n)$ for a fixed energy $\lambda \in \R$ and all angular momenta $n \in \N^*$ as before. Moreover, Theorem \ref{Reflection}  remains true in this new setting. Taking advantage of the particular form of the potential $a$ given in (\ref{Potential}), we can slightly improve these results.


\subsection{Uniqueness of the parameters}

\begin{theorem}
 Using the notations of the Theorem \ref{Reflection}, the following assertions are equivalents :
$$
  (i) \quad L(\lambda,n) = \tilde{L}(\lambda,n) + \ O\left(e^{-2nB} \right).
$$
$$
  (ii) \quad R(\lambda,n) = \tilde{R}(\lambda,n) + \ O\left(e^{-2nB} \right).
$$
$$
  (iii) \quad M=\tilde{M}, \quad Q^{2} = \tilde{Q}^{2} \quad \mathrm{and} \quad \Lambda = \tilde{\Lambda}.
$$
$$
  (iv) \quad \exists k \in \Z, \quad a(x) = \tilde{a}(x + \frac{k\pi}{\lambda}), \quad \forall \ x  \in \mathbb{R}.
$$
\end{theorem}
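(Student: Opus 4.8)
The plan is to upgrade the \emph{local} statement of Theorem \ref{Reflection} into a \emph{global} one by exploiting two structural features of the Reissner--Nordstr\"om--de-Sitter potential $a(x)=\sqrt{F(r(x))}\,/\,r(x)$ that fail for a general SSAHM: first, $a$ is \emph{real-analytic} on the whole line $\R$; second, $a$ is built from the single rational function $F$ of (\ref{F}), which carries only the three parameters $(M,Q^2,\Lambda)$, and any radial translate of $a$ carries the same three parameters. Concretely I would establish the three equivalences $(i)\Leftrightarrow(iv)$, $(ii)\Leftrightarrow(iv)$ and $(iii)\Leftrightarrow(iv)$, which together give the fourfold equivalence.

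\emph{The implications $(i)\Rightarrow(iv)$ and $(ii)\Rightarrow(iv)$.} Since the RN-dS potential satisfies (\ref{RegulA})--(\ref{AsympA}) (as recalled after (\ref{Potential})), Theorem \ref{Reflection} applies verbatim: assertion $(i)$ produces $k\in\Z$ with $a(x)=\tilde a(x+k\pi/\lambda)$ for all $x\le h(B)$, and assertion $(ii)$ gives the analogous identity on $[h(A-B),+\infty)$. I would then check that $r:\R\to\,]r_-,r_+[$, the inverse of (\ref{RW}), is real-analytic: this is the analytic inverse function theorem applied to $r\mapsto\int dr/F(r)$ on $]r_-,r_+[$, where $F$ is rational and strictly positive there (it has simple roots $r_c<r_-<r_+$ and behaves like $-\frac{\Lambda}{3}r^4$ at infinity, hence is positive between $r_-$ and $r_+$). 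Since $F(r(x))>0$ for $x\in\R$, the map $a(x)=\sqrt{F(r(x))}/r(x)$ is a composition of real-analytic functions and so is real-analytic on $\R$; the same holds for $x\mapsto\tilde a(x+k\pi/\lambda)$. Two real-analytic functions on $\R$ that coincide on a half-line must coincide everywhere, so $a(x)=\tilde a(x+k\pi/\lambda)$ for all $x\in\R$, i.e. $(iv)$.

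\emph{The implications $(iv)\Rightarrow(i),(ii)$ and the equivalence $(iii)\Leftrightarrow(iv)$.} If $a(x)=\tilde a(x+k\pi/\lambda)$ on $\R$, the translation formula (\ref{Sdecale}) gives $\tilde L(\lambda,n)=e^{-2ik\pi}L(\lambda,n)=L(\lambda,n)$ and $\tilde R(\lambda,n)=R(\lambda,n)$ for all $n$, so $(i)$ and $(ii)$ hold with vanishing error term. For $(iii)\Rightarrow(iv)$: if the parameters coincide then $F=\tilde F$, so $r(\cdot)$ and $\tilde r(\cdot)$ solve the same autonomous equation $y'=F(y)$ on $]r_-,r_+[$ with the same endpoint limits, hence differ by a translation of the $x$-variable, and therefore so do $a$ and $\tilde a$; this is $(iv)$. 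For $(iv)\Rightarrow(iii)$: writing $s(x)$ for the radial variable of $\tilde a$ after the shift $x\mapsto x+k\pi/\lambda$, both $r$ and $s$ satisfy $(1/y)'=-a(x)^2$, whence $1/s-1/r\equiv\delta$ for a constant $\delta$; eliminating $a$ yields the identity $\tilde F\big(r/(1+\delta r)\big)=(1+\delta r)^{-2}F(r)$ on $]r_-,r_+[$, hence on all of $\C$. Matching the coefficients of this polynomial identity, using that $F$ and $\tilde F$ have the specific shape (\ref{F}), forces $\tilde Q^2=Q^2$ and then, comparing the coefficients of $r^2$ and $r^3$, either $\delta=0$ or a relation that would entail $M=\tilde M-2Q^2\delta<0$, which is excluded by $M>0$; hence $\delta=0$, and the remaining coefficients give $M=\tilde M$ and $\Lambda=\tilde\Lambda$.

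\emph{Main obstacles.} The computation itself is short; the two points requiring care are: (a) justifying the local-to-global upgrade, i.e. that the RN-dS potential really is real-analytic on all of $\R$ (including across the interior, not merely near the horizons), so that the identity theorem may legitimately be invoked — this rests on $F$ being rational and nonvanishing on $]r_-,r_+[$; and (b) the integration-constant ambiguity of the Regge--Wheeler variable in (\ref{RW}), which means that two RN-dS black holes with equal parameters have potentials agreeing only up to a radial translation. One must observe, from (\ref{Sdecale}) and the Remark following it, that the reflection coefficients are exactly insensitive to translations by $k\pi/\lambda$, so that this residual translation is precisely what the clause ``$\exists k\in\Z$'' in $(iv)$ records, and the equivalence with $(iii)$ is then consistent. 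The coefficient matching in the polynomial identity above, and the bookkeeping with (\ref{AsympA}) needed to deduce $A=\tilde A$, are routine.
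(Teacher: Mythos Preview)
Your proof is correct and takes a genuinely different route from the paper's. The paper argues the chain $(i)/(ii)\Rightarrow$ local equality $\Rightarrow(iii)\Rightarrow(iv)$: from the half-line identity furnished by Theorem~\ref{Reflection} it invokes the explicit parameter-extraction argument of \cite{DN3}, pp.~43--44, to read off $M,Q^2,\Lambda$ from the local form of the potential, and then appeals to the fact that the parameters determine the metric to pass to global equality. You instead bypass $(iii)$ entirely on the way from $(i)/(ii)$ to $(iv)$: you observe that the RN-dS potential $a(x)=\sqrt{F(r(x))}/r(x)$ is real-analytic on all of $\R$ (since $r$ solves the analytic autonomous ODE $r'=F(r)$ with $F>0$ on $(r_-,r_+)$), and then the identity theorem upgrades the half-line equality to a global one directly. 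This is more elementary and fully self-contained --- it needs no detour through \cite{DN3} --- whereas the paper's route has the advantage of producing $(iii)$ as an intermediate step rather than having to recover it afterwards from $(iv)$ via your polynomial coefficient-matching in the relation $\tilde F(r/(1+\delta r))=(1+\delta r)^{-2}F(r)$.

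One caveat on $(iii)\Rightarrow(iv)$: you correctly note that equal parameters only give $\tilde a(x)=a(x+\sigma)$ for \emph{some} $\sigma\in\R$, coming from the free integration constant $c$ in (\ref{RW}), and you then assert that ``the equivalence with $(iii)$ is consistent'' because the $k\pi/\lambda$ clause in $(iv)$ records this ambiguity. That is not quite right: an arbitrary $\sigma$ need not be of the form $k\pi/\lambda$, and if it is not then $(i)$ fails (since $|\tilde L(\lambda,n)-L(\lambda,n)|=|e^{-2i\lambda\sigma}-1|\,|L(\lambda,n)|\to|e^{-2i\lambda\sigma}-1|\ne 0$ by (\ref{asymptoticTL})). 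The paper's proof shares this looseness; the honest statement of $(iii)\Rightarrow(iv)$ implicitly uses the already-established half-line identity with shift $k\pi/\lambda$ (coming from $(i)$ or $(ii)$) to pin down $\sigma$. Your cycle $(i)\Leftrightarrow(iv)\Rightarrow(iii)$ is clean; just be aware that $(iii)\Rightarrow(iv)$ standing alone, as both you and the paper phrase it, requires either that pinning or a normalisation of the Regge--Wheeler constant.
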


\begin{proof}
  We first use Theorem \ref{Reflection} to obtain the equality of the potential $a$ and $\tilde{a}$ on a half-line $]- \infty,b_1]$ (respectively $[b_2,+\infty[$), $b_1 < b_2 \in \mathbb{R}$. Then, a line by line inspection of the proof given in \cite{DN3} p. 43-44 shows that this information is enough to prove the uniqueness of the mass $M$, the square of the charge $Q^2$ and the cosmological constant $\Lambda$ of the black hole. Finally, since the parameters of the black hole determine uniquely the metric, we obtain the equality of the
potentials $a$ and $\tilde{a}$ on $\mathbb{R}$ (up to a discrete set of translations as stated in (iv)).
\end{proof}

\appendix

\Section{Other formulation of the main inverse uniqueness results}

In this Section, we formulate our main Theorems \ref{Reflection}  in a more global way, avoiding the use of a decomposition onto generalized spherical harmonics. More precisely, we replace the main assumptions (\ref{Ln}) and (\ref{Rn}) by $L^2(\S^2)$-operator norms conditions on the global reflection coefficients. We recall first the definition and essential properties of these operators.

\begin{prop} \label{TkRkLk}
For all $(n,k) \in I$ where $I = \{ n \in \N^*, \ k \in 1/2 + \Z, \ |k| \leq n - \frac{1}{2}\}$, we use the notation $Y_{kn} = (Y_{kn}^1, Y_{kn}^2)$ for the corresponding generalized spherical harmonics. Then, \\

\noindent 1) The families $\{Y_{kn}^1\}_{(n,k) \in I}$ and $\{Y_{kn}^2\}_{(n,k) \in I}$ form Hilbert bases of $\mathfrak{l} = L^2(\S^2; \C)$; precisely for all $\psi \in \mathfrak{l}$, we can decompose $\psi$ as
$$
  \psi = \sum_{n,k \in I} \psi_{kn}^{j} Y_{kn}^{j}, \quad j=1,2,
$$
with
$$
  \| \psi \|^2 = \frac{1}{2} \sum_{n,k \in I} |\psi_{kn}^{j} |^2.
$$

\noindent 2) Let $\lambda \in \R$ be a fixed energy. Then, the transmission operators $T_L(\lambda)$ and $T_R(\lambda)$ are defined as operators from $\mathfrak{l}$ to $\mathfrak{l}$ as follows. For all $\psi = \sum_{n,k \in I} \psi_{kn}^{j} Y_{kn}^{j} \in \mathfrak{l}$
\begin{equation} \label{TL}
  T_L(\lambda) \psi = T_L(\lambda) \left( \sum_{n,k \in I} \psi_{kn}^1 Y_{kn}^1(\lambda) \right) = \sum_{n,k \in I} \left( T(\lambda,n) \psi_{kn}^1 \right) Y_{kn}^1(\lambda),
\end{equation}
and
\begin{equation} \label{TR}
  T_R(\lambda) \psi = T_R(\lambda) \left( \sum_{n,k \in I} \psi_{kn}^2 Y_{kn}^2(\lambda) \right) = \sum_{n,k \in I} \left( T(\lambda,n) \psi_{kn}^2 \right) Y_{kn}^2(\lambda),
\end{equation}
where $T(\lambda,n)$ are the transmission coefficients defined in (\ref{SR-SM1}) - (\ref{SR-SM2}).
In short, we write
\begin{equation} \label{TLR}
  T_L(\lambda) Y_{kn}^1 = T(\lambda,n) Y_{kn}^1, \quad \quad T_R(\lambda) Y_{kn}^2 = T(\lambda,n) Y_{kn}^2, \quad \forall n,k \in I,
\end{equation}
and thus the operators $T_L(\lambda)$ (resp. $T_R(\lambda)$) are diagonalizable on the Hilbert basis of eigenfunctions $(Y_{kn}^1)_{k,n \in I}$ (resp. $(Y_{kn}^2)_{k,n \in I}$) associated to the eigenvalues $T(\lambda,n)$ (in both cases). \\

\noindent 3) Let $\lambda \in \R$ be a fixed energy. Then, the reflection operators $L(\lambda)$ and $R(\lambda)$ are defined as operators from $\mathfrak{l}$ to $\mathfrak{l}$ as follows. For all $\psi = \sum_{n,k \in I} \psi_{kn}^{j} Y_{kn}^{j} \in \mathfrak{l}$
\begin{equation} \label{R}
  R(\lambda) \psi = R(\lambda) \left( \sum_{n,k \in I} \psi_{kn}^2 Y_{kn}^2(\lambda) \right) = \sum_{n,k \in I} \left( R(\lambda,n) \psi_{kn}^2 \right) Y_{kn}^1(\lambda),
\end{equation}
and
\begin{equation} \label{L}
  L(\lambda) \psi = L(\lambda) \left( \sum_{n,k \in I} \psi_{kn}^1 Y_{kn}^1(\lambda) \right) = \sum_{n,k \in I} \left( L(\lambda,n) \psi_{kn}^1 \right) Y_{kn}^2(\lambda),
\end{equation}
where $R(\lambda,n)$ and $L(\lambda,n)$ are defined in (\ref{SR-SM1}) - (\ref{SR-SM2}). In short, we write
\begin{equation} \label{Rek}
  R(\lambda) Y_{kn}^2 = R(\lambda,n) Y_{kn}^1, \quad \quad L(\lambda) Y_{kn}^1 = L(\lambda,n) Y_{kn}^2, \quad \forall n,k \in I.
\end{equation}
\end{prop}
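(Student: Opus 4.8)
The plan is to prove part~1 — the only substantive claim, parts~2 and~3 being mere definitions — by reducing everything to the off-diagonal structure of $\mathbb{D}_{\S^2}$. First I would invoke the standard fact that $\mathbb{D}_{\S^2}$ is an elliptic self-adjoint first-order operator on the compact manifold $\S^2$, so that the generalized spherical harmonics $Y_{kn}=(Y^1_{kn},Y^2_{kn})$, $(n,k)\in I$, already form an orthonormal basis of $L^2(\S^2;\C^2)$ made of eigenspinors, $\mathbb{D}_{\S^2}Y_{kn}=-n\,Y_{kn}$ (the sign being fixed by the normalisation in (\ref{DO-1D})), all those with a common $n$ lying in the single eigenspace $\ker(\mathbb{D}_{\S^2}+n)$. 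In the representation (\ref{DiracMatrices}) the operator (\ref{DiracSphere}) reads
$$
  \mathbb{D}_{\S^2}=\left(\begin{array}{cc}0&\Theta_+\\ \Theta_-&0\end{array}\right),\qquad \Theta_\pm=D_\theta+\frac{i\cot\theta}{2}\pm\frac{i}{\sin\theta}\,D_\varphi,
$$
and self-adjointness gives $\Theta_+^*=\Theta_-$ on $L^2(\S^2;\C)$. Writing out the two rows of the eigenvalue equation yields $\Theta_-Y^1_{kn}=-n\,Y^2_{kn}$ and $\Theta_+Y^2_{kn}=-n\,Y^1_{kn}$, hence $Y^1_{kn}$ is an eigenfunction of the nonnegative scalar operator $\Theta_+\Theta_-=\Theta_-^*\Theta_-$ for the eigenvalue $n^2$, and likewise $Y^2_{kn}$ for $\Theta_-\Theta_+=\Theta_+^*\Theta_+$.

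Completeness of the component families then comes for free: if $\phi\in L^2(\S^2;\C)$ is orthogonal to every $Y^1_{kn}$, the spinor $(\phi,0)$ is orthogonal to every $Y_{kn}$, hence vanishes, so $\phi=0$; the same with $(0,\phi)$ handles $\{Y^2_{kn}\}$. For the orthogonality relations and the normalising constant I would argue: if $n\neq n'$ the functions $Y^1_{kn}$ and $Y^1_{k'n'}$ are eigenfunctions of the self-adjoint operator $\Theta_-^*\Theta_-$ for the distinct eigenvalues $n^2\neq n'^2$, hence orthogonal; for a common $n$, the relation $Y^2_{kn}=-\tfrac{1}{n}\,\Theta_-Y^1_{kn}$ gives $\langle Y^2_{kn},Y^2_{k'n}\rangle=\tfrac{1}{n^2}\langle\Theta_-^*\Theta_-Y^1_{kn},Y^1_{k'n}\rangle=\langle Y^1_{kn},Y^1_{k'n}\rangle$, and adding this to $\langle Y^1_{kn},Y^1_{k'n}\rangle+\langle Y^2_{kn},Y^2_{k'n}\rangle=\langle Y_{kn},Y_{k'n}\rangle_{\C^2}=\delta_{kk'}$ forces $\langle Y^1_{kn},Y^1_{k'n}\rangle=\langle Y^2_{kn},Y^2_{k'n}\rangle=\tfrac{1}{2}\delta_{kk'}$. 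In particular $\|Y^1_{kn}\|^2=\|Y^2_{kn}\|^2=\tfrac{1}{2}$, so expanding $\psi=\sum\psi^j_{kn}Y^j_{kn}$ gives $\|\psi\|^2=\tfrac{1}{2}\sum|\psi^j_{kn}|^2$, which is part~1 (the rescaled families $\sqrt{2}\,Y^j_{kn}$ being honest Hilbert bases). Parts~2 and~3 are then nothing but the definition of $T_L(\lambda),T_R(\lambda),R(\lambda),L(\lambda)$ through their prescribed action on these bases, extended by density; this is legitimate because $S(\lambda,n)$ is unitary by (\ref{AL-Relation}), so $|T(\lambda,n)|,|R(\lambda,n)|,|L(\lambda,n)|\le1$ uniformly in $n$ and the four operators are contractions on $L^2(\S^2;\C)$.

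The step carrying the weight is not an estimate but the bookkeeping linking the $\C^2$-inner product to the two scalar ones via the off-diagonal form of $\mathbb{D}_{\S^2}$: the identity $\langle Y^2_{kn},Y^2_{k'n}\rangle=\langle Y^1_{kn},Y^1_{k'n}\rangle$ for a common $n$ is exactly what produces the constant $\tfrac{1}{2}$, and it uses crucially that the family $(Y_{kn})_{|k|\le n-1/2}$ lies in a single $\mathbb{D}_{\S^2}$-eigenspace (were eigenvalues of both signs present it would break down). A more computational alternative would identify $Y^1_{kn}$ and $Y^2_{kn}$ with the spin-weighted spherical harmonics of weights $-\tfrac{1}{2}$ and $+\tfrac{1}{2}$ and quote the classical completeness theorem for spin-weighted harmonics of a fixed weight; the argument above has the advantage of needing no explicit special functions.
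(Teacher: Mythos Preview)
The paper gives no proof of this proposition: it appears in Appendix~A as a recall of standard facts about the generalized (spin-weighted) spherical harmonics, and the text moves on directly to the operator-norm identities (\ref{LinkTLR}). Your write-up therefore supplies what the paper leaves implicit rather than duplicating an existing argument.

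Your proof of part~1 is correct and pleasantly coordinate-free. The decisive observation is that, in the representation (\ref{DiracMatrices}), the angular Dirac operator (\ref{DiracSphere}) is purely off-diagonal, so self-adjointness forces $\Theta_+^*=\Theta_-$ and the eigenvalue relation yields $\langle Y^2_{kn},Y^2_{k'n}\rangle=\langle Y^1_{kn},Y^1_{k'n}\rangle$ for a common $n$; combined with $\langle Y_{kn},Y_{k'n}\rangle_{\C^2}=\delta_{kk'}$ this pins down the constant $\tfrac12$ without any special-function identities. The completeness argument via the embeddings $\phi\mapsto(\phi,0)$ and $\phi\mapsto(0,\phi)$ is the natural one. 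You are also right to isolate the one structural hypothesis your method needs --- that for fixed $n$ all the $Y_{kn}$ lie in a \emph{single} eigenspace of $\mathbb{D}_{\S^2}$ --- and to note that the alternative route through the completeness of spin-weighted harmonics of a fixed weight $\pm\tfrac12$ would bypass this point at the cost of invoking explicit special functions. Parts~2 and~3 are indeed definitions; your remark that the unitarity of $S(\lambda,n)$ (via (\ref{ALUnitarity})) bounds $|T(\lambda,n)|,|L(\lambda,n)|,|R(\lambda,n)|$ by $1$ uniformly in $n$ is exactly what legitimises the bounded extension of the four operators to all of $\mathfrak{l}$.
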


It is immediate from the above definitions to express the $L^2(\S^2)$-operator norms of the transmission operators $T_L(\lambda), T_R(\lambda)$ and of the reflection operators $L(\lambda), R(\lambda)$ in terms of the coefficients $T(\lambda,n), L(\lambda,n)$ and $R(\lambda,n)$. For a fixed $\lambda \in \R$, we have
\begin{equation} \label{LinkTLR}
  \| T_L(\lambda)\| = \|T_R(\lambda)\| = \| T(\lambda,n) \|_\infty, \quad \|L(\lambda)\| = \| L(\lambda,n) \|_\infty, \quad \|R(\lambda)\| = \| R(\lambda,n) \|_\infty.
\end{equation}

To reformulate the assumptions (\ref{Ln}) and (\ref{Rn})  by $L^2(\S^2)$-operator norms conditions, we observe that the selfadjoint operator $|\DS|$ acts as multiplication by $n$ on each generalized spherical harmonics $Y_{kn}$ in the Hilbert decomposition $L^2(\S^2, \C^2) = \oplus_{n,k \in I} \C^2 \otimes Y_{kn}$. We still denote by $|\DS|$ the restriction of this operator to $\mathfrak{l} = L^2(\S^2, \C)$ and thus, $|\DS|$ acts as multiplication by $n$ on each generalized spherical harmonics $Y^1_{kn}$ or $Y^2_{kn}$ in the two Hilbert decompositions $\mathfrak{l} = \oplus_{n,k \in I} \C \otimes Y^j_{kn}, \ j=1,2$.

Now, let $\lambda \in \R$ and $0 < B < \min(A,\tilde{A})$. Then, using (\ref{LinkTLR}), the assumptions (\ref{Ln}) and (\ref{Rn}) for the reflection coefficients can be written as
\begin{equation} \label{L-OpNorm}
  (\ref{Ln}) \quad \Longleftrightarrow \quad \left\| e^{2B|\DS|} \left(L(\lambda) - \tilde{L}(\lambda) \right) \right\|_{\B(\mathfrak{l})} = O(1),
\end{equation}
and
\begin{equation} \label{R-OpNorm}
  (\ref{Rn}) \quad \Longleftrightarrow \quad \left\| e^{2B|\DS|} \left(R(\lambda) - \tilde{R}(\lambda) \right) \right\|_{\B(\mathfrak{l})} = O(1).
\end{equation}
Similarly, let $B > \max(A,\tilde{A})$. Then we get for the assumption (\ref{Tn}) on the transmission coefficients the equivalences
\begin{eqnarray} \label{T-OpNorm}
  (\ref{Tn}) \ & \Longleftrightarrow & \ \left\| e^{2B|\DS|} \left(T_L(\lambda) - \tilde{T_L}(\lambda) \right) \right\|_{\B(\mathfrak{l})} = O(1), \\
             & \Longleftrightarrow & \ \left\| e^{2B|\DS|} \left(T_R(\lambda) - \tilde{T_R}(\lambda) \right) \right\|_{\B(\mathfrak{l})} = O(1). \nonumber
 \end{eqnarray}

\Section{Addendum on the inverse scattering problem from the transmission coefficients $T(\lambda,n)$}

In \cite{DN3}, Theorem 1.1, it is claimed that the knowledge of the transmission coefficients $T(\lambda,n)$ for a fixed $\lambda \ne 0$ and for all $n \in \mathcal{L}$ where $\mathcal{L}$ is a subset of $\N$ satisfying the M\"untz condition
$$
  \sum_{n \in \mathcal{L}} \frac{1}{n} = +\infty,
$$
also determines uniquely the function $a(x)$ up to a translation. The crucial ingredient of the proof can be found in the Proposition 3.13 of \cite{DN3} which states that \\

"If $T(\lambda,n) = \tilde{T}(\lambda,n)$ for all $n \in \mathcal{L}$, then the corresponding reflection coefficients $L(\lambda,n)$ and $\tilde{L}(\lambda,n)$ (resp. $R(\lambda,n)$ and $\tilde{R}(\lambda,n)$) coincide up to a multiplicative constant". \\

The proof of this result given in \cite{DN3} is unfortunately incomplete. In fact, this last point is not so clear and could even be false. We shall try in this Appendix to give some insights of what happens when we try to determine the metric from the transmission coefficient $T(\lambda,n)$.

We first give a correct version of the above result that is weaker than the Proposition 3.13. given in \cite{DN3}.

\begin{prop} \label{Transmission1}
  Let $(\Sigma, g)$ ans $(\tilde{\Sigma}, \tilde{g})$ be two SSAHM whose metrics depend on the functions $a(x)$ and $\tilde{a}(x)$ satisfying the assumptions (\ref{RegulA}) - (\ref{AsympA}). For a fixed energy $\lambda \ne 0$, consider the corresponding countable family of transmission coefficients $T(\lambda,n)$ and $\tilde{T}(\lambda,n)$ for all $n \in \N^*$. Consider also a subset $\mathcal{L}$ of $\N^*$ that satisfies a M\"untz condition
$\displaystyle\sum_{n \in \mathcal{L}} \frac{1}{n} = \infty$. Assume that
$$
  T(\lambda,n) = \tilde{T}(\lambda,n), \quad \forall n \in \mathcal{L}.
$$
Then
$$
  T(\lambda,z) = \tilde{T}(\lambda,z), \quad \forall z \in \C.
$$
Assume moreover that $\frac{1}{\kappa_+} + \frac{1}{\kappa_-} < 0$.
\begin{itemize}
\item If $\frac{1}{\tilde{\kappa_+}} + \frac{1}{\tilde{\kappa_-}} < 0$, there exists a rational function $g(z)$ such that
$$
  L(\lambda,z) = g(z) \tilde{L}(\lambda,z), \quad \forall z \in \C.
$$
\item If $\frac{1}{\tilde{\kappa_+}} + \frac{1}{\tilde{\kappa_-}} > 0$ and $\left( \frac{\tilde{a_-}}{a_+} \right)^{\frac{i \lambda}{\kappa_+}} = \left( \frac{\tilde{a_+}}{a_-} \right)^{\frac{i \lambda}{\kappa_-}}$, there exists a rational function $h(z)$ such that
$$
  L(\lambda,z) = h(z) \tilde{R}(\lambda,z), \quad \forall z \in \C.
$$
\end{itemize}
\end{prop}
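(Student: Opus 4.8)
The plan is to treat the three assertions in turn: the first by a soft complex-analytic argument, the last two by combining the Wronskian and symmetry identities for the matrix $A_L$ with the large-$z$ Bessel asymptotics of Theorem~\ref{asymtoticsutiles}. For the first assertion, observe that $T(\lambda,n)=1/a_{L1}(\lambda,n)$ and that, by the first relation in (\ref{ALUnitarity}), $|a_{L1}(\lambda,n)|\geq 1$ for every integer $n$; hence $T(\lambda,n)=\tilde T(\lambda,n)$ for $n\in\mathcal L$ forces $a_{L1}(\lambda,n)=\tilde a_{L1}(\lambda,n)$ for $n\in\mathcal L$. Set $M=\max(A,\tilde A)$ and $F(z)=\big(a_{L1}(\lambda,z)-\tilde a_{L1}(\lambda,z)\big)e^{-Mz}$. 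By Lemma~\ref{MainEsti}, $|a_{L1}(\lambda,z)|,|\tilde a_{L1}(\lambda,z)|\leq e^{M\,\mathrm{Re}\,z}$ on $\Omega=\{\mathrm{Re}\,z>0\}$, so $F$ is bounded on $\Omega$, in particular of bounded characteristic. Since $F(n)=0$ for $n\in\mathcal L$ and $\sum_{n\in\mathcal L}\frac{n}{1+n^2}$ diverges whenever $\sum_{n\in\mathcal L}\frac1n$ does, the zeros of $F$ violate the Blaschke condition $\sum_k\frac{\mathrm{Re}\,z_k}{1+|z_k|^2}<\infty$ for $\Omega$, so $F\equiv 0$; hence $a_{L1}(\lambda,\cdot)=\tilde a_{L1}(\lambda,\cdot)$ on $\Omega$ and, by analyticity, on $\C$, i.e. $T(\lambda,z)=\tilde T(\lambda,z)$ for all $z$. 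Comparing the leading terms of (\ref{asymptoticalun}) with its tilde-analogue yields in passing $A=\tilde A$ and $\frac1{\kappa_-}-\frac1{\kappa_+}=\frac1{\tilde\kappa_-}-\frac1{\tilde\kappa_+}$.

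For the remaining two assertions, write $g^{\sharp}(z)=\overline{g(\bar z)}$. Subtracting the two instances of the first identity in Lemma~\ref{AL-Analytic}(iii) and using $a_{L1}=\tilde a_{L1}$ gives the identity of entire functions
\[
  a_{L3}(\lambda,z)\,a_{L3}^{\sharp}(\lambda,z)=\tilde a_{L3}(\lambda,z)\,\tilde a_{L3}^{\sharp}(\lambda,z),\qquad z\in\C .
\]
Both $a_{L3}(\lambda,\cdot)$ and $\tilde a_{L3}(\lambda,\cdot)$ are odd, entire and of exponential type $A$, so they have Hadamard factorisations $a_{L3}(\lambda,z)=c\,z\prod_k(1-z^2/\omega_k)$ and $\tilde a_{L3}(\lambda,z)=\tilde c\,z\prod_k(1-z^2/\tilde\omega_k)$; inserting these into the displayed identity and using uniqueness of canonical products forces $|c|=|\tilde c|$ and the equality of the symmetrised multisets $\{\omega_k\}\cup\{\bar\omega_k\}=\{\tilde\omega_k\}\cup\{\overline{\tilde\omega_k}\}$. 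Now set $\Phi=a_{L3}/\tilde a_{L3}$ when aiming at the second assertion and $\Phi=-a_{L3}/\tilde a_{L3}^{\sharp}$ when aiming at the third; recall $\tilde a_{L3}^{\sharp}=\tilde a_{L2}$ by (\ref{ALSym2}), so $\Phi=L/\tilde R$ in the latter case and $\Phi=L/\tilde L$ in the former (using $a_{L1}=\tilde a_{L1}$). In both cases $\Phi$ is a ratio of two odd functions, hence even, and the displayed identity gives $\Phi\,\Phi^{\sharp}\equiv 1$; in particular $|\Phi|=1$ on $\R$ and on $i\R$.

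It remains to show $\Phi$ is rational. By (\ref{asymflun})--(\ref{asymgdeux}) and Theorem~\ref{asymtoticsutiles}(2), continued through the sector $S_\theta$ and reflected by oddness, $a_{L3}$ and $\tilde a_{L3}$ are zero-free for $|z|$ large in $S_\theta\cup(-S_\theta)$ and behave there like $z^{-i\lambda(1/\kappa_+ +1/\kappa_-)}e^{zA}$ and its tilde-analogue respectively; thus all but finitely many of the $\omega_k,\tilde\omega_k$ come from zeros clustering along $\pm i\R$. Here the hypotheses on the surface gravities enter: the sign of $\frac1{\kappa_+}+\frac1{\kappa_-}$ and of $\frac1{\tilde\kappa_+}+\frac1{\tilde\kappa_-}$ — which enters the Bessel orders through $\mu_- - \nu_+=i\lambda(\frac1{\kappa_-}+\frac1{\kappa_+})$ and controls on which side of the imaginary axis the clustering zeros drift — decides whether the zeros of $a_{L3}$ ultimately match those of $\tilde a_{L3}$ (signs equal: second assertion) or those of $\tilde a_{L3}^{\sharp}$ (signs opposite: third assertion), the extra condition $(\tilde a_-/a_+)^{i\lambda/\kappa_+}=(\tilde a_+/a_-)^{i\lambda/\kappa_-}$ being exactly what synchronises the surviving constants in the latter case. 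Granting that the two squared-zero sets agree outside a finite set, the infinite Hadamard factors cancel in $\Phi$, leaving $\Phi$ equal to a rational function. One can push further: the leading asymptotics give $\Phi(z)\sim \mathrm{const}\cdot z^{i\beta}$ on $S_\theta$ with $\beta\in\R$, while a rational function is $\mathrm{const}\cdot z^k$ with $k\in\Z$ at infinity, forcing $\beta=k=0$; then $|\Phi|=1$ on $\R\cup i\R$ together with $\Phi\Phi^{\sharp}\equiv 1$ forces $\Phi$ constant — sharper than, but consistent with, the stated conclusion (and incidentally identifying the surface gravities: $\kappa_\pm=\tilde\kappa_\pm$ in the second case, $\tilde\kappa_\pm=-\kappa_\mp$ in the third).

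The step I expect to be the real obstacle is precisely this zero-clustering analysis along $\pm i\R$: locating the zeros of $a_{L3}$ (equivalently those of the reflection coefficient $L(\lambda,\cdot)$, since by (\ref{ALUnitarity}) a zero of $a_{L3}$ is a zero of $L$) accurately enough, through the modified-Bessel representation behind Theorem~\ref{asymtoticsutiles}, to pair them with the zeros of $\tilde a_{L3}$ or of $\tilde a_{L3}^{\sharp}$ up to finitely many, and to determine from the sign and amplitude hypotheses which of the two alternatives occurs. The rest — the Nevanlinna/Blaschke step and the bookkeeping with canonical products — is routine.
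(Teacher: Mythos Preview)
Your strategy is the paper's strategy: the first assertion via a Nevanlinna/Blaschke argument (the paper simply cites the corresponding result in \cite{DN3}), and the rest via the identity $a_{L3}\,a_{L3}^{\sharp}=\tilde a_{L3}\,\tilde a_{L3}^{\sharp}$, Hadamard factorisation, and a pairing of large zeros. The step you correctly flag as the obstacle is exactly where the paper does the work: it proves a full-plane (not just $S_\theta$) asymptotic for $a_{L3}$,
\[
a_{L3}(\lambda,z)\ \sim\ C\,\Big(\tfrac{z}{2}\Big)^{-i\lambda(\frac{1}{\kappa_+}+\frac{1}{\kappa_-})}\Big(e^{zA}+e^{-zA}e^{\,\mathrm{sg}(\Im z)\,i\pi(1+i\lambda(\frac{1}{\kappa_+}+\frac{1}{\kappa_-}))}\Big),
\]
and deduces by Rouch\'e that the large zeros with $\Im z_n>0$ satisfy
\[
z_n=\frac{i\pi}{A}(n+p)-\frac{\lambda\pi}{2A}\Big(\frac{1}{\kappa_+}+\frac{1}{\kappa_-}\Big)+O\Big(\frac{1}{n}\Big).
\]
Thus the sign of $\frac{1}{\kappa_+}+\frac{1}{\kappa_-}$ decides whether these zeros lie in the first or second quadrant; combined with the constraint $z_n\in\{\tilde z_n,-\tilde z_n,\overline{\tilde z_n},-\overline{\tilde z_n}\}$ coming from the product identity, this forces $z_n=\tilde z_n$ for all large $n$ in the first case and $z_n=-\overline{\tilde z_n}$ in the second. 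Your sketch is correct in spirit but note that the $S_\theta$-asymptotics of Theorem~\ref{asymtoticsutiles} alone do not see the zeros (which accumulate along $i\R$, outside $S_\theta$); one genuinely needs the two-term asymptotic above.

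Your ``push further'' to $\Phi$ constant is, however, incorrect. The properties you list --- $\Phi$ even, rational, bounded at infinity, $\Phi\Phi^{\sharp}\equiv 1$, hence $|\Phi|=1$ on $\R$ and on $i\R$ --- do \emph{not} force $\Phi$ to be constant: take
\[
\Phi(z)=\frac{z^2-\alpha}{\,z^2-\bar\alpha\,},\qquad \alpha\in\C\setminus\R,
\]
which satisfies all of them. In fact the rational function the paper obtains is precisely a finite product of such factors times a unimodular constant, namely $g(z)=\frac{G}{\tilde G}\prod_{n\in E_N}\frac{1-z^2/\overline{\tilde z_n}^{\,2}}{1-z^2/\tilde z_n^{\,2}}$, and there is no mechanism here to exclude a nonempty $E_N$. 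The paper explicitly stops at ``rational'' and leaves open whether $g$ can be nontrivial; your sharper claim (and with it the deduction $\kappa_\pm=\tilde\kappa_\pm$ purely from the asymptotics of $\Phi$) should be withdrawn.
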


\begin{proof}
By definition of the transmission coefficients and using Corollary 3.9. and Theorem 3.10. in \cite{DN3}, our assumption implies that
$$
  T(\lambda,z) = \tilde{T}(\lambda,z), \quad \forall z \in \C,
$$
or equivalently
\begin{equation} \label{rr0}
  a_{L1}(\lambda,z) = \tilde{a}_{L1}(\lambda,z), \quad \forall z \in \C.
\end{equation}

Now, we set ${\displaystyle{f(z)= \frac{a_{L3}(\lambda,z)}{z}}}$. Using that $a_{L3}(\lambda,0)=0$, we see that
$f(z)$ is an even entire function of order $1$ thanks to Lemma \ref{AL-Analytic}. Thus, we can write $f(z)=g(z^2)$ where $g$ is an entire
function of order $\frac{1}{2}$. Using Hadamard's factorization Theorem, we obtain the following expression for $f$
\begin{equation} \label{fz}
f(z) = G \, z^{2m}\ \prod_{n=1}^\infty \Big( 1 - \frac{z^2}{z_n^2} \Big),
\end{equation}
where $2m$ is the multiplicity of $0$, $G$ is a constant and the $z_n$ are the zeros of $f$ belonging to $C^+ = \{z \in \C, \ \Im(z) > 0, \ \textrm{or} \ \Im(z) = 0, \ \Re(z) > 0\}$ counted according to their multiplicity. From (\ref{rr0}) and Lemma \ref{AL-Analytic}, (iii), we have
$$
 f(z)\overline{f(\bar{z})} = \tilde{f}(z)\overline{\tilde{f}(\bar{z})},
$$
where ${\displaystyle{\tilde{f}(z)= \frac{\tilde{a_{L3}}(\lambda,z)}{z}}}$. Thus we get
$$
  \mid G\mid^2 \ z^{4m} \ \prod_{n=1}^\infty \Big( 1 -\frac{z^2}{z_n^2} \Big)
\Big( 1 -\frac{z^2}{\bar{z_n}^2} \Big)  = \ \mid \tilde{G}\mid^2 \ z^{4\tilde{m}} \ \prod_{n=1}^\infty \Big( 1 -\frac{z^2}{\tilde{z_n}^2} \Big)
\Big( 1 -\frac{z^2}{\bar{\tilde{z_n}}^2} \Big).
$$
It follows that $\mid G \mid =\mid \tilde{G}\mid , \ m=\tilde{m}$ and
\begin{equation} \label{rr1}
z_n = \pm \tilde{z_n} \ \textrm{or} \ z_n = \pm \overline{\tilde{z_n}}, \quad \forall n \in \N^*.
\end{equation}

\begin{remark}
  1) The equation (\ref{rr1}) is where we made an error in \cite{DN3}, Proposition 3.13. since we asserted that
	$$
	  z_n = \tilde{z_n}, \quad \forall n \in \N^*.
	$$
	2) If $\Im(z_n) > 0$, then we must have
	\begin{equation} \label{rr2}
z_n = \tilde{z_n} \ \textrm{or} \ z_n = - \overline{\tilde{z_n}}, \quad \forall n \in \N^*.
\end{equation}
Hence, the zeros $z_n$ and $\tilde{z_n}$ with positive imaginary parts coincide \emph{up to "-" complex conjugation}.

On the other hand, if $\Im(z_n) = 0$ and $\Re(z_n) > 0$, then
\begin{equation} \label{rr3}
  z_n = \tilde{z_n}, \textrm{or} \ z_n = \overline{\tilde{z_n}},
\end{equation}
holds.
\end{remark}

In some cases, we can prove that the large zeros $z_n$ and $\tilde{z_n}$ coincide using the asymptotics of $a_{L3}(\lambda,z)$ for large $z$ in the complex plane. We shall use

\begin{lemma} \label{LargeAsymp-aL3}
  For $|z|$ large in the complex plane, we have
	$$\begin{array}{rl}
a_{L1}(\lambda,z) = &  \frac{1}{2\pi} \left( - \frac{\kappa_+}{a_+} \right)^{\frac{i \lambda}{\kappa_+}} \left( \frac{\kappa_-}{a_-} \right)^{-\frac{i\lambda}{\kappa_-}} \Gamma(1-\nu_+) \Gamma(1-\mu_-) \left( \frac{z}{2} \right)^{i \lambda \left(\frac{1}{\kappa_-}-\frac{1}{\kappa_+} \right)} \\
& \times \left( e^{zA} + e^{-zA} e^{-\mathrm{sg}(\mathrm{Im}(z)) \pi \lambda \left(\frac{1}{\kappa_+} - \frac{1}{\kappa_-} \right)} \right) \ \left( 1 + O(\frac{1}{z}) \right),\\

a_{L3}(\lambda,z) = & \frac{i}{2\pi} \left( - \frac{\kappa_+}{a_+} \right)^{\frac{i \lambda}{\kappa_+}} \left( \frac{\kappa_-}{a_-} \right)^{\frac{i \lambda}{\kappa_-}} \Gamma(1-\nu_+) \Gamma(1-\nu_-) \left( \frac{z}{2} \right)^{-i\lambda \left(\frac{1}{\kappa_-}+\frac{1}{\kappa_+} \right)} \\
& \times \left( e^{zA} + e^{-zA} e^{\mathrm{sg}(\mathrm{Im}(z)) i \pi \left(1+ i\lambda \left( \frac{1}{\kappa_+} + \frac{1}{\kappa_-} \right) \right)} \right) \ \left( 1 + O(\frac{1}{z}) \right).
\end{array}
$$
\end{lemma}
\begin{proof}
  We refer to \cite{DaKaNi, FY} where similar asymptotics have been obtained.
\end{proof}

Using Rouche's Theorem and a standard argument (see \cite{FY}), we obtain from Lemma \ref{LargeAsymp-aL3} the following asymptotics for the large zeros $z_n$ with positive imaginary part.

\begin{coro} \label{Asymp-Large-zn}
  There exists $p \in \Z$ such that for large $n$, we have
	$$
	  z_n = i \frac{\pi}{A} ( n + p) - \frac{\lambda \pi}{2A} \left( \frac{1}{\kappa_+} + \frac{1}{\kappa_-} \right) + O(\frac{1}{n}).
	$$
\end{coro}
We conclude from Corollary \ref{Asymp-Large-zn} and the previous Remark that there exists $N \in \N$ large enough such that for all $n > N$, we have
$$
  z_n = \tilde{z_n} \ \textrm{or} \ z_n = - \overline{\tilde{z_n}}.
$$

Assume from now on that $\frac{1}{\kappa_+} + \frac{1}{\kappa_-} < 0$. We conclude that the large zeros of $a_{L3}(\lambda, z)$ with positive imaginary part are located in the quadrant $I = \{z \in \C, \ \Re(z) > 0, \ \Im(z) > 0 \}$. By parity, the zeros with negative imaginary part are located in the quadrant $III = \{z \in \C, \ \Re(z) < 0, \ \Im(z) < 0 \}$.

Since the $\tilde{z_n}$'s with positive imaginary part also satisfy the asymptotics in Corollary \ref{Asymp-Large-zn}, we get the following dichotomy.
\begin{itemize}
\item If $\frac{1}{\tilde{\kappa}_+} + \frac{1}{\tilde{\kappa}_-} < 0$, then the zeros $\tilde{z_n}$'s with positive imaginary part are located in the quadrant $I$. Hence, using (\ref{rr2}), we have the following. There exists a $N \in \N$ such that
\begin{equation} \label{rr4}
  z_n = \tilde{z_n}, \quad \forall n > N.
\end{equation}
Using (\ref{rr0}), Lemma \ref{LargeAsymp-aL3} and Corollary \ref{Asymp-Large-zn}, we get in this case
$$
  \frac{1}{\tilde{\kappa}_-} - \frac{1}{\tilde{\kappa}_+} = \frac{1}{\kappa_-} + \frac{1}{\kappa_+}, \quad \frac{1}{\tilde{\kappa}_-} + \frac{1}{\tilde{\kappa}_+} = \frac{1}{\kappa_-} + \frac{1}{\kappa_+},
$$
which gives
$$
  \tilde{\kappa}_- = \kappa_-, \quad \tilde{\kappa}_+ = \kappa_+.
$$
Also, we use (\ref{fz}) and (\ref{rr4}) to obtain
\begin{equation} \label{rr5}
  f(z) = \frac{G}{\tilde{G}} \prod_{n = 1}^N \frac{\Big( 1 -\frac{z^2}{z_n^2} \Big)}{\Big( 1 -\frac{z^2}{\tilde{z_n}^2} \Big)} \tilde{f}(z).
\end{equation}
Finally, denote by $E_N = \left\{n \in \{1,\dots,N\}, \ z_n \ne \tilde{z_n}, \ \textrm{and} \ z_n \ne -\tilde{z_n} \right\}$. Then we obtain from (\ref{rr5})
\begin{equation} \label{rr6}
  f(z) = \frac{G}{\tilde{G}} \prod_{n \in E_N} \frac{\Big( 1 -\frac{z^2}{\overline{\tilde{z_n}}^2} \Big)}{\Big( 1 -\frac{z^2}{\tilde{z_n}^2} \Big)} \tilde{f}(z).
\end{equation}
Denoting by $g(z)$ the rational function $g(z) = \frac{G}{\tilde{G}} \prod_{n \in E_N} \frac{\Big( 1 -\frac{z^2}{\overline{\tilde{z_n}}^2} \Big)}{\Big( 1 -\frac{z^2}{\tilde{z_n}^2} \Big)}$, we finally get
$$
  a_{L3}(\lambda,z) = g(z) \tilde{a_{L3}}(\lambda,z),
$$
and thus
$$
  L(\lambda,z) = g(z) \tilde{L}(\lambda,z).
$$

\item If $\frac{1}{\tilde{\kappa}_+} + \frac{1}{\tilde{\kappa}_-} > 0$, then the zeros $\tilde{z_n}$'s with positive imaginary part are located in the quadrant $II = \{ z \in \C, \ \Re(z) < 0, \ \Im(z) > 0 \}$. Hence, using (\ref{rr2}), we have the following. There exists a $N \in \N$ such that
\begin{equation} \label{rr7}
  z_n = - \overline{\tilde{z_n}}, \quad \forall n > N.
\end{equation}
Using (\ref{rr0}), Lemma \ref{LargeAsymp-aL3} and Corollary \ref{Asymp-Large-zn}, we get in this case
$$
  \frac{1}{\tilde{\kappa}_-} - \frac{1}{\tilde{\kappa}_+} = \frac{1}{\kappa_-} + \frac{1}{\kappa_+}, \quad \frac{1}{\tilde{\kappa}_-} + \frac{1}{\tilde{\kappa}_+} = - \frac{1}{\kappa_-} - \frac{1}{\kappa_+},
$$
which gives
$$
  \tilde{\kappa}_- = -\kappa_+, \quad \tilde{\kappa}_+ = -\kappa_-.
$$
Using again (\ref{rr0}) and the asymptotics of $a_{L1}(\lambda,z)$ from Lemma \ref{LargeAsymp-aL3}, we get the necessary condition
$$
  \left( \frac{\tilde{a_-}}{a_+} \right)^{\frac{i \lambda}{\kappa_+}} = \left( \frac{\tilde{a_+}}{a_-} \right)^{\frac{i \lambda}{\kappa_-}}.
$$
Also, we use (\ref{fz}) and (\ref{rr7}) to obtain
\begin{equation} \label{rr8}
  f(z) = \frac{G}{\overline{\tilde{G}}} \prod_{n = 1}^N \frac{\Big( 1 -\frac{z^2}{z_n^2} \Big)}{\Big( 1 -\frac{z^2}{\overline{\tilde{z_n}}^2} \Big)} \overline{\tilde{f}(\bar{z})}.
\end{equation}
Finally, denote by $F_N = \{n \in \{1,\dots,N\}, \ z_n \ne \overline{\tilde{z_n}}, \ \textrm{and} \ z_n \ne -\overline{\tilde{z_n}} \}$. Then we obtain from (\ref{rr5})
\begin{equation} \label{rr9}
  f(z) = \frac{G}{\overline{\tilde{G}}} \prod_{n \in F_N} \frac{\Big( 1 -\frac{z^2}{\tilde{z_n}^2} \Big)}{\Big( 1 -\frac{z^2}{\overline{\tilde{z_n}}^2} \Big)} \overline{\tilde{f}(\bar{z})}.
\end{equation}
Denoting by $h(z)$ the rational function $h(z) = \frac{G}{\overline{\tilde{G}}} \displaystyle\prod_{n \in F_N} \frac{\Big( 1 -\frac{z^2}{\tilde{z_n}^2} \Big)}{\Big( 1 -\frac{z^2}{\overline{\tilde{z_n}}^2} \Big)}$, we finally get
$$
  a_{L3}(\lambda,z) = h(z) \overline{\tilde{a_{L3}}(\lambda,\bar{z})} = h(z) \tilde{a_{L2}}(\lambda,z),
$$
and thus
$$
  L(\lambda,z) = - h(z) \tilde{R}(\lambda,z).
$$
\end{itemize}

Both above cases prove the results stated in the Proposition.
\end{proof}

Even in the case when the reflection coefficients $L(\lambda,z)$ and $\tilde{L}(\lambda,z)$ (resp. $R(\lambda,z)$ and $\tilde{R}(\lambda,z)$) coincide up to a \emph{rational function} in the $z$ variable, we cannot conclude from this fact the result stated in \cite{DN3}, that is the uniqueness of the function $a(x)$ and $\tilde{a}(x)$ up to a translation. This question remains thus open and we conjecture that this is false. We refer to the last Section of \cite{DaKaNi} for more details about this point in a similar and more general model.

What we can prove however is the following weaker statement.
\begin{prop}\label{unicitetransmission}
Assume that
$$
	\frac{1}{\kappa_+} + \frac{1}{\kappa_-} < 0, \quad \frac{1}{\tilde{\kappa}_+} + \frac{1}{\tilde{\kappa}_-} < 0.
$$
	Let $\mathcal{L}$ be a subset of $\N$ such that $\displaystyle\sum_{n \in \mathcal{L}} \frac{1}{n} = \infty$. Assume that
$$
  T(\lambda,n) = \tilde{T}(\lambda,n), \quad \forall n \in \mathcal{L}.
$$
Assume also that
\begin{equation}\label{rr10}
  L(\lambda,k) = \tilde{L}(\lambda,k),
\end{equation}	
for a finite but large enough number of indices $k \in \N$. Then there exists a constant $\sigma \in \R$ such that
$$
  \tilde{a}(x) = a(x + \sigma).
$$
In consequence, the two SSAHM $(\Sigma,g)$ ans $(\tilde{\Sigma},\tilde{g})$ coincide up to isometries.
\end{prop}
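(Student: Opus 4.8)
The plan is to combine Proposition \ref{Transmission1} with a rigidity property of rational functions and with the global uniqueness result Theorem \ref{globaluniqueness}.

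First I would invoke Proposition \ref{Transmission1}. Since $\frac{1}{\kappa_+}+\frac{1}{\kappa_-}<0$ and $\frac{1}{\tilde{\kappa}_+}+\frac{1}{\tilde{\kappa}_-}<0$, its first alternative applies and yields, on the one hand, $T(\lambda,z)=\tilde{T}(\lambda,z)$ for all $z\in\C$, equivalently $a_{L1}(\lambda,z)=\tilde{a}_{L1}(\lambda,z)$, together with $\kappa_\pm=\tilde{\kappa}_\pm$; and on the other hand a rational function $g$ with $L(\lambda,z)=g(z)\,\tilde{L}(\lambda,z)$, equivalently $a_{L3}(\lambda,z)=g(z)\,\tilde{a}_{L3}(\lambda,z)$, for all $z\in\C$. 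A line-by-line reading of the proof of Proposition \ref{Transmission1} also shows that $g$ is built solely from the finite exceptional set $E_N$ occurring there, so that $\deg g\le d$ for some integer $d=d(a,\tilde{a})$ whose finiteness is precisely the content of Corollary \ref{Asymp-Large-zn}. (As a consistency check one may note, using Lemma \ref{AL-Analytic}(iii) for both metrics and $a_{L1}=\tilde{a}_{L1}$, that $g(z)\,\overline{g(\bar z)}\equiv 1$, so $g$ is a unimodular constant times a finite Blaschke-type product; this will not be used below.)

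Next I would use the finitely many reflection values. By the asymptotics in Theorem \ref{asymtoticsutiles} the modulus $|\tilde{L}(\lambda,z)|$ stays bounded away from $0$ as $z\to+\infty$ along the reals, so $\tilde{L}(\lambda,k)\neq 0$ for every sufficiently large integer $k$ and $g$ has no pole at such $k$. Taking the ``finite but large enough number of indices'' of (\ref{rr10}) among these large integers, with more than $d$ of them, the identity $L=g\tilde{L}$ forces $g(k)=1$ at more than $\deg g$ distinct points; clearing denominators, the numerator of $g(z)-1$ is a polynomial of degree at most $d$ with more than $d$ zeros, hence vanishes identically, so $g\equiv 1$. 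Therefore $a_{L3}(\lambda,z)=\tilde{a}_{L3}(\lambda,z)$ and $a_{L1}(\lambda,z)=\tilde{a}_{L1}(\lambda,z)$ for all $z$, and the parity and conjugation symmetries of Lemma \ref{AL-Analytic}(ii) give $a_{L2}=\tilde{a}_{L2}$ and $a_{L4}=\tilde{a}_{L4}$ as well, so $A_L(\lambda,z)=\tilde{A}_L(\lambda,z)$ on $\C$, and in particular $L(\lambda,n)=\tilde{L}(\lambda,n)$ for all $n\in\N^*$.

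Finally I would apply Theorem \ref{globaluniqueness} with $\mathcal{L}=\N^*$, which trivially satisfies the M\"untz condition: the equality of all the left reflection coefficients forces $a$ and $\tilde{a}$ to agree up to the discrete family of translations $x\mapsto x+\frac{k\pi}{\lambda}$, $k\in\Z$, so $\tilde{a}(x)=a(x+\sigma)$ with $\sigma=\frac{k\pi}{\lambda}\in\R$, and since such a translation is an isometry of the SSAHM the two manifolds coincide up to isometry. The main obstacle is the first step: one must control the degree of $g$ and be certain it is finite, so that a fixed (finite) amount of reflection data can determine it. This rests on the sharp large-$|z|$ asymptotics of Lemma \ref{LargeAsymp-aL3} and Corollary \ref{Asymp-Large-zn}, i.e.\ on the fact that, under the sign hypotheses on the surface gravities, the zeros of $a_{L3}(\lambda,\cdot)$ and of $\tilde{a}_{L3}(\lambda,\cdot)$ in the upper half-plane coincide for all but finitely many indices; the precise threshold ``large enough'' then depends, unavoidably, on the a priori unknown pair $(a,\tilde{a})$.
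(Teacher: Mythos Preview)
Your proposal is correct and follows essentially the same route as the paper: invoke Proposition \ref{Transmission1} to get $L(\lambda,z)=g(z)\tilde L(\lambda,z)$ with $g$ rational, use the finitely many equalities (\ref{rr10}) to force $g\equiv 1$, and then conclude via Theorem \ref{globaluniqueness} (i.e.\ Theorem 1.1 of \cite{DN3}). Your write-up is in fact more explicit than the paper's, which simply asserts ``from (\ref{rr10}) we infer $g\equiv 1$'' without spelling out the degree bound on $g$ or the non-vanishing of $\tilde L(\lambda,k)$; your remarks on the finiteness of $\deg g$ via Corollary \ref{Asymp-Large-zn} and on the unavoidable dependence of the threshold on $(a,\tilde a)$ are exactly the points the paper leaves implicit.
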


\begin{proof}
From Proposition \ref{Transmission1}, we know that there exists a rational function $g(z)$ such that
$$
  L(\lambda,z) = g(z) \tilde{L}(\lambda,z).
$$
From (\ref{rr10}), we infer that $g(z) = 1$ for all $z\in \C$ and thus
$$
 L(\lambda,z) = \tilde{L}(\lambda,z).
$$
Hence the Proposition is proved using Theorem 1.1. in \cite{DN3}.
\end{proof}


\end{document}